\newtheorem{proposition}{Proposition}
\newtheorem{corollary}{Corollary}
\newtheorem{lemma}{Lemma}
\newtheorem{definition}{Definition}
\newtheorem{step}{Step}
\newtheorem{Alemma}{Lemma}[section] %[subsection]
\newtheorem{Aproposition}{Proposition}[section] %[subsection]
\newtheorem{Acorollary}{Corollary}[section] %[subsection]
\DeclareMathOperator*{\argmax}{arg\,max}
\DeclareMathOperator*{\argmin}{arg\,min}
\definecolor{coolblack}{rgb}{0.0, 0.18, 0.39}
\def\@fnsymbol#1{\ensuremath{\ifcase#1\or \dagger\or \ddagger\or
		\mathsection\or \mathparagraph\or \|\or **\or \dagger\dagger
		\or \ddagger\ddagger \else\@ctrerr\fi}}
\begin{document}
	
	\pagestyle{plain}
	
	\author{Federico Vaccari\thanks{Department of Economics and Management, University of Trento, I-38122 Trento, Italy. E-mail: \href{mailto:vaccari.econ@gmail.com}{\sf vaccari.econ@gmail.com}. Santiago Oliveros provided invaluable guidance, support, and help. For helpful comments and suggestions, I thank Yair Antler, Leonardo Boncinelli, Luca Ferrari, Christian Ghiglino, Johannes Hörner,  Aniol Llorente-Saguer, Simon Lodato, Marina G. Petrova and participants at the Conference on Mechanism and Institution Design 2020, Formal Theory Virtual Workshop, Royal Economic Society Annual Conference 2021, GRASS XIV Workshop, 61st Annual Conference of the Italian Economic Association, and the 12th World Congress of the Econometric Society. All errors are mine. This project has received funding from the European Union's Horizon 2020 Research and Innovation Programme (Marie Sk\l odowska-Curie grant no.~843315-PEMB).}}
	
	\title{Influential News and Policy-making} 
	
	\date{} %\\{\small(First version: February 2018)}}
	\maketitle

	\begin{abstract}
		\noindent It is believed that interventions that change the media’s costs of misreporting can increase the information provided by media outlets. This paper analyzes the validity of this claim and the welfare implications of those types of interventions that affect misreporting costs. I study a model of communication between an uninformed voter and a media outlet that knows the quality of two competing candidates. The alternatives available to the voter are endogenously championed by the two candidates. I show that higher costs may lead to more misreporting and persuasion, whereas low costs result in full revelation; interventions that increase misreporting costs never harm the voter, but those that do so slightly may be wasteful of public resources. I conclude that intuitions derived from the interaction between the media and voters, without incorporating the candidates’ strategic responses to the media environment, do not capture properly the effects of these types of interventions.
	\end{abstract}
	
	\noindent {\bf JEL codes:} D72, D82, D83, L51.
	
	\noindent {\bf Keywords:} fake news, misreporting, media, policy-making, election, regulation.
	
	\thispagestyle{empty}
	
	\clearpage
	
	\tableofcontents
	
	\pagebreak

\section{Introduction}

One of the most common criticisms leveled against the media is that they strategically distort news to pursue their private interests and affect political outcomes.\footnote{This concern is substantiated by empirical evidence that media bias has an impact on voting behavior (see, e.g., \citealp{dellavigna2007}) and by the observation that mass media are voters' primary source of policy-relevant information (see \citealp{pew2016}).} To counter the threat posed by the spread of misinformation, most countries enforce laws that punish the practice of misreporting information. Consider for example the United Kingdom's Representation of the People Act 1983 (Chapter 2, Part II, Section 106):
\begin{displayquote}
	\small
	A person who, or any director of any body [...] which --
	(a) before or during an election,
	(b) for the purpose of affecting the return of any candidate at the election,
	makes or publishes any false statement of fact in relation to the candidate's personal character or conduct shall be guilty of an illegal practice.
\end{displayquote}
More recently, several governments have passed ``fake news laws'' to address the growing concern about distortions of the political process caused by misinformation. Most of these efforts revolve around the idea of affecting media outlets' costs of misreporting information through, e.g., fines, jail terms, and awareness campaigns \citep{funke2018guide}.\footnote{Misreporting costs can be direct, such as the time and money required to misrepresent information, or indirect and probabilistic, such as the loss of reputation and profits incurred by a media outlet if caught in a lie. For more examples about ``fake news laws,'' see \cite{funke2018guide}.}

This class of interventions is relevant not only because of its recent popularity, but also because it seeks to steer the conduct of media outlets without interfering with the markets' concentration levels. In ``news markets'' a single outlet with private possession of some information is in fact a monopolist over that particular piece of news. This is often the case with scoops, scandals, and ``October surprises.'' Since breaking news spreads fast, even small outlets can reach a large audience when endowed with a scoop that can swing the outcome of an election.\footnote{In the ``Killian document controversy,'' online blogs' revelation that CBS aired unauthenticated and forged documents was quickly rebroadcast by a wide spectrum of media. See \cite{gentzkow2008} for this and other examples.}  In these circumstances, interventions that affect the costs of misreporting information might still discipline the behavior of those media outlets with exclusive possession of political news. Despite its relevance, the regulation of misreporting costs is currently highly understudied, and to date there is no formal model exploring its consequences.\footnote{This is because most related work assumes that misreporting is either costless (e.g., in cheap talk models) or not possible (e.g., in disclosure models). See Section~\ref{sec:lit} for a review of the relevant literature.}

In this paper, I study the welfare effects of regulatory interventions that impose costs on media outlets for misreporting information. The key idea is that the implications of media bias are not confined to distortions of voters' choice at the ballot box, but spread and propagate back to the process of policy-making. Ahead of elections, competing candidates face a choice between gathering popular consensus with ``populist'' policies that benefit voters and seeking the endorsement of an influential media with ``biased'' policies that please media outlets. Since media bias skews electoral competition and produces distortions in policy outcomes, the informational and political effects of regulation need to be jointly determined.

I consider a model of strategic communication between a media outlet and a representative voter, where the policy alternatives available to the voter are endogenously championed by two competing candidates running for office. In the policy-making stage, the two candidates -- an incumbent and a challenger -- sequentially and publicly make a binding commitment to policy proposals. Afterwards, in the communication subgame, the media outlet delivers a public news report about the candidates' relative quality, where ``quality" is defined as a candidate's fit with the state of the world, competence, record, etc. Given the proposals and the outlet's report, the voter casts a ballot for one of the two candidates. At the end, the policy proposed by the elected candidate gets implemented. 

In contrast to canonical models of strategic communication, the media outlet bears a cost of misreporting its private information about candidates' quality  that is increasing in the magnitude of misrepresentation. The voter and the outlet have aligned preferences over the relative quality of candidates (hereafter just ``quality''), but disagree on which policy is the best. Therefore, when candidates advance different proposals, there are contingencies in which there is a conflict of interest between the outlet and the voter. An agency problem emerges, as the outlet can strategically misreport information to induce the election of its favorite candidate and seize political gains at the expense of the voter.

The main results provide a number of policy implications by showing how the regulation of misreporting costs affects the voter's welfare. I find that an increase in the costs of misreporting information never harms the voter, but that a small incremental increase might have no effect at all on the voter's welfare. This result implies that, when carrying out interventions is costly, lenient regulatory efforts can be wasteful of public resources. I obtain conditions under which the voter is better off without a media outlet or alternatively with an ``electoral silence'' period that forbids the delivery of policy-relevant news ahead of the election.\footnote{Some countries operate a pre-election silence period where even polling and campaigning are not allowed in the run-up to an election, while in other countries such bans are unconstitutional.} I also show that there is no monotonic relationship between the probability that persuasion takes place and the voter's welfare, and between the probability that persuasion takes place and the costs of misreporting information. Interventions that increase such costs might induce more misreporting and more persuasion, and yet improve the voter's welfare because of the availability of better policies. Therefore, the growing concern that ``proposed anti-fake news laws [...] aggravate the root causes fuelling the fake news phenomenon" \citep{alemanno_2018} is perhaps exaggerated. This also implies that the empirical task of inferring the efficiency of such interventions from the media's reporting behavior is challenging, if not impossible.

A natural question is whether politicians have the right incentives to propose interventions that benefit the voter.\footnote{Most fake news laws are introduced by members of incumbent governments, ministers, or government factions \citep{lawcongress2019,funke2018guide}.} To answer this question, I extend the main model by endogenizing the process of regulating misreporting costs, which takes place ahead of the policy-making stage. I show that the electoral incentives of politicians together with the sequential nature of policy proposals generate a friction in the regulatory process that results in the selection of interventions that depress the voter's welfare. The worst-case scenario is obtained when the incumbent government is in charge of regulation: in this case, the incumbent sets relatively low misreporting costs that trigger the convergence of proposals to the media outlet's favorite policy. Even though misreporting behavior is fully eradicated, the voter's welfare is at its minimum because of the induced policy distortion. From the voter's perspective, the resulting political outcome is abysmal, and equivalent to that of a dystopic scenario where the media outlet has the voting rights to directly decide upon which policy to implement and which candidate to elect. The situation is better, but still far from ideal, when the challenger is in charge of regulation.

The intuition behind the above results is as follows. As the costs of misreporting information decrease, both candidates offer more ``biased'' policies in the attempt to obtain the endorsement of an increasingly persuasive media outlet. The candidates' proposals become progressively closer to each other until, for sufficiently low misreporting costs, they fully converge on the outlet's preferred policy. More similar policies imply a smaller conflict of interest between the voter and the media outlet, and thus persuasion can occur in a smaller number of contingencies as costs decrease. Eventually, the convergence of proposals eradicates any conflict of interest as in these cases the only element that can differentiate candidates is their relative quality, over which preferences are aligned. Somewhat paradoxically, under low misreporting costs the media outlet has high persuasive potential and yet it fully reveals its private information about quality. However, the voter's welfare is at its minimum because perfect knowledge about quality---and thus perfect selection of candidates---comes at the cost of inducing a large distortion in terms of policies, which are furthest from the voter's ideal bliss policy. If candidates' quality is sufficiently less important than their policies, then the voter might be better off without a media outlet at all.

Since policy convergence occurs for a set of sufficiently low but positive misreporting costs, lenient interventions might be ineffective. On the other hand, a substantial increase in the misreporting costs might trigger policy divergence and thus increase the contingencies in which there is a conflict of interest, making room for more misreporting and persuasion. In these cases, the voter's welfare increases because the loss of information about quality and the increased electoral mistakes are more than compensated by the availability of better policies. When misreporting costs are sufficiently high, both candidates offer more ``populist'' policies to please the voter rather than the weakened media outlet. As costs increase, the candidates' proposals tend to converge back toward the voter's preferred policy, mitigating the conflict of interest and the occurrence of misreporting and persuasion. The voter's welfare is thus maximized by arbitrarily high misreporting costs.

To see how electoral incentives skew the process of regulation, recall that policies are proposed sequentially. The presence of an influential media outlet transforms the policy-making stage in a sort of sequential rock-paper-scissors game where a moderate policy beats a populist one, a biased policy beats a moderate one, and a populist policy beats a biased one. Given the incumbent's proposal, the challenger has the second-mover advantage of choosing the most profitable strategy between seeking the voter's approval or the media outlet's support. When in charge of regulation, the incumbent can nullify this ``incumbency disadvantage effect'' by setting low misreporting costs to force policy convergence.\footnote{There is empirical evidence of both media's anti-incumbent behavior \citep{puglisi2011being} and of an ``incumbency disadvantage'' effect due to media coverage \citep{green2017}. However, evidence is mixed as other work finds that media has either no clear effect \citep{gentzkow2011effect} or a positive effect on the reelection chances of incumbent politicians \citep{drago2014meet}.} Therefore, electoral incentives push politicians to use regulation for purely instrumental reasons, decreasing the voter's welfare as a result.

The remainder of this article is organized as follows. In Section~\ref{sec:lit}, I discuss the related literature. Section~\ref{sec:model} introduces the model, which I solve in Section~\ref{sec:eqm}. In Section~\ref{sec:welfare}, I analyze the voter's welfare and the process of regulation. In Section~\ref{sec:extensions}, I discuss the model and its extensions, and Section~\ref{sec:conclusion} concludes. Formal proofs are relegated to the Appendix.

\section{Related Literature}\label{sec:lit}
This paper is related to the literature studying the political economy of media bias.\footnote{For comprehensive surveys on the topic, see \cite{prat2013} and \cite{gentzkow2015media}.} Papers belonging to this literature can be broadly split into two strands: models of demand-side and models of supply-side media bias. The first strand focuses on the case where news organizations are profit-maximizing and/or their preferences over political outcomes are second-order. Bias can emerge, for example, when media firms and journalists want to develop a reputation for accurate reporting \citep{gentzkow2006,shapiro2016}, consumers favor confirmatory news \citep{mullainathan2005,bernhardt2008}, or voters demand biased information \citep{oliveros2015demand,calvert1985,suen2004}. In the present paper I take a supply-side approach by considering a media outlet that has preferences over political outcomes. In this second strand, bias originates from the intrinsic preferences and motivations of agents who work for news organizations, like editors and owners. For example, media bias occurs when journalists have ideological leanings \citep{baron2006}, media firms suppress unwelcome news \citep{besley2006,anderson2012}, or politicians design public signals \citep{alonso2016}.   

The above-mentioned papers abstract from the process of policy-making and political competition. By contrast, I explicitly incorporate an electoral stage where candidates compete via binding commitments to policy proposals. For this reason, the present paper is more closely related to the stream of work studying the effects of political endorsements on policy outcomes. Within this part of the literature but in contrast to the present paper, \cite{grossman1999competing}, \cite{gul2012}, and \cite{chakraborty2019expert} consider voters that are uncertain about their own preferences; \cite{carrillo2008information} and \cite{boleslavsky2015information} model the source of information about candidates as exogenous; \cite{andina2006political} models voting behavior as exogenous; \cite{miura2019manipulated} considers a media outlet that delivers fully certifiable information about candidates' policies; \cite{chan2008} and \cite{stromberg2004} study a demand-side framework; \cite{ashworth2010} and \cite{warren2012} use a political agency framework to study how a media outlet affects the incumbent's incentives to pander.

The most closely related paper is \cite{chakraborty2016}. They use a Downsian framework to study the welfare effects of a policy-motivated media outlet that can influence voting behavior via cheap talk endorsements. The present paper is different in three important aspects: first, I incorporate costs for misreporting information that are proportional to the magnitude of misrepresentation. Under this approach, a news report is more than just an endorsement as it constitutes a costly signal of the state (on this point, see also the next paragraph). Second, I study a sequential rather than a simultaneous model of electoral competition. As a result, I obtain that the policy of the incumbent is subject to a different distortion than that of the challenger. I show that this difference plays an important role when endogenizing the process of regulation. Finally, the welfare analysis in \cite{chakraborty2016} focuses on the ideological conflict between the media outlet and the voter, while I focus on the intensity of misreporting costs and its regulation.

The key feature of the present paper is how communication is modeled. Papers in the previously mentioned literature consider media outlets that either can report anything without bearing any direct consequence on their payoffs (e.g., \citealp{gul2012,chakraborty2016}) or cannot misreport information at all (e.g., \citealp{duggan2011,besley2006}). By contrast, I consider a media outlet that can misreport information but at a cost. In addition to incorporating a realistic feature, this modeling strategy allows me to perform comparative statics on misreporting costs that are currently unexplored, yet crucial for understanding the regulation of news markets.

Therefore, the present paper also touches upon the literature on strategic communication with lying costs \citep{kartik2007,kartik2009,ottaviani2006,chen2011}. With respect to this line of work, I consider a setting where the voter (i.e., the receiver) has a binary action space and the outlet (i.e., the sender) has a continuous message space. Moreover, the alternatives available to the voter are endogenously selected through a process of electoral competition, and not exogenously given. This framework gives rise to a number of important qualitative differences in the amount of information transmitted and the language used in equilibrium: I obtain equilibria where persuasion naturally occurs even within a large state space; the sender might invest costly resources to misreport even in the (interim) absence of a conflict of interest with the receiver; full information revelation occurs with relatively low misreporting costs.\footnote{With a coarse action space, the outlet can achieve persuasion by pooling information to make the voter indifferent between two actions. Similarly, \cite{chen2011} obtains ``message clustering'' in a setting with a continuous action space and a coarse message space. \cite{kartik2009} finds partial separation in a bounded type space setting. \cite{kartik2007} and \cite{ottaviani2006} show that full separation is achieved when such a bound is arbitrarily large.} These features are key and instrumental for the main results of the present paper.

\section{The Model}\label{sec:model}

There are four players: a representative voter $v$, a media outlet $m$, and two candidates: an incumbent $i$, and a challenger $c$. The voter has to cast a ballot $b\in\{i,c\}$ for one of the two candidates. At the outset, in the ``policy-making stage,'' each candidate makes a binding and public commitment to a policy proposal. I assume that proposals are sequential: the incumbent first commits to a policy $q_i\in\mathbb{R}$; after observing $q_i$, the challenger commits\footnote{The assumption of sequentiality in the policy-making process reflects that candidates announce their policies at distinct points in time, and that the incumbent's policy is typically formed or known before the challenger's. See, e.g., \cite{wiseman2006}.} to a policy $q_c\in\mathbb{R}$. Policy proposals $q=(q_i,q_c)$ are then publicly observed by all players. If the voter casts a ballot for candidate $j\in\{i,c\}$, then policy $q_j$ is eventually implemented. 

The ``communication subgame'' takes place after the candidates' commitments but before the election: the media outlet privately observes the realization of a state $\theta\in\Theta$ and then delivers a news report $r\in\mathbb{R}$. Reports are literal statements about the state. Before casting a ballot, the voter observes the report $r$ but not the state $\theta$. Figure~\ref{fig:timeline} illustrates the timing structure of the model.

\begin{figure}
	\centering
	\includegraphics[width=\linewidth]{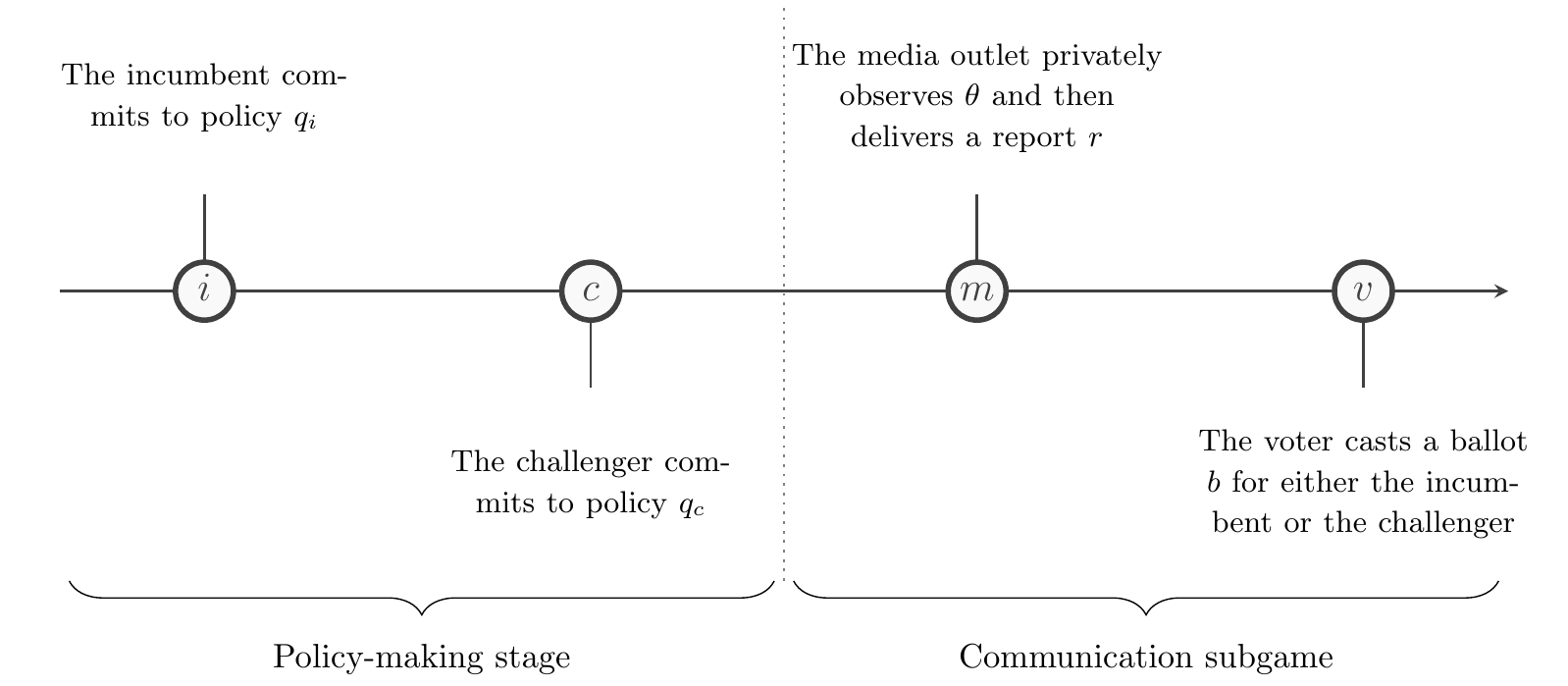}
	\caption{Timeline of the model.}
	\label{fig:timeline}
\end{figure}

{\bf The State.} The state $\theta$ represents the relative quality of the incumbent with respect to the challenger, and I shall hereafter refer to $\theta$ simply as ``quality.'' I assume that $\theta$ is randomly drawn from a uniform density function $f$ over $\Theta=[-\phi,\phi]$, where $f$ is common knowledge to all the players. Only the media outlet privately observes the realized $\theta$. The voter and the media outlet have identical preferences over quality: given any proposals $q=(q_i,q_c)$, the higher the quality is, the higher is the gain from the incumbent winning the election rather than the challenger. Thus, quality is an element of vertical differentiation similar in kind to what is known in political theory as ``valence.''\footnote{Quality can capture traits like a candidate's fit with the state of the world, competence, reputation, evidence of good and bad conduct, etc. For a discussion of the closely related notion of valence or character, see, e.g., \cite{stokes1963}, \cite{kartik2007signaling}, and \cite{chakraborty2016}.}

{\bf Payoffs.} Candidates are purely office-seeking, and care only about their own electoral victory. I assume that candidates obtain a utility of 1 if they win and 0 otherwise. The utility of candidate $j\in\{i,c\}$ is thus\footnote{$\mathds{1}\{\cdot\}$ is the indicator function, where $\mathds{1}\{A\}=1$ if $A$ is true, and $0$ otherwise.} $u_j(b)=\mathds{1}{\{b=j\}}$. 

The voter and the media outlet have a preferred ``bliss'' policy of $\varphi_v\in\mathbb{R}$ and $\varphi_m\in\mathbb{R}$ respectively.\footnote{The model can allow for the presence of a finite committee or a continuum of voters where $v$ is the median voter with bliss policy $\varphi_v$. Under a majority voting rule with two alternatives, the assumption of sincere voting is without loss of generality, as in such cases truth-telling is a dominant strategy.} I assume without loss of generality that $\varphi_m<\varphi_v$, and denote with $\gamma>0$ a parameter weighting the relative importance of policies to quality. The voter's utility $u_v(b,\theta,q)$ from selecting candidate $b\in\{i,c\}$ when quality is $\theta$ and proposals are $q=(q_i,q_c)$ is an additively separable combination of standard single-peaked policy preferences and quality, i.e.,
\[
u_v(b,\theta,q)=-\gamma (\varphi_v-q_b)^2 + \mathds{1}{\{b=i\}} \theta.
\]
Therefore, given proposals $q$, the voter prefers to vote for the incumbent only if quality is high enough, i.e., $\theta>\tau_v(q)=\gamma(2\varphi_v -q_c-q_i)(q_c-q_i)$. The threshold $\tau_v(q)$ is obtained from solving $u_v(i,\theta,q)=u_v(c,\theta,q)$ for $\theta$. 

I similarly define $\tau_m(q)=\gamma(2\varphi_m -q_c-q_i)(q_c-q_i)$ and refer to $\tau_j(q)$ as player $j$'s threshold, for $j\in\{v,m\}$. The media outlet's endorsed candidate is\footnote{More precisely, $\hat m(\theta,q)$ is the outlet's preferred candidate given policies $q$ and state $\theta$. Since in equilibrium the outlet will endorse her preferred candidate, I refer to $\hat m(\cdot)$ as the ``endorsed candidate.''}
\[
	\hat m(\theta,q)=
		\begin{cases}
			i  & \quad \text{if } \theta> \tau_m(q) \\
			c  & \quad  \text{ otherwise.}
		\end{cases}
\] 
I denote by $k>0$ a scalar parameter measuring the intensity of misreporting costs, and by $\xi>0$ the outlet's gains from endorsing the candidate that ends up winning the election.\footnote{Alternatively, $\xi$ may indicate the avoided losses that the outlet would expect to incur when opposing the winning candidate. For example, after the publication of the first stories about the Watergate scandal, President Nixon allegedly said ``The \emph{Post} is	going to have damnable, damnable problems out of this one. They have a television station. \ldots and they're going to have to get it renewed.'' See references in \citet[p.~136]{gentzkow2008}.} The media outlet gets a utility of $u_m(r,b,\theta,q)$ when delivering report $r$ in state $\theta$ with proposals $q$ and winning candidate $b$, where
\[
u_m(r,b,\theta,q)=\mathds{1}{\left\{b=\hat m(\theta,q)\right\}} \xi -k(r-\theta)^2.
\]
Unlike the voter, the outlet's utility depends on whether the endorsed candidate $\hat m(\theta,q)$ is elected, but not on the implemented policy $q_b$. This assumption allows me to model a media outlet whose endorsements depend on the candidates' policies even when the policies do not directly affect the outlet's payoff. This is often the case, for example, when editors and journalists have political leanings on issues such as abortion or gay marriage that have no direct impact on the media organization itself. The score $\xi$ represents the outlet's benefits from endorsing the victorious rather than the defeated candidate. In Section~\ref{sec:extensions}, I show that most qualitative results hold when considering political gains that depend on policies $q$ and bliss $\varphi_m$ in a similar way as the voter's.

In addition, the media outlet incurs a cost of $k(r-\theta)^2$ for delivering a news report $r$ when the state is $\theta$. Any report $r\in\mathbb{R}$ has the literal or exogenous meaning ``quality is equal to $r$.'' Truthful reporting occurs when $r=\theta$, and it is assumed to be costless. By contrast, misreporting information is costly, and the associated costs are increasing with the difference between the stated and the true realization of quality. The score $k$ encapsulates all those elements determining the magnitude of misreporting costs, such as reputation concerns, resources required for misrepresenting information and falsifying numbers, and the stringency of fake news laws. With some abuse of language, I will hereafter interchangeably refer to $k$ as ``misreporting costs'' or ``costs' intensity.''\footnote{I use the quadratic loss $k(r-\theta)^2$ to obtain a closed-form solution and to simplify exposition. To find the equilibria of the communication subgame (Proposition~\ref{prop:monopoly} in Appendix~\ref{app:communication}), I use a more general cost function $kC(r,\theta)$.}

{\bf Influential News.} The media outlet is influential only if the voter's sequentially rational decision is not constant along the equilibrium path. To ensure that the outlet is influential, I assume that the state space is relatively large, i.e., $\phi\geq 3\gamma(\varphi_v-\varphi_m)^2$. Intuitively, a larger state space implies more uncertainty over quality and thus a more prominent role for an informed outlet. This assumption is sufficient to guarantee that in equilibrium the outlet is influential and that candidates cannot gain from proposing policies that make the outlet superfluous.\footnote{See Corollary~\ref{cor:support} in Appendix~\ref{app:policy}.} 

{\bf Strategies and Equilibrium.} A strategy for the incumbent is a binding commitment to a policy proposal $q_i\in\mathbb{R}$; a strategy for the challenger is a function $q_c:\mathbb{R}\to \mathbb{R}$ that assigns a policy $q_c\in \mathbb{R}$ to each incumbent's proposal $q_i$. I assume that candidates cannot condition their proposals on the state or on the outlet's reports.\footnote{This assumption is in line with the idea that all uncertainty about quality is publicly resolved only after policy implementation, and policies cannot be easily changed in the short run. Moreover, candidates cannot credibly and profitably condition their proposals on the media outlet's reports.} A reporting strategy for the media outlet is a function $\rho:\Theta\times \mathbb{R}^2\to\mathbb{R}$ that associates a news report $r\in\mathbb{R}$ to every tuple of proposals $q\in\mathbb{R}^2$ and quality $\theta\in\Theta$. I say that a report $r$ is off-path if, given strategy $\rho(\cdot)$, $r$ will not be observed by the voter. Otherwise, I say that $r$ is on-path. A belief function for the voter is a mapping $p:\mathbb{R}\times\mathbb{R}^2\to\Delta(\Theta)$ that, given any news report $r\in\mathbb{R}$ and policies $q\in\mathbb{R}^2$, generates posterior beliefs $p(\theta|r,q)$, where $p(\cdot)$ is a probability density function. Given a report $r$ and posterior beliefs $p(\theta|r,q)$, the voter casts a ballot for a candidate in the sequentially rational set $\beta(r,q)=\argmax_{b\in\{i,c\}}\mathbb{E}_p[u_v(b,\theta,q)\,|\, r]$.

The solution concept is the perfect Bayesian equilibrium (PBE), refined by the Intuitive Criterion \citep{cho1987}.\footnote{For a textbook definition of PBE and Intuitive Criterion, see \cite{fudenbergtirole1991}. A formal definition of Intuitive Criterion is also provided in Appendix~\ref{app:communication}} For most of the analysis, I focus on the sender-preferred equilibrium defined as follows: when the voter is indifferent between the two candidates at a given belief, she selects the one endorsed by the media outlet; when a candidate is indifferent between some proposals, she advances the policy closest to the outlet's bliss $\varphi_m$. Given the potential conflict of interest between the voter and the media outlet, the sender-preferred equilibrium is also the least preferred by the voter. The focus on this type of equilibrium provides a useful benchmark consisting of the voter's worst-case scenario, which is key for the robust-control approach to policy analysis \citep{hansen2008robustness}. Moreover, it selects the unique perfect sequential equilibrium \citep{grossman1986perfect} of the communication subgame. I hereafter refer to a sender-preferred PBE robust to the Intuitive Criterion simply as ``equilibrium.'' 

{\bf Discussion and Extensions.} Most qualitative findings are robust to the model's specifications.  Section~\ref{sec:extensions} looks at several variations of
the baseline model as robustness checks. These include the cases of a purely ideological media outlet and simultaneous policy-making, among others. There, I also discuss the suitability of the equilibrium concept, and various assumptions on the information and cost structure.

\section{Equilibrium}\label{sec:eqm}
I present the main equilibrium analysis in two parts: in Section~\ref{sec:commsub}, I begin by solving for the equilibrium of the final communication subgame where, given any fixed pair of policies, the media outlet delivers to the voter a news report about the candidates' quality. In Section~\ref{sec:policymaking} I proceed by studying the equilibrium of the policy-making stage, where candidates sequentially commit to policy proposals. Formal proofs are relegated to Appendix~\ref{app:communication} and Appendix~\ref{app:policystage}.

\subsection{The Communication Subgame}\label{sec:commsub}
The communication subgame takes place after both candidates commit to policy proposals. The media outlet privately observes the candidates' relative quality $\theta$ and then delivers a news report $r$ consisting of a literal statement about $\theta$. The voter, after observing the outlet's report but not the candidates' quality, casts a ballot for either the incumbent or the challenger. For convenience, I denote the communication subgame by $\hat\Gamma$.

Given proposals $q$, the media outlet has a conflict of interest with the voter when quality is between the thresholds $\tau_j(q)$, $j\in\{m,v\}$. Consider for example the case where policies $q$ are such that\footnote{We have that $\tau_m(q)>\tau_v(q)$ when $q_c<q_i$, $\tau_m(q)<\tau_v(q)$ when $q_c>q_i$, and $\tau_m(q)=\tau_v(q)$ when $q_c=q_i$. In the latter case, there is no conflict of interest between the media outlet and the voter.} $\tau_m(q)>\tau_v(q)$. When $\theta>\tau_m(q)$ (resp. $\theta<\tau_v(q)$), the voter and the outlet both agree that the best candidate is the incumbent (resp. the challenger). By contrast, when $\theta\in(\tau_v(q),\tau_m(q))$ the voter prefers the incumbent while the outlet endorses the challenger. Since the voter cannot observe the realized quality, she is uncertain about whether a conflict of interest exists or not. Figure~\ref{fig:conflict} illustrates the preferred candidate of the media outlet and the voter across different states and for the case $\tau_m(q)>\tau_v(q)$.

\begin{figure}
	\centering
	\includegraphics[width=\linewidth]{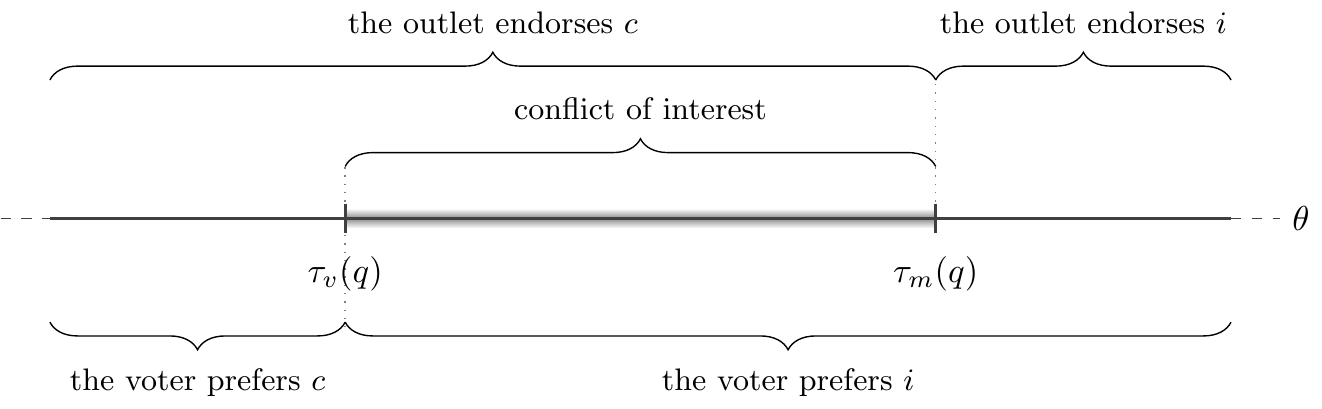}
	\caption{The media outlet and the voter's favorite candidate for different levels of quality. The states in which there is a conflict of interest are highlighted in gray.}
	\label{fig:conflict}
\end{figure}

The media outlet can misreport its private information about quality so as to induce the election of its endorsed candidate $\hat m(q,\theta)$ and seize the gains $\xi$. Denote by $\hat\Theta(q)$ the set of states that lie strictly between the thresholds $\tau_j(q)$, $j\in\{m,v\}$. If the media outlet delivers a report that induces the election of its endorsed candidate when there is a conflict of interest, then I say that persuasion has occurred.
\begin{definition}[Persuasion]
	The media outlet persuades the voter if $\beta(\rho(\theta,q),q)=\hat m(\theta,q)$ for some $\theta\in\hat\Theta(q)=\left(\min\left\{\tau_v(q),\tau_m(q)\right\}, \max\left\{\tau_v(q),\tau_m(q)\right\}\right)$.
\end{definition}

Since misreporting is costly, there is a limit to the reports that the outlet can profitably deliver in a certain state, and thus different reports each carry a different informational content that is not arbitrarily determined by the voter's strategic inference.\footnote{As it is the case, for example, in cheap talk games.} Consider a news report, $r>\tau_m(q)$, indicating that quality is sufficiently high for the outlet to endorse the incumbent. Suppose now that $r$ leads to the electoral victory of the outlet's endorsed candidate, $\beta(r,q)=i$. I define the ``lowest misreporting type'' $l(r)$ as the highest state\footnote{In the jargon commonly used in signaling games, the state is also referred to as the ``type'' of sender.} $\theta$ in which the outlet does not find it strictly profitable to deliver the news report $r$. More formally, for some report $r>\tau_m(q)$ such that $\beta(r,q)=i$,
\[
l(r)=\max\left\{ r-\sqrt{\frac{\xi}{k}},\tau_m(q) \right\}.
\]
In equilibrium, the voter understands that such a report $r$ could not be profitably delivered if quality is lower than $l(r)$, and should accordingly place probability zero on every $\theta<l(r)$. I similarly define  the ``highest misreporting type'' $h(r)$ as the lowest state in which the outlet does not find it strictly profitable to deliver a news report $r<\tau_m(q)$ such that $\beta(r,q)=c$. Formally,
\[
h(r)=\min\left\{ r+\sqrt{\frac{\xi}{k}},\tau_m(q) \right\}.
\]

I can now present the main result of this section: in the equilibrium of the communication subgame $\hat\Gamma$, the media outlet ``jams'' information by delivering the same pooling report $r^*(q)$ whenever quality takes values around the voter's threshold $\tau_v(q)$. Otherwise, when quality is relatively far from $\tau_v(q)$, the outlet always reports truthfully. When observing the pooling report $r^*(q)$, the voter 's expectation about quality is exactly $\tau_v(q)$, and therefore she is indifferent between the two candidates.\footnote{In the sender-preferred equilibrium, the voter selects the candidate endorsed by the media outlet when indifferent. Therefore, the voter never mixes.} This result helps us to find the candidates' equilibrium probability of electoral victory given any pair of proposals $q$.

\begin{lemma}\label{lemma:senderpm}
	The equilibrium of the communication subgame $\hat\Gamma$ is a pair $(\rho(\theta,q), p(\theta|r,q))$ such that, given policy proposals\footnote{Up to changes of measure zero in $\rho(\theta,q)$ due to the media outlet being indifferent between reporting $l(r^*(q))$ and $r^*(q)$ (resp. $h(r^*(q))$ and $r^*(q)$) when the state is $\theta=l(r^*(q))>\tau_m(q)$ and $\tau_m(q)<\tau_v(q)$ (resp. $\theta=h(r^*(q))<\tau_m(q)$ and $\tau_m(q)>\tau_v(q)$).} $q$,
	\begin{enumerate}
		\item[i)] If $\tau_v(q)<\tau_m(q)$, then
		\[
		\rho(\theta,q) =
		\begin{cases}
		r^*(q)=\max\left\{\tau_v(q)-\tfrac{1}{2}\sqrt{\frac{\xi}{k}},2\tau_v(q)-\tau_m(q)\right\} & \quad \text{if } \; \theta \in \left(r^*(q),h(r^*(q))\right)\\
		\theta & \quad \text{otherwise. }
		\end{cases}
		\]
		\item[ii)] If $\tau_v(q)>\tau_m(q)$, then
		\[
		\rho(\theta,q) =
		\begin{cases}
		r^*(q)=\min\left\{\tau_v(q)+\tfrac{1}{2}\sqrt{\frac{\xi}{k}},2\tau_v(q)-\tau_m(q)\right\} & \quad \text{if } \; \theta \in \left(l\left(r^*(q)\right),r^*(q)\right)\\
		\theta & \quad \text{otherwise. }
		\end{cases}
		\]
		\item[iii)] If $\tau_v(q)=\tau_m(q)$, then $\rho(\theta,q)=\theta$ for all $\theta\in\Theta$.
		\item[iv)] Posterior beliefs $p(\theta\,|\,r,q)$ are according to Bayes' rule whenever possible and such that $\mathbb{E}_p [\theta \,|r^*(q)]=\tau_v(q)$, $\mathbb{E}_p [\theta \,|r]<\tau_v(q)$ for every off-path $r$, and $p(\theta|r,q)$ is degenerate at $\theta=r$ otherwise.
	\end{enumerate}
\end{lemma}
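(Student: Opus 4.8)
The plan is to construct the candidate equilibrium explicitly in three cases and then verify, in turn, that (a) the voter's posterior beliefs are Bayes-consistent wherever the prior gives positive probability, (b) the off-path specification is admissible, (c) the media outlet's reporting strategy is sequentially rational given those beliefs, and (d) the resulting equilibrium survives the Intuitive Criterion and is sender-preferred. By the symmetry $\tau_v(q)\lessgtr\tau_m(q)$, it suffices to treat case (i) in detail; case (ii) is the mirror image obtained by reflecting reports across $\tau_v(q)$, and case (iii) is immediate since when $\tau_v(q)=\tau_m(q)$ there is no conflict of interest — the outlet and the voter agree on the ranking of candidates in every state, so truthful reporting costs nothing and strictly dominates any distortion.

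**First I would** pin down the pooling report in case (i). Suppose $\tau_v(q)<\tau_m(q)$, so the outlet's endorsed candidate is the challenger whenever $\theta<\tau_m(q)$, but the voter would elect the incumbent already when $\theta>\tau_v(q)$; the conflict region is $\hat\Theta(q)=(\tau_v(q),\tau_m(q))$. For persuasion to occur the outlet needs a report $r^*$ that, when pooled with nearby truthful reports, induces the voter to vote $c$ — i.e. makes $\mathbb{E}_p[\theta\mid r^*]\le\tau_v(q)$. The cheapest way to do this uses the "highest misreporting type" $h(r^*)=\min\{r^*+\sqrt{\xi/k},\tau_m(q)\}$: states in $(r^*,h(r^*))$ can profitably send $r^*$ because their misreporting cost $k(r^*-\theta)^2$ is below the gain $\xi$ from flipping the election to $c$, while states at or above $\tau_m(q)$ either agree with the voter or cannot afford the lie. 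Imposing $\mathbb{E}_p[\theta\mid r^*]=\tau_v(q)$ with the posterior uniform on $(r^*,h(r^*))$ (truncated by $\tau_m(q)$) and solving for $r^*$ yields exactly the two-branch formula $r^*(q)=\max\{\tau_v(q)-\tfrac12\sqrt{\xi/k},\,2\tau_v(q)-\tau_m(q)\}$: the first branch is the interior solution where $h(r^*)=r^*+\sqrt{\xi/k}$ and the midpoint of $(r^*,h(r^*))$ equals $\tau_v(q)$; the second branch is the corner where $h(r^*)=\tau_m(q)$ binds and the midpoint of $(r^*,\tau_m(q))$ equals $\tau_v(q)$, i.e. $r^*=2\tau_v(q)-\tau_m(q)$. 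The max picks whichever constraint is active.

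**Next I would** verify sequential rationality of $\rho(\cdot,q)$ given the beliefs in part (iv). For $\theta\in(r^*(q),h(r^*(q)))$ the outlet strictly (weakly at the endpoint $h$) prefers $r^*(q)$: sending it elects $c=\hat m(\theta,q)$ at cost $k(r^*-\theta)^2\le\xi$, whereas truthful reporting $r=\theta\in\hat\Theta(q)$ would be believed literally and elect $i$, forfeiting $\xi$ — and no cheaper deviation to another pooling message exists because any $r$ with $\mathbb{E}_p[\theta\mid r]<\tau_v(q)$ is off-path and hence, by (iv), also believed to induce $c$ but is farther from $\theta$ than $r^*$ (this is where the off-path belief $\mathbb{E}_p[\theta\mid r]<\tau_v(q)$ does its work — it denies the outlet a profitable "cheaper lie" and simultaneously denies the incumbent-leaning types any gain from deviating downward). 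For $\theta$ outside $(r^*(q),h(r^*(q)))$, truthful reporting is optimal: such $\theta$ either already gets its preferred candidate elected under $r=\theta$, or is too far from any persuasive report for the $\xi$ gain to cover the quadratic cost. I would then check the voter side: at $r^*(q)$ the posterior mean is $\tau_v(q)$, so voting $c$ (the sender-preferred tie-break) is sequentially rational; at any on-path truthful $r$ beliefs are degenerate and the voter's best response is the obvious threshold rule.

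**The main obstacle** I anticipate is the Intuitive-Criterion / off-path argument — showing that the belief assignment "$\mathbb{E}_p[\theta\mid r]<\tau_v(q)$ for every off-path $r$" is consistent with the refinement and that no equilibrium with more informative (less pooling) reporting survives. The delicate point is that the Intuitive Criterion restricts off-path beliefs only to types that could conceivably benefit, so I must argue that for every off-path report $r$ the set of types for whom $r$ could be equilibrium-improving is such that assigning mass below $\tau_v(q)$ (hence inducing $c$) is admissible, and that this assignment does not accidentally make some on-path type want to deviate. I would handle this by invoking Proposition~\ref{prop:monopoly} in Appendix~\ref{app:communication}, which establishes the general communication-subgame equilibrium (with the generic cost $kC(r,\theta)$) together with its uniqueness as the perfect sequential equilibrium; the present Lemma is then the specialization to $C(r,\theta)=(r-\theta)^2$ and to the sender-preferred tie-breaking, so the bulk of the refinement work can be quoted rather than redone. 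The remaining task is the algebra identifying $r^*(q)$, $h(r^*(q))$, $l(r^*(q))$ with the closed forms above and checking the measure-zero indeterminacy flagged in the footnote (indifference of type $\theta=h(r^*(q))$ between truth and $r^*(q)$ when the corner binds), which is routine.
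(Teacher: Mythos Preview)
Your route is largely sound but differs in emphasis from the paper's, and it leaves one genuine gap. The paper's proof is short: it invokes Proposition~\ref{prop:monopoly}, which already characterizes the \emph{entire} family of Intuitive-Criterion-surviving PBE of $\hat\Gamma$ as a one-parameter continuum indexed by $\lambda\in[\tau_v,\tau_v+\tfrac12\sqrt{\xi/k}]$, and then argues directly that $\lambda=\tau_v$ is sender-preferred by comparing the outlet's payoff across $\lambda$. Your plan instead front-loads a direct verification that the stated profile is a PBE --- sequential rationality, Bayes consistency, off-path beliefs --- most of which duplicates the work already packaged in Proposition~\ref{prop:monopoly}, and then defers the selection step to an appeal to ``uniqueness as the perfect sequential equilibrium.''

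The gap is in that last move. Proposition~\ref{prop:monopoly} does not deliver uniqueness; it delivers a continuum. The PSE-uniqueness you allude to is a separate supplementary result (Lemma~\ref{lemma:perfect}), not part of Proposition~\ref{prop:monopoly}, and in any case the Lemma asks for the \emph{sender-preferred} equilibrium, which is a payoff-comparison statement, not a refinement statement. The paper's actual selection argument is the piece you are missing: for every $\lambda>\tau_v$, either $l(\hat r(\lambda))>l(\hat r(\tau_v))$ (so persuasion occurs in strictly fewer states at the same expected cost), or $l(\hat r(\lambda))=\tau_m$ and $\hat r(\lambda)>\hat r(\tau_v)$ (same persuasion set but strictly higher misreporting cost); either way the outlet's expected payoff is strictly lower than at $\lambda=\tau_v$. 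Without this comparison your argument establishes that the stated profile is \emph{an} equilibrium surviving the Intuitive Criterion, but not that it is the one the sender prefers among all such equilibria --- which is precisely what the Lemma asserts. Your construction of $r^*(q)$ and the verification steps are fine as far as they go; you just need to add the $\lambda$-comparison (or, equivalently, show that shifting the pooling interval so that $\mathbb{E}_p[\theta\mid\hat r]>\tau_v$ strictly hurts the outlet).
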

To understand the intuition behind Lemma~\ref{lemma:senderpm}, consider the case where proposals $q$ are such that $\tau_v(q)<\tau_m(q)$, and suppose that there exists a fully revealing equilibrium in truthful strategies, where $\rho(\theta,q)=\theta$ for every $\theta\in\Theta$. When quality is slightly higher than the voter's threshold $\tau_v(q)$, the media outlet can deliver some report $r\leq \tau_v(q)$ such that the incurred misreporting costs are lower than the gains obtained from endorsing the winning candidate, i.e., $k(r-\theta)^2<\xi$. Given the truthful reporting rule $\rho(\theta,q)$, the voter takes the outlet's reports at face value, and thus casts a ballot for the challenger after observing any $r\leq\tau_v(q)$. The outlet has a strictly profitable deviation, implying that in equilibrium there must be misreporting in some state.

Misreporting is a costly activity, and therefore the media outlet misreports only if doing so induces the electoral victory of its endorsed candidate $\hat m(\theta,q)$. Moreover, if it is profitable for the outlet to deliver a report $r'<\tau_m(q)$ when quality is $\theta'\in(r',\tau_m(q))$, then reporting $r'$ must be profitable for all $\theta\in[r',\theta']$. This suggests that in equilibrium the outlet ``pools'' information about quality by delivering the same report $r^*(q)$ for different states in a convex set $S(r^*(q))$ such that $\hat m(\theta',q)=\hat m(\theta'',q)$ for all $\theta',\theta''\in S(r^*(q))$.

Upon observing the pooling report $r^*(q)$, the voter infers that the realized quality is in the set $S(r^*(q))$. If the voter's expectation about quality $\mathbb{E}_p[\theta|r^*(q)]$ is greater than her threshold $\tau_v(q)$, then she casts a ballot for the incumbent; otherwise she votes for the challenger. Therefore, by pooling states around $\tau_v(q)$ in a way such that $\mathbb{E}_p[\theta|r^*(q)]\leq\tau_v(q)$, the outlet can induce the election of the challenger even when quality is such that the voter's preferred candidate is the incumbent. That is, the outlet can achieve persuasion by pooling information about quality.

The candidate endorsed by the media outlet is more likely to be elected when the pooling report $r^*(q)$ makes the voter just indifferent between casting a ballot for the incumbent and the challenger: pooling reports that induce lower expectations have the same effect on the voter's choice but are more expensive to deliver when there is a conflict of interest. Therefore, in equilibrium the outlet misreports by delivering a pooling report $r^*(q)$ that jams states around the voter's threshold in a way such that $\mathbb{E}_p[\theta|r^*(q)]=\tau_v(q)$.

This kind of pooling strategy prescribes the outlet to misreport even in states where there is no conflict of interest. Even though at first it might seem counterintuitive, this reporting behavior is consistent with strategic skepticism: the voter, being aware of the media outlet's leaning and misreporting technology, demands sufficiently strong evidence that quality is low enough for the challenger to be elected. Therefore, when quality is just slightly below $\tau_v(q)$, the outlet must nevertheless misreport to overcome the voter's skepticism. 

By contrast, truthful reporting always occurs when quality takes extreme values that are relatively far from the voter's threshold $\tau_v(q)$. There are two possibilities in this case: either there is a conflict of interest, or the interests of the outlet and the voter are aligned. In the former case, misreporting is not convenient for the outlet as it would be prohibitively costly to deliver a report that induces the election of its endorsed candidate. In the latter case, the outlet does not need to misreport because the true realization of quality is a sufficiently discriminating signal for the voter be trustful.  

Lemma~\ref{lemma:senderpm} shows that, given proposals $q$, the media outlet persuades the voter when $\theta\in(\tau_v(q),h(r^*(q)))$ if $\tau_v(q)<\tau_m(q)$ and when $\theta\in(l(r^*(q)),\tau_v(q))$ if $\tau_v(q)>\tau_m(q)$. If $\tau_v(q)=\tau_m(q)$, then there cannot be persuasion since the outlet and the voter always agree on which candidate is best. By contrast, I say that the outlet exerts ``full persuasion'' if persuasion occurs in every state in which there is a conflict of interest.
\begin{definition}[Full persuasion]\label{def:fullpers}
	The media outlet exerts full persuasion if $\beta(\rho(\theta,q),q)=\hat m(\theta,q)$ for all $\theta\in\hat\Theta(q)$.
\end{definition}
The media outlet has fully persuasive power if, given policy proposals $q$, the misreporting costs $k$ are low enough to make persuasion affordable in every state where there is a conflict of interest. Formally, there is full persuasion if $k\in\left(0,\hat k(q)\right]$, where\footnote{The cost threshold $\hat k(q)$ is obtained by setting $h(r^*(q))=\tau_m(q)$ for $\tau_v(q)<\tau_m(q)$ or $l(r^*(q))=\tau_m(q)$ for $\tau_v(q)>\tau_m(q)$, where $r^*(q)$ is defined as in Lemma~\ref{lemma:senderpm}.}
	\[
	\hat k(q)= \frac{\xi}{4\gamma^2 (\tau_v(q)-\tau_m(q))^2}.
	\]
Alternatively, the outlet exerts full persuasion if, given misreporting costs $k$, the proposals $q_i$ and $q_c$ are sufficiently close to each other. Intuitively, as candidates' policies become more similar, the preferences of the voter and the outlet become more aligned, and the set of states in which there is a conflict of interest becomes smaller. Since the outlet's potential gains $\xi$ are fixed, the share of states in which persuasion occurs under a conflict of interest increases as proposals get closer. If policies are sufficiently similar, then persuasion occurs every time there is a conflict of interest. Formally, there is full persuasion when proposals $q$ are such that 
	\begin{equation}\label{eq:fullpers}
	(q_c-q_i)^2\leq\frac{\xi}{16\gamma^2 (\varphi_m-\varphi_v)^2 k}.
	\end{equation}

Figure~\ref{fig:monopoly} shows the equilibrium reporting rule of Lemma~\ref{lemma:senderpm} for different policies and misreporting costs. In panel (a), the outlet is more likely than the voter to prefer the challenger and misreporting costs are relatively high. In this case, the media outlet discredits the incumbent by delivering a report that ``belittles'' realizations of quality around the voter's threshold $\tau_v(q)$. With this strategy, the outlet achieves persuasion in those states that are highlighted in light gray. By contrast, truthful reporting occurs despite a conflict of interest in states that are highlighted in dark gray: in these cases, persuasion is prohibitively expensive because of the relatively high misreporting costs $k>\hat k(q)$. In states that are highlighted in gray, the outlet spends resources to misreport information even though no conflict of interest is in place. These ``white lies'' are the result of the voter's skepticism about news reports that are not sufficiently discriminatory. 

Panel (b) of Figure~\ref{fig:monopoly} shows the equilibrium reporting rule when the outlet is more likely than the voter to prefer the incumbent and misreporting costs are relatively low. In this case, the media outlet supports the incumbent by delivering reports that ``exaggerate'' realizations of quality\footnote{There are perfect Bayesian equilibria in the communication subgame $\hat\Gamma$ where the outlet supports the incumbent (resp. challenger) by delivering a report that is lower (resp. higher) than the actual realization of quality. These equilibria do not survive the Intuitive Criterion test.} around the voter's threshold $\tau_v(q)$. Low misreporting costs allow the outlet to exert full persuasion and induce the election of its endorsed candidate every time there is a conflict of interest. As before, states in which the outlet delivers white lies are highlighted in gray, while states in which the outlet persuades the voter are in light gray.
\begin{figure}[]
	\centering
	\begin{subfigure}[b]{0.48\textwidth}
		\includegraphics[width=\textwidth]{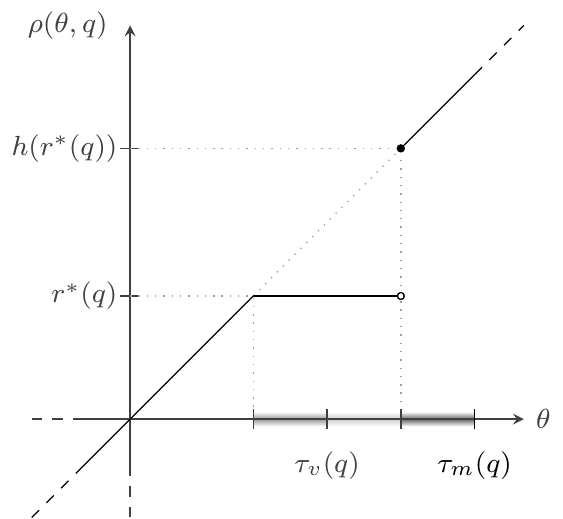}
		\caption{$\tau_v(q)<\tau_m(q)$ and $k>\hat k(q)$.}
		%\label{fig:monopolya}
	\end{subfigure}
	\begin{subfigure}[b]{0.48\textwidth}
		\includegraphics[width=\textwidth]{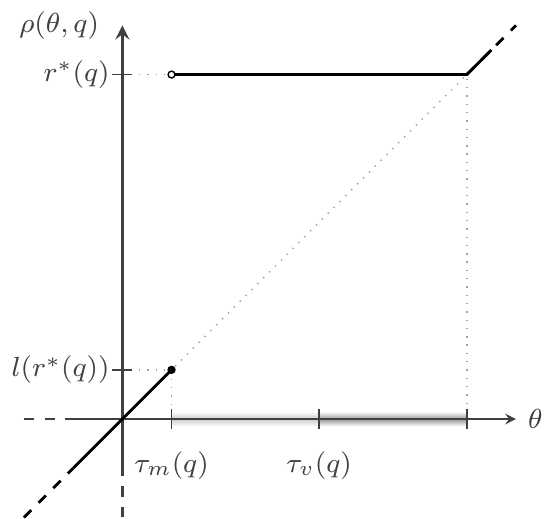}
		\caption{$\tau_v(q)>\tau_m(q)$ and $0<k\leq \hat k(q)$.}
		%\label{fig:monopolyb}
	\end{subfigure}
	\caption{The two panels illustrate the equilibrium reporting rule for different levels of misreporting costs and ordering of policy proposals. The states in which persuasion occurs are highlighted in light gray. The states in which the outlet misreports even though there is no conflict of interest are highlighted in gray. The states in which the outlet reveals the true realization of quality even though there is a conflict of interest are highlighted in dark gray.}
	\label{fig:monopoly}
\end{figure}

\subsection{The Policy-making Stage}\label{sec:policymaking}
Consider now the policy-making stage, where candidates sequentially make a binding commitment to a policy proposal. Since candidates are purely office-seeking, they advance policies to maximize their chances of getting elected. The result in the previous section is key for finding the candidates' equilibrium proposals: Lemma~\ref{lemma:senderpm} shows the media outlet's equilibrium reporting rule and thus pins down the candidates' probability of electoral victory given any pair of policies $q$.

I denote by $q_i^*(k)$ the equilibrium policy advanced by the incumbent and by $q^*_c(q_i,k)$ the challenger's best response to some proposal $q_i$. I refer to policies that are relatively close to the voter's bliss $\varphi_v$ as ``populist'' and to policies that are relatively close to the outlet's bliss $\varphi_m$ as ``biased.'' The next result establishes the equilibrium proposals $q^*(k)=(q_i^*(k),	q_c^*(q_i^*(k),k))$ as a function of the misreporting costs' intensity $k$.
\begin{proposition}\label{prop:eqmpol}
	The equilibrium policies $q^*(k)$ are
	\[
	q_i^*(k)=
	\begin{cases}
	\varphi_v + \frac{\sqrt{\xi/k}}{4\gamma(\varphi_v-\varphi_m)} - \sqrt[4]{\frac{\xi}{\gamma^2k}} & \text{ if } k>\bar k\\
	\frac{\varphi_v+\varphi_m}{2}-\frac{\sqrt{\xi/k}}{4\gamma(\varphi_v-\varphi_m)} & \text{ if } k\in\left(\bar k/4,\bar k\right]\\
	\varphi_m & \text{ if } k\in\left(0,\bar k/4\right]
	\end{cases}
	\]
	
	\[
	q_c^*(q_i^*(k),k)=
	\begin{cases}
	\varphi_v - \sqrt[4]{\frac{\xi}{\gamma^2k}} & \text{ if } k>\bar k\\
	\varphi_m & \text{ if } k\in\left(0,\bar k\right]
	\end{cases}
	\]
	where the misreporting costs threshold is $\bar k = \frac{\xi}{\gamma^2 (\varphi_v-\varphi_m)^4}$.
\end{proposition}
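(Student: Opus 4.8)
The plan is to convert the policy‑making game into a sequential (Stackelberg) optimization over a scalar payoff, using Lemma~\ref{lemma:senderpm} to summarize the communication subgame, and then to solve it by backward induction. First I would fix a profile $q=(q_i,q_c)$ and apply Lemma~\ref{lemma:senderpm}: reading off the equilibrium reporting rule together with the definitions of $r^*(\cdot)$, $l(\cdot)$ and $h(\cdot)$, one finds that the incumbent is elected exactly when $\theta$ exceeds a pivotal threshold
\[
T(q)=\tau_v(q)+\operatorname{sgn}(q_i-q_c)\,\min\!\Big\{\tfrac12\sqrt{\xi/k},\ \lvert\tau_v(q)-\tau_m(q)\rvert\Big\}
\]
(using part~(i) and $h$ when $q_i>q_c$, part~(ii) and $l$ when $q_i<q_c$, and part~(iii) when $q_i=q_c$). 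Since $\theta$ is uniform on $[-\phi,\phi]$, the incumbent wins with probability $W_i(q)=(\phi-T(q))/(2\phi)$ whenever $T(q)\in(-\phi,\phi)$ — which the influential‑outlet assumption $\phi\geq 3\gamma(\varphi_v-\varphi_m)^2$, through Corollary~\ref{cor:support}, guarantees over the relevant range — so the incumbent minimizes $T$ and the challenger, having observed $q_i$, maximizes $T$. Writing $x=q_i-\varphi_v$, $y=q_c-\varphi_v$, $\Delta=\varphi_v-\varphi_m$ and $s=\sqrt{\xi/k}$, and using $\tau_v(q)=\gamma(x^2-y^2)$ and $\tau_v(q)-\tau_m(q)=2\gamma(y-x)\Delta$, the objective simplifies to $T=\gamma(x^2-y^2)+\operatorname{sgn}(x-y)\min\{s/2,\,2\gamma\Delta\lvert x-y\rvert\}$, which is continuous and, in each variable, piecewise quadratic with concave pieces and breakpoints at $y=x$ and $\lvert x-y\rvert=s/(4\gamma\Delta)$.

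Next I would solve the challenger's subproblem $\max_y T(x,y)$ for an arbitrary $x$. Because each quadratic piece is concave in $y$, the maximizer is among finitely many candidates: the vertex $y=0$ (policy $\varphi_v$, where the inner minimum is saturated at $s/2$), the vertex $y=-\Delta$ (policy $\varphi_m$, where the inner minimum is on its sloped branch), and the breakpoints $y=x$ and $y=x\pm s/(4\gamma\Delta)$. Comparing the values of $T$ at these points pins down the best‑response map $q_c^*(q_i,k)$: when $q_i$ is sufficiently ``populist'' the challenger undercuts with a still‑more‑populist policy that exhausts the persuasion term — at the incumbent's equilibrium proposal this is the breakpoint $q_c=q_i-s/(4\gamma\Delta)$ — whereas for ``biased'' $q_i$ the challenger jumps to $q_c=\varphi_m$. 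The tie between the populist vertex $\varphi_v$ and the saturating breakpoint (which occurs exactly at the incumbent's equilibrium proposal) is broken, by the sender‑preferred convention, in favour of the breakpoint, and this is precisely how the formula $q_c^*=\varphi_v-\sqrt[4]{\xi/(\gamma^2 k)}=\varphi_v-\sqrt{s/\gamma}$ arises in the high‑cost regime.

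Then I would substitute $q_c^*(q_i,k)$ into $T$ and minimize the resulting reduced objective over $q_i$. This objective is piecewise smooth, with kinks at the points where the challenger's best response switches branch; minimizing on each region and comparing the local minima with the kink values yields the three regimes. For $k>\bar k$ (equivalently $s<\gamma\Delta^2$) the minimum sits on the ``divergence'' branch and gives $q_i^*=\varphi_v+\frac{s}{4\gamma\Delta}-\sqrt{s/\gamma}$ together with $q_c^*=\varphi_v-\sqrt{s/\gamma}$; for $k\in(\bar k/4,\bar k]$ the incumbent's best is $q_i^*=\frac{\varphi_v+\varphi_m}{2}-\frac{s}{4\gamma\Delta}$ while the challenger moves to $q_c^*=\varphi_m$; for $k\le\bar k/4$ the proposals converge on $\varphi_m$. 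The cutoff $\bar k=\xi/(\gamma^2\Delta^4)$ is where the divergence‑branch minimum stops dominating, and $\bar k/4$ is where the challenger's on‑path best response switches between $\varphi_m$ and an interior policy. A final check confirms the profile is a subgame‑perfect equilibrium: the challenger's and incumbent's moves are optimal by the two previous steps, all pivotal thresholds stay in $(-\phi,\phi)$ by the influential‑outlet assumption (so the probabilities are as computed), and the sender‑preferred selection disposes of the indifferences, both here and in the communication subgame invoked through Lemma~\ref{lemma:senderpm}.

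The main obstacle is the bookkeeping in the last two steps. The objective $T$ carries kinks from both the sign term and the inner minimum, and composing with the challenger's best response adds further kinks, so one has to (i) enumerate every branch of $q_c^*(q_i,k)$ without omitting a region, (ii) decide which candidate optimum is global in each parameter range — concavity on individual pieces helps, but genuine cross‑piece comparisons are unavoidable — and (iii) apply the tie‑breaking rules correctly, which is exactly what produces the interior challenger policy and the value of $\bar k$ in the divergence regime. A related subtlety is the ``second‑mover advantage'': the incumbent cannot deter the challenger from jumping to $\varphi_m$ (or to the saturating populist policy) once $k$ falls below the relevant threshold, and it is this that forces full convergence to $\varphi_m$ for small $k$.
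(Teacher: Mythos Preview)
Your approach is the same as the paper's: reduce the communication subgame to a scalar pivotal threshold via Lemma~\ref{lemma:senderpm}, solve the challenger's problem piecewise, substitute back, and minimize over $q_i$. The paper organizes this through a separate best-response proposition (its Proposition~\ref{prop:bestresp}, built from a ``best response to the left'' and ``to the right''), whereas you compress everything into the single objective $T(x,y)$; but the pieces, breakpoints, and comparisons are identical.

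Two descriptive slips are worth correcting before you expand the argument. First, ``undercuts with a still-more-populist policy'' is the wrong direction: the breakpoint $q_c=q_i-s/(4\gamma\Delta)$ lies \emph{below} $q_i$, i.e.\ closer to $\varphi_m$, so the challenger undercuts with a \emph{more biased} policy (this is the paper's $\tilde q_c(q_i)$). Second, ``for biased $q_i$ the challenger jumps to $q_c=\varphi_m$'' misstates the best-response map: for sufficiently biased $q_i$ the challenger's best reply is $\varphi_v$, not $\varphi_m$ (she goes populist and leaves the incumbent stranded with an unpopular policy and a weakened outlet). The $\varphi_m$ response arises only in an \emph{intermediate} $q_i$-range when $k\le\bar k$, and it is this intermediate branch that generates the middle regime $q_i^*=\tfrac{\varphi_v+\varphi_m}{2}-\eta(k)$. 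You clearly have the right candidate set (you list the vertex $y=0$ and the tie with $\varphi_v$ at the incumbent's optimum), so this looks like a narrative slip rather than a structural error; but since, as you note, the proof is all bookkeeping, getting the branches labelled correctly is exactly where the work lies.
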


In equilibrium, proposals are weakly increasing in $k$ and strictly increasing for every finite\footnote{The threshold $\bar k$ is the highest intensity of misreporting costs such that the challenger best responds with $\varphi_m$ when undercutting the incumbent's proposal.} $k>\bar k=\frac{\xi}{\gamma^2 (\varphi_v-\varphi_m)^4}$. When the costs of misreporting information are relatively low, i.e. for $k\in\left(0,\bar k/4\right]$, both candidates advance the media outlet's bliss policy $\varphi_m$. Thus, variations of $k$ within the region $\left(0,\bar k/4\right]$ leave the equilibrium policies unaltered. For intermediate costs, $k\in\left(\bar k/4,\bar k\right]$, the incumbent proposes increasingly moderate policies $q_i^*(k)\in\left(\varphi_m,\frac{\varphi_v+\varphi_m}{2}\right)$, while the challenger keeps offering the outlet's bliss $\varphi_m$. Thus, an increase of $k$ in the region $\left(\bar k /4,\bar k\right)$ generates policy divergence. When the costs of misreporting are relatively high, $k>\bar k$, the challenger proposes less biased policies as well. As $k$ grows arbitrarily large, both proposals converge to the voter's preferred policy $\varphi_v$, with $q_i^*(k)>q_c^*(q_i^*(k),k)$ for every finite\footnote{Equilibrium proposals are continuous in $k$ as $\lim_{k\to \bar k^+} q_i^*(k)=\lim_{k\to \bar k^-} q_i^*(k)$ and $\lim_{k\to \bar k/4} q_i^*(k)=\lim_{k\to \bar k/4} q_c^*(q_i^*(k),k)=\varphi_m$.} $k>\bar k/4$. Figure~\ref{fig:policies} illustrates the equilibrium policy proposals for different levels of misreporting costs' intensity $k$.

\begin{figure}
	\centering
	\includegraphics[width=0.65\linewidth]{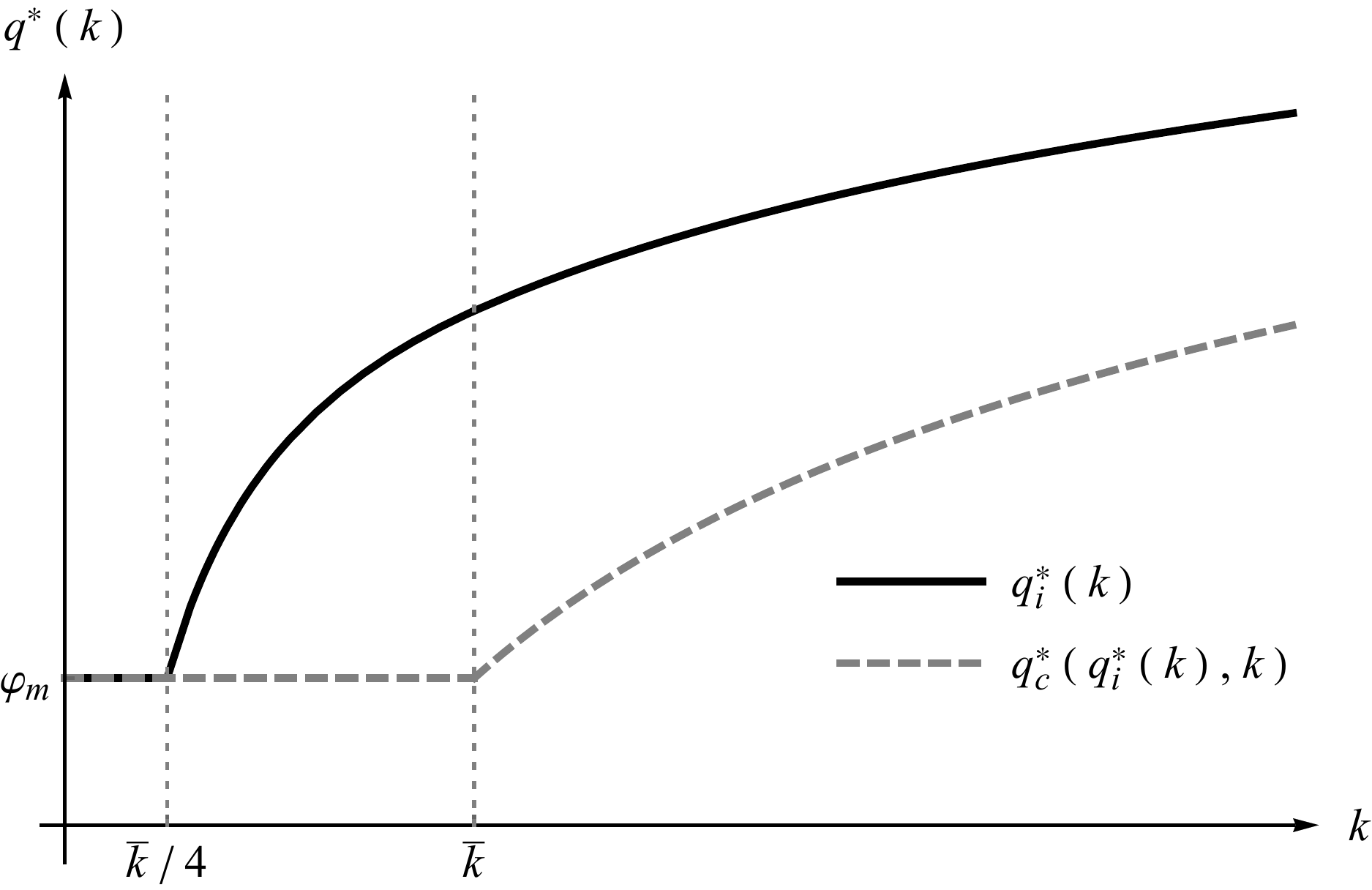}
	\caption{Equilibrium policy proposals for different intensities of misreporting costs. As $k$ grows arbitrarily large, both proposals monotonically converge to $\varphi_v$.}
	\label{fig:policies}
\end{figure}

Here I discuss the intuition behind Proposition~\ref{prop:eqmpol}. Since policies outside the set $[\varphi_m,\varphi_v]$ are always dominated, I restrict attention\footnote{The focus on policies in the set $[\varphi_m,\varphi_v]$ is without loss of generality: for both candidates $j\in\{i,c\}$, proposals $q_j>\varphi_v$ (resp. $q_j<\varphi_m$) are dominated by every $q_j'\in[\varphi_v,q_j)$ (resp. $q_j'\in(q_j,\varphi_m]$) as any such $q_j'$ is more appealing to both the voter and the outlet.} to $q_j\in[\varphi_m,\varphi_v]$, $j\in\{i,c\}$. First, consider the challenger's problem of best responding to the incumbent's proposal.  When the incumbent proposes a relatively populist policy, the challenger's best response is to ``undercut'' the incumbent with the most biased policy $q_c<q_i$ that endows the media outlet with fully persuasive power.\footnote{Formally, the challenger offers the lowest proposal that satisfies condition~\eqref{eq:fullpers} with equality.} With this strategy, the challenger maximizes both the extent of the conflict of interest $\left(\tau_m(q),\tau_v(q)\right)$ and the probability of receiving the outlet's support, subject to the outlet exerting full persuasion. Even though the challenger's best response is less appealing to the voter, the loss in ``popular appeal'' is more than compensated by the outlet's ability to persuade the voter over a large set of contingencies. By contrast, offering a more populist policy $q_c>q_i$ would make the challenger slightly more appealing to the voter at the expense of getting the incumbent into the good graces of a fully persuasive media outlet. Thus, offering any $q_c>q_i$ is suboptimal in this case.

When the incumbent proposes relatively biased policies and misreporting costs are sufficiently high, the best response of the challenger is to offer the voter's bliss $\varphi_v$. This strategy generates a large conflict of interest $\hat\Theta(q)=\left(\tau_m(q),\tau_v(q)\right)$ such that the voter requires evidence that quality is exceptionally high in order to vote for the incumbent. The outlet is now more likely to endorse the incumbent than the challenger but, because of high misreporting costs and a sharp policy divergence, it cannot exert full persuasion. In this case, offering the voter's bliss $\varphi_v$ is the challenger's best response because it leaves the incumbent with an unpopular policy and the support of a weakened media outlet. 

Similarly, when the incumbent's policy is relatively biased but misreporting costs are sufficiently low, the challenger's best response remains that of undercutting the incumbent. The strategy of proposing the voter's bliss now backfires because with a low intensity of misreporting costs the media outlet retains its ability to persuade the voter in a relatively large share of $\hat\Theta(q)$. If the costs' intensity $k$ is low enough, then the challenger's best response is to undercut the proposal of the incumbent to the point of offering the outlet's bliss $\varphi_m$. Figure~\ref{fig:bestresp} shows the challenger's best response for different intensities of misreporting costs.\footnote{Proposition~\ref{prop:bestresp} in Appendix~\ref{app:best} shows the challenger's best response function.}

\begin{figure}
	\centering
	\includegraphics[width=0.5\linewidth]{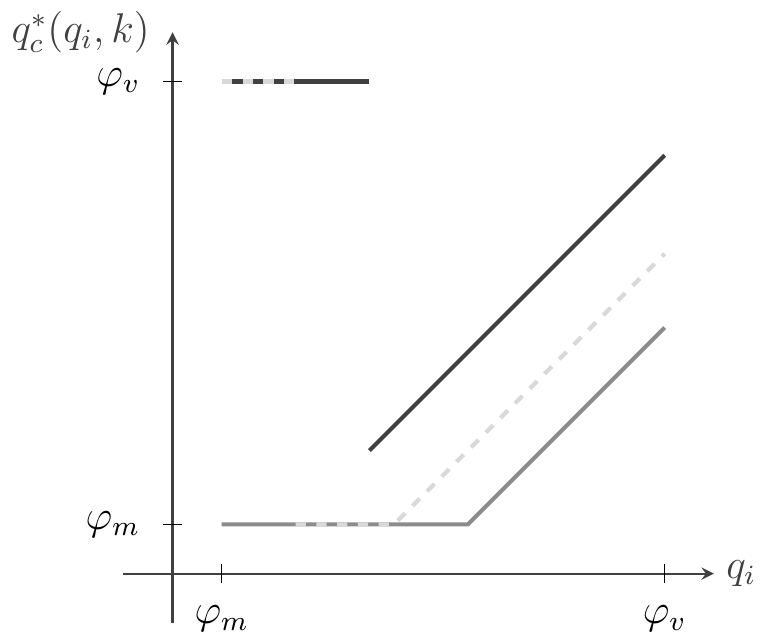}
	\caption{The challenger's best response for different intensities of misreporting costs. The best response is depicted in black for relatively high costs, $k>\bar k$; in dashed light gray for intermediate costs, $k\in\left(\bar k/4,\bar k\right)$; in dark gray, for relatively low costs, $k\in\left(0,\bar k /4\right]$.}
	\label{fig:bestresp}
\end{figure}

Consider now the incumbent's problem of selecting a policy that maximizes her probability of electoral victory, and suppose first that the intensity of misreporting costs is relatively high, $k>\bar k/4$. In this case, the optimal proposal of the incumbent $q_i^*(k)$ is the policy that makes the challenger indifferent whether to best reply with the voter's bliss or with a relatively more biased policy (i.e., by undercutting the incumbent): higher proposals $q_i>q_i^*(k)$ would allow the challenger to get the support of a fully persuasive outlet; lower proposals $q_i<q_i^*(k)$ would be highly unpopular in comparison with the challenger's best response of offering the voter's bliss. By contrast, when the intensity of misreporting costs is relatively low, $k\in\left(0,\bar k/4\right)$, the media outlet exerts full persuasion for any combination of candidates' proposals $q\in[\varphi_m,\varphi_v]^2$. In this case, the incumbent's optimal policy is to offer the outlet's bliss, as any higher proposal $q_i>\varphi_m$ would allow the challenger to get the support of a fully persuasive media outlet by undercutting\footnote{Notice that the model does not predict that the incumbent always takes a more populist position than to the challenger. While this happens in the sender-preferred equilibrium, there are other equilibria where the challenger goes fully populist by offering the voter's favorite policy and the incumbent proposes a relatively more biased policy. See Proposition~\ref{prop:bestresp}.} $q_i$.

The presence of a persuasive media outlet generates a distortion in the process of policy-making. Since candidates look to gain both the consensus of the voter and the support of the influential outlet, their proposals drift away from the voter's preferred policy, breaking down the centripetal force of the median voter theorem \citep{downs1957,black1948}. This distortion peaks when the intensity of misreporting costs is sufficiently low so that both candidates advance the media outlet's bliss policy. In this case, persuasion never takes place since there is no conflict of interest when candidates' proposals are identical (see Lemma~\ref{lemma:senderpm}). Therefore, with lower (resp. higher) intensities of misreporting costs the voter might have worse (resp. better) policies but more (resp. less) information about quality. In the next section, I study this trade-off in relation to the voter's welfare.

\section{Welfare and Regulation}\label{sec:welfare}
Having characterized the equilibrium of the communication subgame (Lemma~\ref{lemma:senderpm} in Section~\ref{sec:commsub}) and the candidates' equilibrium proposals (Proposition~\ref{prop:eqmpol} in Section~\ref{sec:policymaking}), I now proceed to study their welfare implications. I denote by $W_v^*(k)$ the voter's equilibrium expected utility and refer to $W_v^*(k)$ simply as the voter's welfare.\footnote{See equation~\eqref{eq:welfare} in Appendix~\ref{app:welfare} for an explicit formulation of $W^*_v(k)$.} As a benchmark, consider the voter's expected utility under complete information, which I denote by $\hat W_v$. Suppose that the voter perfectly observes the realized quality after the policy-making stage but before the election takes place. In this case, both candidates cannot do better than offer the voter's bliss policy as the media outlet would have no role. The candidate with the highest relative quality is always elected and the voter's favorite policy is always implemented. Therefore, $\hat W_v=\phi/4$. 

In Section~\ref{sec:voterwelfare}, I study how the intensity of misreporting costs affects different determinants of the voter's welfare. In Section~\ref{sec:incumbent}, I extend the main model by allowing candidates to select the costs' intensity ahead of the policy-making stage. Formal proofs are relegated to Appendix~\ref{app:welfare}.

\subsection{The Voter's Welfare}\label{sec:voterwelfare}

Consider the problem of a regulator that seeks to maximize the welfare of the voter by selecting the intensity of misreporting costs. This type of intervention can be performed, for example, by issuing ``fake news laws'' or by subsidizing watchdogs that expose to the public those media outlets that concoct news reports. As we have seen in the previous section, the process of policy-making is strategically intertwined with the voter's informational environment. Interventions that change the misreporting costs might affect both the amount of information received by the voter and the policies advanced by the candidates.\footnote{The voter receives more (resp. less) information if, given the outlet's reporting rule $\rho(\cdot)$, she casts a ballot for her preferred candidate in more (resp. less) states.} To maximize the voter's welfare, it is crucial for regulators to understand the consequences and the trade-offs involved in these types of interventions.

Before showing the next result, it is thus useful to note some important features of the equilibria in Proposition~\ref{prop:eqmpol} and Lemma~\ref{lemma:senderpm}.
First, equilibrium policies $q^*(k)$ satisfy condition~\eqref{eq:fullpers} for every finite $k$: on the equilibrium path, the media outlet always exerts full persuasion and the candidate endorsed by the outlet, $\hat m(q^*(k),\theta)$, is always elected. Figure~\ref{fig:monopoly3} shows the outlet's reporting rule on the equilibrium path for some finite $k>\bar k/4$. Second, an increase in the misreporting costs' intensity does not necessarily yield more information to the voter. Intuitively, since the outlet exerts full persuasion, the larger the conflict of interest $\hat\Theta(q^*(k))$, the less the information received by the voter. Recall that the share of states in which there is a conflict of interest is directly proportional to the difference between proposals.\footnote{Formally, $|\hat\Theta\left(q^*(k)\right)|=2\gamma(\varphi_v-\varphi_m)\left(q_i^*(k)-q_c^*\left(q_i^*(k),k\right)\right)$.} It follows that an increase in $k$ yields more (resp. less) information to the voter only if it generates policy convergence (resp. divergence). However, Proposition~\ref{prop:eqmpol} shows that the distance between equilibrium proposals is non-monotonic\footnote{A marginal increment in $k$ generates policy convergence for all finite $k>\bar k$, policy divergence for all $k\in\left(\bar k/4, \bar k\right)$, and has no effect on policies for all $k\in\left(0,\bar k/4\right)$.} in $k$. Hence, an increase in the misreporting costs' intensity might decrease the amount of information received by the voter in equilibrium. Third, with a relatively low intensity of misreporting costs, $k\in\left(0,\bar k/4\right)$, there is no conflict of interest because equilibrium policies are identical. In this case, persuasion never occurs and the media outlet fully reveals its private information about quality. By contrast, persuasion always takes place with positive probability for every finite $k>\bar k/4$. Therefore, in equilibrium there is persuasion only if the misreporting costs' intensity is sufficiently high.

\begin{figure}
	\centering
	\minipage{0.5\textwidth}%
	\includegraphics[width=\linewidth]{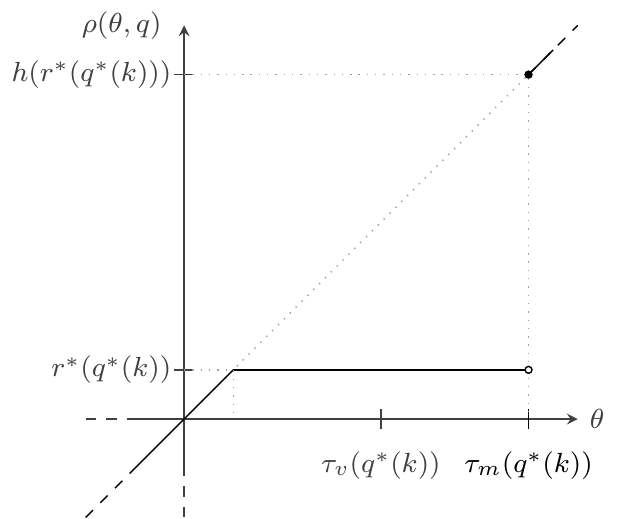}
	\endminipage
	\caption{The media outlet's reporting rule on the equilibrium path for some finite costs' intensity $k>\bar k/4$. The outlet exerts full persuasion and its endorsed candidate is always elected.}\label{fig:monopoly3}
\end{figure}

Since an increase in the misreporting costs might yield the voter better policies at the expense of selecting the best candidate with lower probability, it is not clear how this type of intervention would affect the voter's welfare. The next proposition clears up this ambiguity by showing that increments in the costs' intensity $k$  never harm the voter.
\begin{proposition}\label{prop:welfare}
	The voter's equilibrium welfare $W_v^*(k)$ is independent of $k$ for all $k\in\left(0,\bar k/4\right)$, and strictly increasing in $k$ for all finite $k\geq \bar k/4$. As $k\to\infty$, $W_v^*(k)\to \hat W_v$.
\end{proposition}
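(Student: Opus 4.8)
The plan is to construct the voter's welfare function $W_v^*(k)$ explicitly in each of the three cost regimes identified in Proposition~\ref{prop:eqmpol}, and then verify the monotonicity and limit claims by direct computation. The key structural fact, already noted in the text preceding the proposition, is that on the equilibrium path the outlet always exerts full persuasion, so the endorsed candidate $\hat m(q^*(k),\theta)$ is always elected. This means the voter's expected utility decomposes cleanly: the policy term depends only on which of $q_i^*(k)$, $q_c^*(k)$ is implemented in each state, and the quality term is $\theta$ whenever the incumbent wins; both are determined by the thresholds $\tau_m(q^*(k))$ and $\tau_v(q^*(k))$ together with the uniform density $f=1/(2\phi)$ on $[-\phi,\phi]$. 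Writing $W_v^*(k)=\int_\Theta u_v(\hat m(q^*(k),\theta),\theta,q^*(k))\,f(\theta)\,d\theta$ and splitting the integral at $\tau_m(q^*(k))$ gives a closed form in terms of $q_i^*(k)$, $q_c^*(k)$, $\varphi_v$, $\varphi_m$, $\gamma$, $\phi$ — this is the formula referenced as equation~\eqref{eq:welfare} in Appendix~\ref{app:welfare}.

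First I would handle the regime $k\in(0,\bar k/4)$: here $q_i^*(k)=q_c^*(k)=\varphi_m$, so $\hat\Theta(q^*(k))$ is empty, the implemented policy is $\varphi_m$ in every state, and the best candidate is always elected. Then $W_v^*(k)=-\gamma(\varphi_v-\varphi_m)^2+\mathbb{E}[\max\{\theta,0\}\cdot\mathds{1}\{\theta>0\}]$-type expression, which is a constant independent of $k$ — establishing the first claim. Next, for $k\geq\bar k/4$ I would substitute the relevant branch of $q_i^*(k)$ and $q_c^*(k)$ from Proposition~\ref{prop:eqmpol} into the closed-form $W_v^*(k)$, differentiate with respect to $k$, and show the derivative is strictly positive on $(\bar k/4,\bar k)$ and on $(\bar k,\infty)$ separately (with continuity at $\bar k$ and at $\bar k/4$ inherited from the continuity of $q^*(k)$ noted in the text). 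The intuition to keep in mind while doing the algebra: raising $k$ moves proposals toward $\varphi_v$ (better policies) but, in the intermediate regime, also widens $\hat\Theta(q^*(k))$ (more electoral mistakes); the content of the proposition is that the policy gain dominates. Finally, the limit claim follows by taking $k\to\infty$: both $\sqrt{\xi/k}\to 0$ and $\sqrt[4]{\xi/(\gamma^2 k)}\to 0$, so $q_i^*(k),q_c^*(k)\to\varphi_v$, the policy loss term vanishes, the conflict-of-interest region collapses, and $W_v^*(k)\to -0+\mathbb{E}[\max\{\theta,0\}]= \phi/4=\hat W_v$.

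The main obstacle I anticipate is the sign of $\tfrac{d}{dk}W_v^*(k)$ on the intermediate interval $(\bar k/4,\bar k)$, where only $q_i^*(k)$ moves (the challenger stays at $\varphi_m$), so the widening conflict-of-interest effect is not offset by any improvement in the challenger's policy. Here one genuinely has to check that the derivative of the policy-improvement term (from $q_i^*(k)$ rising toward $\tfrac{\varphi_v+\varphi_m}{2}$) outweighs the derivative of the information-loss term; I would do this by expressing everything in the single variable $s=\sqrt{\xi/k}$, so that $q_i^*$ is linear in $s$ and the welfare expression becomes a low-degree polynomial in $s$ whose derivative's sign can be checked on the relevant range of $s$ (namely $s\in[\sqrt{\bar k}\,\xi^{1/2}/\bar k,\dots]$, i.e. the image of $(\bar k/4,\bar k)$). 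A useful sanity check is matching values and one-sided derivatives at the two breakpoints, which both pins down constants of integration and guards against sign errors in the case analysis. The $k>\bar k$ regime is analogous but cleaner because both proposals move symmetrically toward $\varphi_v$, so the information-loss term and the policy term both improve and positivity of the derivative is more transparent.
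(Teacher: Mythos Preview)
Your plan is essentially the paper's own: exploit full persuasion on the equilibrium path so that the winner is always $\hat m(q^*(k),\theta)$, split the expectation at $\tau_m(q^*(k))$ to get the closed form \eqref{eq:welfare}, and then treat the three cost regimes from Proposition~\ref{prop:eqmpol} separately. The constant regime and the limit $k\to\infty$ are handled exactly as the paper does.

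One genuine omission: in the intermediate regime $k\in[\bar k/4,\bar k]$ your polynomial-in-$s$ strategy will not determine the sign of $\tfrac{d}{dk}W_v^*(k)$ by itself, because the coefficients carry $\phi$. The paper's argument here is slightly different and makes this dependence explicit: it shows that the bracketed expression in the derivative (your ``policy gain minus information loss'') is \emph{decreasing} in $k$ on $[\bar k/4,\bar k]$, so it suffices to check positivity at the right endpoint $k=\bar k$; that endpoint check then uses the standing ``Influential News'' assumption $\phi\geq 3\gamma(\varphi_v-\varphi_m)^2$ (in fact only $\phi>\gamma(\varphi_v-\varphi_m)^2$ is needed) to force the sign. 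Without invoking this lower bound on $\phi$, the inequality can fail, so your plan should either adopt the paper's monotonicity-plus-endpoint tactic or, equivalently, track $\phi$ through your $s$-polynomial and invoke the bound at the end. For $k\geq\bar k$ your intuition matches the paper's argument: one verifies $\partial\tau_m(q^*(k))/\partial k<0$ and that both policies move toward $\varphi_v$ with $q_i^*>q_c^*$, so all three components of welfare improve and no $\phi$-bound is needed there.
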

Proposition~\ref{prop:welfare} shows that even in those cases where an increase in $k$ generates more persuasion, the gain that the voter obtains from having better policies always overcomes the expected loss in quality due to worse selection. Denote by $\chi (k)$ the ex-ante probability that persuasion occurs, called the ``persuasion rate.'' From Proposition~\ref{prop:eqmpol} and Lemma~\ref{lemma:senderpm}, we have that on the equilibrium path the persuasion rate is $\chi (k)=\frac{\tau_m(q^*(k))-\tau_v(q^*(k))}{2\phi}$. As observed before, the media outlet is more likely to persuade the voter when the set of states in which there is a conflict of interest $\hat \Theta(q^*(k))$ is larger. Figure~\ref{fig:welfpers} shows both the voter's welfare and the probability that persuasion occurs as a function of $k$. 

The policy/information trade-off occurs when $k\in\left[\bar k/4, \bar k\right)$: in this case, a marginal increase in the costs' intensity $k$ generates policy divergence, more disagreement, and thus a higher persuasion rate $\chi(k)$. As a consequence, the voter becomes increasingly likely to cast a ballot for the wrong candidate. The expected loss in quality due to worse selection is more than compensated by the availability of an increasingly populist policy advanced by the incumbent: since the proposals of both candidates are heavily skewed toward the media outlet's preferred policy, the voter obtains an exceptionally high gain from policies that, on average, are closer to her bliss. When $k\geq \bar k$, an increase in the costs' intensity generates proposals that are both more populist and closer to each other. The resulting policy convergence reduces the conflict of interest, and thus the persuasion rate $\chi(k)$ declines. In this case, the welfare of the voter increases because she obtains better policies and makes a better selection of candidates. By contrast, when $k\in\left(0,\bar k/4\right)$, a marginal increase in the costs' intensity has no effect on equilibrium policies and thus does not impact the voter's welfare either. As a result, lenient measures can actually decrease the voter's welfare when taking into consideration the public resources required to carry out the interventions.

\begin{figure}[]
	\centering
	\begin{subfigure}[b]{0.48\textwidth}
		\includegraphics[width=\textwidth]{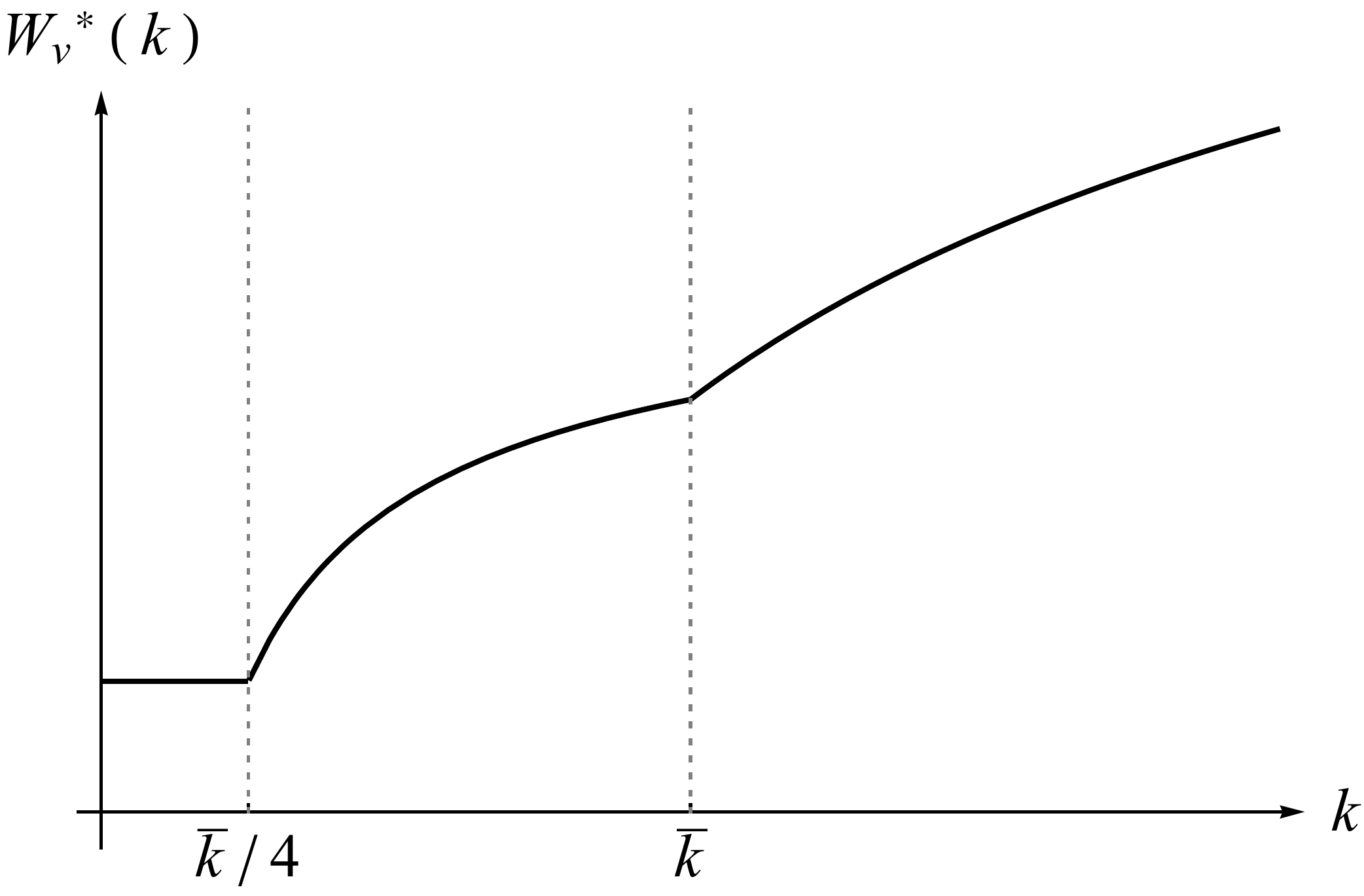}
		\caption{The voter's equilibrium welfare.}
		\label{rfidtest_xaxis}
	\end{subfigure}
	\begin{subfigure}[b]{0.48\textwidth}
		\includegraphics[width=\textwidth]{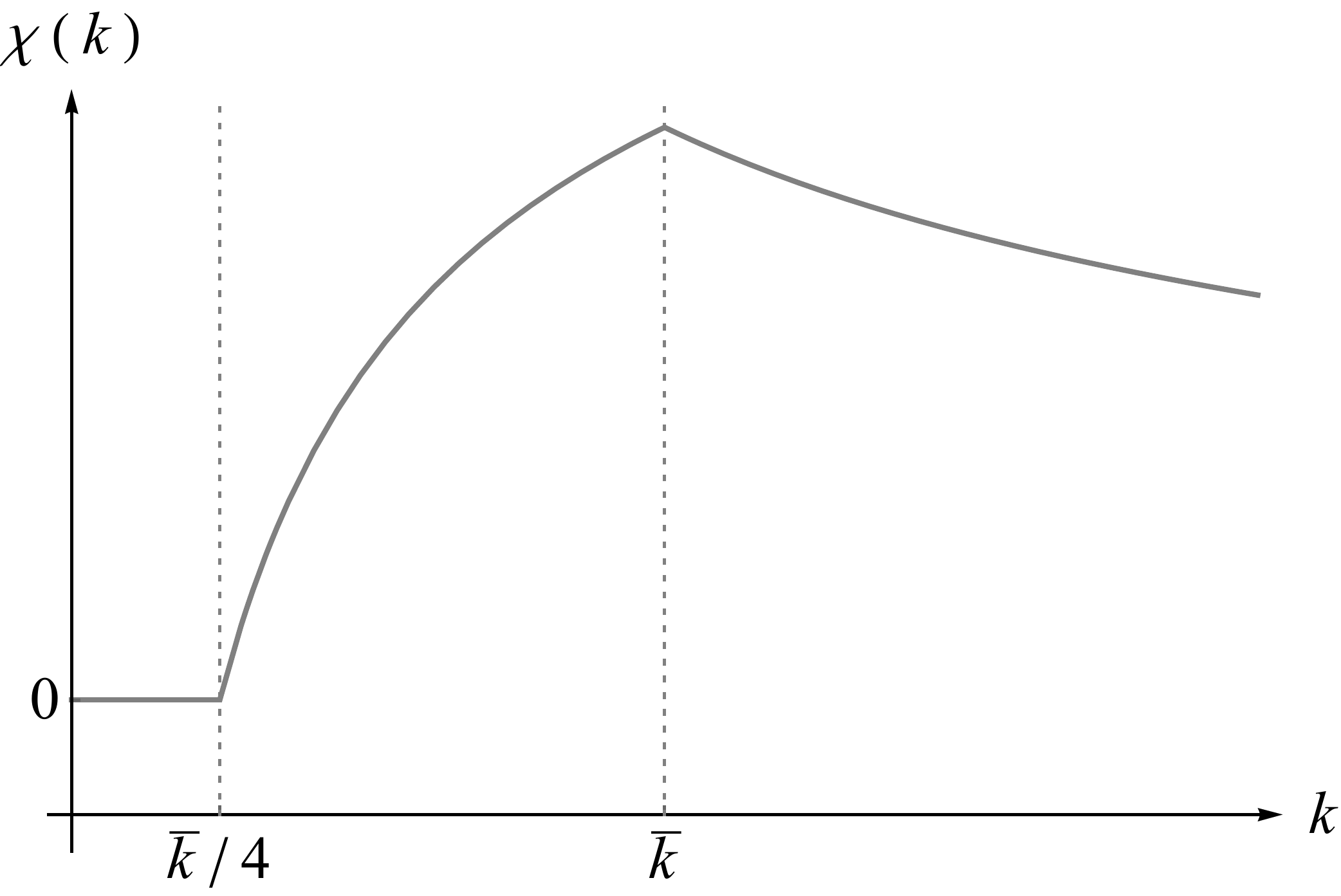}
		\caption{The probability of persuasion in equilibrium.}
		\label{rfidtest_yaxis}
	\end{subfigure}
	\caption{The voter's welfare increases with the intensity of misreporting costs even when higher costs' intensities leads to a higher persuasion rate. With relatively low intensities of misreporting costs there is no persuasion but the voter's welfare is at its minimum. As $k$ grows arbitrarily large, $W_v^*(k)$ and $\chi(k)$ converge monotonically to $\hat W_v=\phi/4$ and zero, respectively.}\label{fig:welfpers}
\end{figure}

The media outlet provides the voter with useful information about candidates' quality, but on the other hand it generates a policy distortion where proposals drift away from the voter's bliss. This trade-off reaches its peak when the intensity of misreporting costs is relatively low: the media outlet fully reveals its private information about quality but the proposals of both candidates collapse to the outlet's preferred policy. If the quality of the elected candidate is of little importance with respect to the implemented policy, then the voter might be better off without the media outlet: in this case, both candidates would pander to the uninformed voter by offering her preferred policy, and the voter would randomly cast a ballot for one of the two candidates. The next result shows conditions under which the voter is better off without the media outlet.
\begin{corollary}\label{cor:nomedia}
	If $-\gamma(\varphi_v-\varphi_m)^2+\phi/4<0$, then there exists a finite $k'>\bar k/4$ such that the voter is strictly better off without media outlet for all $k\in\left(0,k'\right)$.
\end{corollary}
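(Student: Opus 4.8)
The plan is to compare two scalars: the voter's expected utility $W_v^{\varnothing}$ in the hypothetical game without a media outlet, and her equilibrium welfare $W_v^*(k)$ with the outlet. I will show that $W_v^{\varnothing}=0$ while $W_v^*(k)=-\gamma(\varphi_v-\varphi_m)^2+\phi/4$ on $\left(0,\bar k/4\right]$, so the stated hypothesis is exactly $W_v^*(k)<W_v^{\varnothing}$ on that interval; then I extend the strict inequality to a half-open interval $\left(0,k'\right)$ with $k'>\bar k/4$ by invoking the continuity, monotonicity, and limit of $W_v^*$ from Proposition~\ref{prop:welfare}.

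First I would pin down $W_v^{\varnothing}$. Without the outlet the voter holds only the prior, so she evaluates candidate $j$ at $-\gamma(\varphi_v-q_j)^2+\mathds{1}\{j=i\}\,\mathbb{E}[\theta]=-\gamma(\varphi_v-q_j)^2$ because $\mathbb{E}[\theta]=0$; hence she votes for whichever proposal is closer to $\varphi_v$. A standard sequential-move Downsian/undercutting argument then gives the essentially unique equilibrium outcome $q_i=q_c=\varphi_v$ with the voter randomizing (the challenger can always match or beat any $q_i\neq\varphi_v$, and the incumbent cannot do better than a tie). The implemented policy is always $\varphi_v$, contributing $0$ to the voter's payoff, and since the realized ballot is independent of $\theta$ the expected quality term equals $\tfrac12\,\mathbb{E}[\theta]=0$; therefore $W_v^{\varnothing}=0$.

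Next I would evaluate $W_v^*(k)$ on $\left(0,\bar k/4\right]$. By Proposition~\ref{prop:eqmpol}, $q^*(k)=(\varphi_m,\varphi_m)$ there, so $\tau_v(q^*(k))=\gamma(2\varphi_v-2\varphi_m)(0)=0$, and by Lemma~\ref{lemma:senderpm}(iii) the outlet reports truthfully; thus the voter elects the incumbent exactly when $\theta>0$. The implemented policy is always $\varphi_m$, contributing $-\gamma(\varphi_v-\varphi_m)^2$, and the expected quality gain is $\int_0^{\phi}\theta\,\tfrac{1}{2\phi}\,d\theta=\phi/4$. Hence $W_v^*(k)=-\gamma(\varphi_v-\varphi_m)^2+\phi/4$ for every $k\in\left(0,\bar k/4\right]$, which is strictly negative precisely under the hypothesis $-\gamma(\varphi_v-\varphi_m)^2+\phi/4<0$. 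So on $\left(0,\bar k/4\right]$ the voter is strictly better off without the outlet.

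Finally I would extend this to a half-open interval $\left(0,k'\right)$ with $k'>\bar k/4$. By Proposition~\ref{prop:welfare}, $W_v^*$ is constant on $\left(0,\bar k/4\right)$, strictly increasing on $\left[\bar k/4,\infty\right)$, and $W_v^*(k)\to\hat W_v=\phi/4>0$ as $k\to\infty$; together with continuity of $W_v^*$ (which follows from the continuity of $q^*(k)$ in $k$ and of the induced reporting behavior, in particular at the kink $\bar k/4$) and $W_v^*(\bar k/4)<0$, the intermediate value theorem yields a finite $k'>\bar k/4$ with $W_v^*(k')=0$; strict monotonicity makes $k'$ unique and gives $W_v^*(k)<0$ for all $k\in\left(\bar k/4,k'\right)$. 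Combining with the previous step, $W_v^*(k)<0=W_v^{\varnothing}$ for all $k\in\left(0,k'\right)$, which is the claim. The only delicate points are formalizing the no-media benchmark (the sequential Downsian equilibrium, which is routine) and continuity of $W_v^*$ at $\bar k/4$; the rest is direct computation plus the appeal to Proposition~\ref{prop:welfare}.
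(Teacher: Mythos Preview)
Your proposal is correct and follows essentially the same route as the paper's own proof: both compute the no-media benchmark $W_v^{\varnothing}=0$ via the median voter outcome, identify $W_v^*(k)=-\gamma(\varphi_v-\varphi_m)^2+\phi/4$ on $\left(0,\bar k/4\right]$, and then use continuity, monotonicity, and the limit $W_v^*(k)\to\phi/4$ from Proposition~\ref{prop:welfare} to obtain the finite threshold $k'>\bar k/4$. Your version is somewhat more explicit (spelling out the quality integral and invoking the intermediate value theorem), but there is no substantive difference in approach.
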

Alternatively, the voter might be better off with a ``pre-election silence'' period that forbids the delivery of policy-relevant news in the run-up to the election.\footnote{Several countries operate an ``election silence'' period where no campaigning, polling, or endorsement of candidates is allowed in the period preceding a general or presidential election.}
Conditional on the intensity of misreporting costs being low enough, the voter is better off without the media outlet if: (i) $\gamma$ is high enough, so that policies are much more important than quality; (ii) the preferred policy of the voter and the media outlet are different enough; i.e., there is a large ideological difference $|\varphi_v-\varphi_m|$; (iii) $\phi$ is small enough; that is, quality has little impact on which candidate is best. By contrast, if the costs' intensity is high enough, then the presence of the media outlet always benefits the voter. Corollary~\ref{cor:nomedia} is complementary to the similar findings of \cite{chakraborty2016} in a cheap talk setting, \cite{alonso2016} in a Bayesian persuasion framework, and \cite{boleslavsky2015information} for a non-strategic and exogenous media outlet.

\subsection{Endogenous Regulation}\label{sec:incumbent}
In the previous section, Proposition~\ref{prop:welfare} suggests that a regulator concerned about the voter's welfare should implement an intensity of misreporting costs that is as high as possible. However, regulation is often performed by actors that are neither fully detached from the political process nor have interests that are perfectly aligned with those of voters. In fact, ``fake news laws'' are mostly promulgated and discussed in parliaments where the incumbent government has substantial decisive and legislative power.\footnote{\cite{funke2018guide} provide a comprehensive list of measures recently taken by governments against online misinformation.} I first discuss the case where the incumbent candidate selects the intensity of misreporting costs.

Consider the following extension of the main model: ahead of the policy-making stage, the incumbent sets costs' intensity $k_i>0$, which is publicly observed and cannot be changed in the short run. Then, the game proceeds as described in Section~\ref{sec:model}. The incumbent, being purely office-seeking, selects $k_i$ to maximize her chances of electoral victory. From the previous analysis, we obtain that in equilibrium the incumbent wins the election with probability $\iota(k)=\frac{\phi - \tau_m(q^*(k))}{2\phi}$. I denote by $k_i^*$ the costs' intensity that maximizes $\iota(k)$. The next result shows that the incumbent candidate would select a costs' intensity that is relatively low.
\begin{lemma}\label{lemma:increg}
	The incumbent sets costs' intensity $k_i^*\in \left(0,\bar k/4\right]$, where $\iota\left(k_i^*\right)=\frac{1}{2}=\lim_{k\to\infty}\iota(k)$.
\end{lemma}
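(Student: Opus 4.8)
The strategy is to analyze the incumbent's winning probability $\iota(k)=\frac{\phi-\tau_m(q^*(k))}{2\phi}$ as a function of $k$ by substituting in the equilibrium policies from Proposition~\ref{prop:eqmpol}, and to show that $\iota$ is maximized on the region $\left(0,\bar k/4\right]$ where proposals fully converge to $\varphi_m$. The key simplification is that $\iota(k)$ is strictly decreasing in $\tau_m(q^*(k))=\gamma(2\varphi_m-q_c^*-q_i^*)(q_c^*-q_i^*)$, so maximizing $\iota$ amounts to minimizing $\tau_m(q^*(k))$ over $k$. On the region $k\in\left(0,\bar k/4\right]$ we have $q_i^*(k)=q_c^*(q_i^*(k),k)=\varphi_m$, hence $\tau_m(q^*(k))=0$ and $\iota(k)=\frac{1}{2}$ throughout this region.

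It then remains to show that $\tau_m(q^*(k))\geq 0$ for all $k$, so that the value $\frac{1}{2}$ is in fact the global maximum. I would argue this directly from the restriction $q_j^*\in[\varphi_m,\varphi_v]$ established in the discussion after Proposition~\ref{prop:eqmpol}: since $q_i^*(k)\geq q_c^*(q_i^*(k),k)$ (the incumbent is weakly more populist in the sender-preferred equilibrium for all finite $k>\bar k/4$, and equal on $\left(0,\bar k/4\right]$), the factor $(q_c^*-q_i^*)\leq 0$; and since both policies lie at or above $\varphi_m$, the factor $(2\varphi_m-q_c^*-q_i^*)\leq 0$; the product of two nonpositive terms times $\gamma>0$ gives $\tau_m(q^*(k))\geq 0$. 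Thus $\iota(k)=\frac{\phi-\tau_m(q^*(k))}{2\phi}\leq\frac{1}{2}$ for every $k$, with equality precisely when $\tau_m(q^*(k))=0$, which on $\left[\bar k/4,\infty\right)$ holds only in the limit $k\to\infty$ (where $q_i^*,q_c^*\to\varphi_v$, so $q_c^*-q_i^*\to 0$ and $\tau_m\to 0$), and on $\left(0,\bar k/4\right]$ holds identically. This yields $\iota(k_i^*)=\frac{1}{2}=\lim_{k\to\infty}\iota(k)$ and pins down $k_i^*\in\left(0,\bar k/4\right]$, with the sender-preferred tie-breaking selecting the full-convergence region over the limiting case.

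The main obstacle is being careful about which $k$ actually attain the maximum: one must verify that on the intermediate region $k\in\left(\bar k/4,\bar k\right]$ and the high region $k>\bar k$ the value $\tau_m(q^*(k))$ is \emph{strictly} positive for finite $k$, so that no finite $k>\bar k/4$ ties the value $\frac{1}{2}$. For $k\in\left(\bar k/4,\bar k\right]$ this follows because $q_c^*=\varphi_m<q_i^*(k)$ strictly (the incumbent's policy lies strictly above $\varphi_m$), making both factors strictly signed; for $k>\bar k$ one checks $q_i^*(k)>q_c^*(q_i^*(k),k)$ strictly and $q_i^*(k)<\varphi_v$ strictly (both noted as continuity/ordering facts in the text following Proposition~\ref{prop:eqmpol}), again forcing $\tau_m>0$. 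A minor additional point is to confirm $\iota$ is well-defined and equals the stated expression — this is immediate from Lemma~\ref{lemma:senderpm} and the observation that on the equilibrium path the outlet exerts full persuasion, so the outlet's endorsed candidate $\hat m(q^*(k),\theta)$ is always elected, and the incumbent wins exactly when $\theta>\tau_m(q^*(k))$, an event of probability $\frac{\phi-\tau_m(q^*(k))}{2\phi}$ under the uniform prior on $[-\phi,\phi]$.
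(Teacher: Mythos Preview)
Your proposal is correct and follows essentially the same approach as the paper: reduce maximizing $\iota(k)=\frac{\phi-\tau_m(q^*(k))}{2\phi}$ to minimizing $\tau_m(q^*(k))$, observe that $\tau_m(q^*(k))=0$ on $(0,\bar k/4]$ while $\tau_m(q^*(k))>0$ for every finite $k>\bar k/4$, and note $\tau_m(q^*(k))\to 0$ as $k\to\infty$. Your sign analysis of the two factors of $\tau_m$ is a helpful elaboration but not a different route.
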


To see the intuition behind this result, recall that the sequential nature of the policy-making process allows the challenger to offer policies that are more appealing to the media outlet relative to those offered by the incumbent (Proposition~\ref{prop:eqmpol}). By obtaining the outlet's support, the challenger enjoys a second-mover advantage as she becomes more likely to be elected than the incumbent.\footnote{The incumbency disadvantage effect that is behind the result in Lemma~\ref{lemma:increg} is present also in equilibria that are non-sender-preferred. By definition, $\iota(k)$ is the same even when the challenger does not break indifference in favor of the media outlet.} Figure~\ref{fig:incumbent} shows that the incumbent's probability of electoral victory is less than a half for all finite $k>\bar k /4$. Lemma~\ref{lemma:increg} shows that, when in charge of regulation, the incumbent eliminates the challenger's second-mover advantage by setting relatively low misreporting costs to force policy-convergence: when candidates advance the same policy, the media outlet never engages in misreporting, and thus the challenger cannot benefit from the outlet's support. Any higher intensity of misreporting costs would generate policy divergence and thus a conflict of interest that would benefit the challenger at the expense of the incumbent's probability of electoral victory.

Lemma~\ref{lemma:increg} casts a negative light on the process of regulation. From the voter's viewpoint, the incumbent could not select a worse costs' intensity: even though $k_i^*$ is such that misreporting and persuasion never take place, the induced policy distortion is maximized and the voter's welfare is at its minimum. The resulting outcome is \emph{as if} the media outlet could directly decide upon which candidate gets elected and which policy is implemented. Moreover, for such a low intensity of misreporting costs $k_i^*$ the voter might be better off without the media outlet at all (Corollary~\ref{cor:nomedia}). The situation is better, but still far from ideal, when the challenger is in charge of selecting the costs' intensity: in this case, the challenger maximizes her chances of electoral victory by selecting $k_c^*=\argmin_{k\in\mathbb{R}_{+}}\iota(k)>\bar k$. This level of costs' intensity generates policy divergence and thus a positive persuasion rate $\chi(k)$. However, the voter is better off with $k^*_c$ than with $k^*_i$ because of a reduction in policy distortion. As long as candidates have an influence over the regulatory process, their office-seeking motivation results in a pull for implementing costs' intensities that are lower than the voter's optimal.

Lemma~\ref{lemma:increg} also shows that the incumbent's probability of electoral victory gets close to half for arbitrarily large intensities of misreporting costs. By selecting costs' intensity $k_i^*\in\left(0,\bar k/4\right]$, the incumbent deliberately compromises the voter's welfare in order to increase her chances of winning just by an arbitrarily small amount. This strategy is arguably unappealing to voters, and it would be fair to assume that such behavior might eventually backfire with a substantial drop of consensus. On the other hand, interventions that impose arbitrarily high misreporting costs might be frowned upon if their implementation costs are large or when such stringency is perceived as a potential threat to freedom of speech. To incorporate these realistic elements in the present analysis, consider the following alternative extension: the voter has a preferred costs' intensity $k_v$ that is relatively high but finite, $k_v\geq k_c^*$; the incumbent's probability of electoral victory is $\hat\iota(k_i)=\iota(k_i)+\nu(k_i)$, where $\nu(\cdot)$ indicates how the incumbent's choice of $k_i$ affects her chances of winning the election. Suppose that $\nu(\cdot)$ is maximized for $k_i=k_v$, continuously differentiable in $k_i$, and $\nu(k')>\nu(k'')$ for all $k',k''$ such that $|k_v-k'|<|k_v-k''|$. For concreteness, say that $\nu(k_i)=y+x\cdot \phi(k_i;k_v,\sigma)$, where $y\in\mathbb{R}$, $x>0$, and $\phi(k;k_v,\sigma)$ is the probability density function of a normal distribution with mean $k_v$ and standard deviation\footnote{Clearly, parameters $y$, $x$, and $\sigma$ must respect $\hat\iota(k_i)\in[0,1]$ for every $k_i>0$.} $\sigma$. Even though I do not endogenize the mechanism through which the candidates' probability of victory is affected by the process of regulation, this alternative extension can offer some additional insights. When the choice of $k_i$ does not affect much the incumbent's chances of victory (i.e., when $x$ is low and $\sigma$ is high), the incumbent selects $k_i^*\approx\bar k/4$ as in the baseline extension; otherwise, the incumbent's optimal choice is even higher\footnote{In this last case we obtain that $k_i^*>k_v$ because $\frac{\partial \nu(k_i)}{\partial k_i}\big|_{\substack{k=k_v}}=0$ and $\frac{\partial \iota(k_i)}{\partial k_i}>0$ for all $k_i>k_v\geq k_c^*$. When setting an intensity of misreporting costs that is marginally higher than $k_v$, the incumbent increases her chances of victory by inducing more similar policies and thus reducing the conflict of interest and persuasion. For a similar reason, in this case the challenger would select an intensity of misreporting costs that is still lower than $k_v$.} than the costs' intensity preferred by the voter, i.e., $k_i^*>k_v$. Figure~\ref{fig:incumbent} provides two graphical examples of this additional extension.
\begin{figure}
	\centering
	\includegraphics[width=0.7\linewidth,trim={0 5.4cm 0 0},clip]{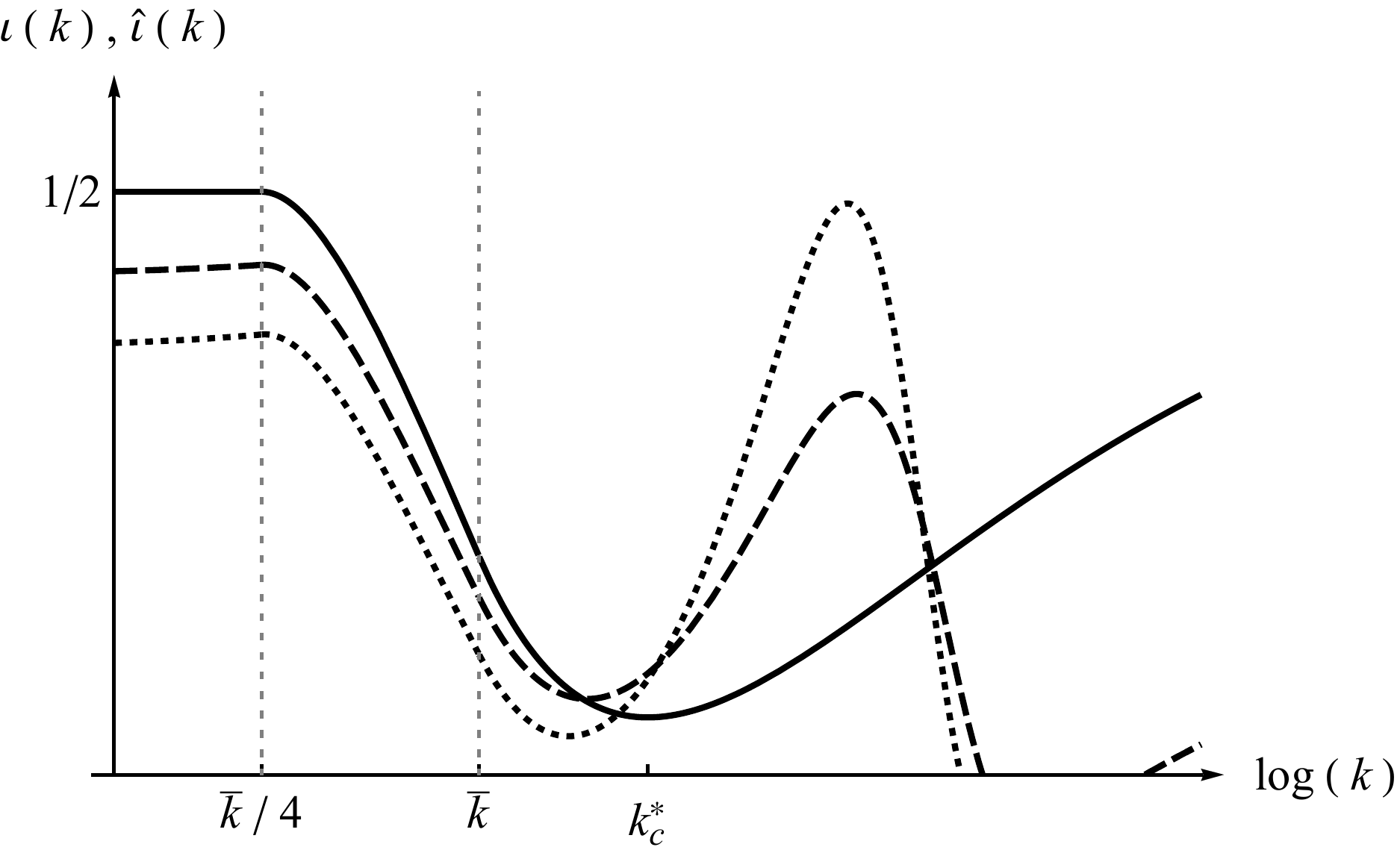}
	\caption{The incumbent's probability of electoral victory for different choices of costs' intensity. The black line represents $\iota(k)$, while the dashed and dotted lines represent $\hat\iota(k)=\iota(k)+\nu(k)$, where $\nu(k)=-.006+.2\cdot \phi(k;k_v,\sigma)$. In the dotted line, $\phi(\cdot)$ has a standard deviation of $\sigma=6$ and $\hat \iota(k)$ has a global maximum at $k^*\approx 10.5>k_v=10$; in the dashed line, $\phi(\cdot)$ has a standard deviation of $\sigma=8$ and $\hat \iota(k)$ has a global maximum at $k^*\approx \bar k/4=.25$ As the intensity of misreporting costs $k$ grows arbitrarily large, $\iota (k)$ monotonically converges to $1/2$.}
	\label{fig:incumbent}
\end{figure}

The above analysis suggests that if an electorate is highly concerned and responsive to the problem of ``fake news,'' then incumbent governments might push for extreme and disproportionate interventions; otherwise, regulation might be overly lenient. This result seems to fit with the dual reaction to recent efforts made by governments against fake news and misinformation. In some countries there is a growing consensus that governments' efforts are insufficient. With regards to the US, ``calls for regulation without censorship have been made by many people and many groups — it’s just that there is simply no political will to make a real change'' \citep{applebaum_2018}. On the other hand, there is a concurrent concern that some interventions are excessively stringent and can be exploited by governments for purely instrumental reasons. For example, the 2018 French anti-misinformation law endorsed by President Macron has received a pushback from the opposition party based on the argument that the law falls short of the principle of proportional justice. ``As regards the French solution, there seems to be a clear risk that an incumbent government constrains the freedom of expression of its opponents'' \citep{alemanno_2018}. Similarly, the German Network Enforcement Act (or NetzDG) has been criticized by Reporters without Borders and the UN Special Rapporteur on Freedom of Opinion for violating the right to freedom of the press and endangering human rights. The Act, which imposes fines of up to \euro50 million, has been recalled for revision because too much news content was blocked. ``Even the minister of justice -- who helped author the NetzDG -- had his tweets censored'' \citep{funke2018guide}.

\section{Discussion and Extensions}\label{sec:extensions}

In this section, I discuss some assumptions and extensions of the model to examine the robustness of the results. Formal proofs are relegated to the Supplementary Appendix~\ref{app:suppl}.

{\bf Media Competition.} A monopoly best describes those cases where media outlets can have exclusive possession of policy-relevant news. Interventions that alter misreporting costs may discipline these media outlets, whereas interventions that change concentration levels do not. Whilst the analysis of a monopolistic news market is an important and natural first step to study interventions that change misreporting costs, it would be interesting to also study competitive environments. In \cite{vaccari2021competition}, I extend the analysis of the communication subgame (Section~\ref{sec:commsub}) to the case of two competing media outlets that have common information and conflicting interests. The addition of a policy-making stage in a competitive environment is a challenging task because candidates can strategically create asymmetries in the conflict of interest between the media outlets and the voter. In equilibrium, the voter allocates the burden of proof across outlets accounting for their asymmetries. I leave the endeavor of endogenizing policy choices within a model of communication with multiple media outlets for future research.

{\bf Political Gains.} If $\xi$ are resources that the winning candidate inefficiently allocate to the media outlet as a reward for her endorsement, then we should take this into account when calculating the voter's welfare. Recall that, since equilibrium policies $q^*(k)$ satisfy condition~\eqref{eq:fullpers} for every finite $k$, the media outlet always exerts full persuasion and the candidate endorsed by the outlet is always elected. Therefore, we can simply subtract $\xi$ from the equilibrium welfare of the voter $W_v^*$, and all results would be unchanged.

{\bf Regulation Costs.} In this paper, I abstract from regulation costs. However, carrying out interventions typically comes at a cost that should be accounted for when computing the voter's welfare. Since results about welfare would be highly dependent from the assumptions we make on regulation costs, I leave these costs out from the current analysis.\footnote{Moreover, it is not clear which functional form would be more appropriate to describe regulation costs. For example, interventions that \emph{decrease} the costs' intensity $k$ may free up resources but they would also require time and effort.} Intuitively, assuming sufficiently high regulation costs would break down the monotonicity result in Proposition~\ref{prop:welfare}. However, as noted before, Proposition~\ref{prop:welfare} implicitly shows that lenient interventions are detrimental for the voter's welfare when misreporting costs $k$ are sufficiently low. This take holds true as long as regulation costs are positive, no matter their shape and how small they are.

{\bf Ideological Media.} In the model presented in Section~\ref{sec:model}, the media outlet obtains fixed policy-independent gains $\xi>0$ when the candidate she endorses ends up winning the election. Assuming fixed gains has two advantages: first, it simplifies the analysis and allows us to obtain closed-form solutions; second, it is an adequate approach to model a media organization that is ideologically biased (because, e.g., of the ownership or editor) but that is not affected by the candidates' policies (such as, e.g., abortion laws or gay marriage).\footnote{In these cases, an ideologically biased editor is budget constrained in that she cannot use more resources than the fixed political gains $\xi$ to misreport information. Moreover, the private gains that an ideological editor or journalist obtains from policies are likely to be small or negligible next to the outlet's political gains $\xi$.} However, it is interesting to consider a variant of the model where the media outlet's payoff is directly affected by the policies in a similar way as the voter's. In this case, the outlet's gains from inducing the election of her favorite candidate depend on the candidates' policies $q$ and the outlet's bliss $\varphi_m$. I formally study this case in Appendix~\ref{app:suppl}.

Given the incumbent's proposal $q_i$, the challenger can still garner a second-mover advantage by catering to the media outlet. Importantly, imitating the incumbent is never a best reply as long as the misreporting costs are positive and finite. By contrast, when $k\to+\infty$ both policies converge to the voter's bliss $\varphi_v$, and when $k\to 0^+$ both policies converge to $\varphi_m$. Therefore, as in the main analysis, policy divergence occurs for intermediate levels of misreporting costs, and convergence occurs otherwise. This implies that the qualitative results concerning policy distortion and the rate of persuasion carry through even when the outlet's gains are policy-dependent. Levels of misreporting costs that are close to zero lead to large policy distortions but a small conflict between the outlet and the media, and therefore less misreporting and persuasion. Higher misreporting costs decrease the policy distortion but generate a larger conflict of interest, which yields more misreporting and persuasion. Since equilibrium policies are now defined implicitly, it is now more difficult to analyze the voter's welfare, which I leave for future research.

{\bf Simultaneous Policy-making.} I assume that candidates propose policies sequentially. The sequential nature of the policy-making process is important as it shows that the challenger enjoys a second-mover advantage. As a result, the incumbent can use regulation to eliminate the challenger's advantage at the expense of the voter. However, most qualitative results hold even when candidates propose their policies simultaneously. In Lemma~\ref{lemma:simultaneous}, I show that in this case policy convergence to $(\varphi_m,\varphi_m)$ occurs when $k\leq \bar k/4$, exactly as in the case where policy-making is sequential. Moreover, for intermediate values of $k$ the equilibrium of the policy-making stage must be in mixed strategies (see Lemma~\ref{lemma:nopure}). This implies that for intermediate $k$ policy divergence occurs with positive probability.\footnote{In a similar setting with simultaneous policy proposals and cheap talk communication, \cite{chakraborty2016} characterize the candidates' mixed-strategy equilibrium for different levels of conflict of interest between the media outlet and the voter.} Finally, for $k\to +\infty$ we still obtain convergence to $(\varphi_v,\varphi_v)$. Therefore, the main qualitative results about policy distortions and the rate of persuasion hold even when candidates propose their policies simultaneously rather than sequentially.

{\bf Equilibrium Selection.} Conditions that ensure equilibrium uniqueness are important to perform exercises of comparative statics. However, as it is typical in communication games, here there is not a unique equilibrium. In this paper, I mainly focus on the perfect Bayesian equilibrium that is robust to the Intuitive Criterion and preferred by the sender. There are several reasons why this is a sensible choice: first, the sender-preferred equilibrium constitutes also the voter's worse case scenario, which is key for the robust approach to policy analysis \citep{hansen2008robustness}; second, it is also the unique perfect sequential equilibrium \citep{grossman1986perfect}; third, among the perfect Bayesian equilibria that survive the Intuitive Criterion, the sender-preferred is the only one that is undefeated \citep{mailath1993belief};\footnote{See Lemmata~\ref{lemma:perfect} and \ref{lemma:undefeated} in Appendix \ref{app:suppl}.} finally, it is ``intuitive'' \citep{cho1987}. Therefore, the analysis is centered on a focal and important equilibrium that possesses a number of appealing qualities. In addition, the results about welfare would carry through even if we were to consider the whole class of perfect Bayesian equilibria that survive the Intuitive Criterion and focus on sets of equilibrium payoffs.\footnote{To see this, notice that the multiplicity of PBE in the communication subgame (Proposition~\ref{prop:monopoly} in Appendix~\ref{app:communication}) yields a convex set $\mathcal{W}(k)$ of payoffs that the voter can obtain in equilibrium (Corollary~\ref{cor:payoffset} in Appendix~\ref{app:welfare}). Since changes in the intensity of misreporting costs $k$ affect only the lower bound of $\mathcal{W}(k)$, the focus on the voter's worst-case scenario is without loss of generality.} In Appendix~\ref{app:communication}, I show all the perfect Bayesian equilibria of the communication subgame that survive the Intuitive Criterion (Proposition~\ref{prop:monopoly}).

{\bf Tie-breaking Rule.} The ``sender-preferred'' part of the equilibrium concept implies that, when indifferent, the challenger best replies to the incumbent with a policy that is closer to the outlet's bliss $\varphi_m$. From Section~\ref{sec:policymaking} we know that when $k\in(0,\bar k/4)$ the challenger strictly prefers to propose $\varphi_m$, while for finite $k\geq \bar k/4$ she is indifferent between proposing the voter's bliss $\varphi_v$ or a biased policy. Therefore, the main qualitative result would carry through even if  the challenger were to break indifference in favor to the voter rather than the media outlet: in this case, policy convergence occurs at $k\in(0,\bar k/4)$, and then  divergence occurs due to a discontinuity at $k=\bar k/4$, where also the voter's welfare and the persuasion rate would jump to higher values. As $k$ increases, the incumbent's policy becomes increasingly populist while the challenger keeps proposing $\varphi_v$. It follows that, as $k$ increases further, the voter's welfare increases while the persuasion rate decreases.

{\bf Misreporting Costs.} I assume that the media outlet incurs misreporting costs that are continuously increasing with the size of the lie. This assumption is grounded on the following observations: first, credibly reporting a gross distortion of the truth requires more time and effort than reporting just a slight stretch of the truth; second, ``bigger lies'' are more likely to be detected and also lead to a larger reputation loss. More discussion on misreporting costs is provided by \cite{kartik2009} and \cite{kartik2007}, among others.

The use of continuous misreporting costs (and thus continuous report and state spaces) is not only a sensible assumption, but it is also important for our qualitative findings. For example, consider a discrete variant of the model with three states (high, medium, low) and a fixed lying cost $k>0$ that the outlet incurs when $r\neq\theta$. In this case, small changes in $k$ would, in general, have no effect. When accounting for regulation costs, this implies that lenient interventions necessarily damage the voter's welfare. I show that this result is not a byproduct of having a discrete state space and fixed misreporting costs.

{\bf Information Structure.} The distributional assumption on $\theta$ implies that, during the policy-making stage, both the candidates and the voter have uninformative priors about the candidates' \emph{relative} quality. Information about $\theta$ is revealed to the media outlet only after the policy-making stage. There are two main interpretation of this information structure: first, it is political parties that commit to policies and then nominate their own candidates. The parties cannot acquire every piece of information about the nominees' quality, which is often hidden on purpose; second, information about the candidates' quality is publicly known, but it is not sure how their traits will advantage them in the future. In this case, $\theta$ represents the state of the world in which the winning candidate will have to operate. For more discussion about this type of information structure, see \cite{chakraborty2016}. 

\section{Conclusion}\label{sec:conclusion}
 
This article studies the voter's welfare in relation to interventions that affect media outlets' misreporting costs. The results provide a number of policy implications. As intuition would suggest, interventions that increase the costs of misreporting information never make the voter worse off. However, lenient regulatory efforts might be futile and thus wasteful when accounting for their implementation costs. In these cases, a regulator should either do nothing or enforce substantial measures. I provide conditions under which the voter is better off without a media outlet or with a period of pre-election silence.

The presence of an influential and biased media outlet generates both policy and informational distortions. As a result, higher misreporting costs might be associated with more persuasion and a worse selection of candidates, but they can still increase the voter's welfare because of a reduction in policy distortions. Therefore, regulatory efforts such as ``fake news laws'' ought not to be judged solely by their impact on misreporting behavior. This type of intervention should not be designed with the objective of reducing or eliminating misinformation: full revelation can be achieved with relatively low misreporting costs, but the induced policy distortion would minimize the voter's welfare.

Importantly, electoral incentives skew the process of regulation as politicians strategically choose interventions to maximize their own chances of electoral victory. For purely instrumental reasons, the incumbent government deliberately pursues interventions that minimize the voter's welfare. These kind of friction in the regulatory process persists even when the challenger is in charge of regulation and when the candidates' probability of electoral victory is affected in a reasonable way by which intervention they choose to advance.

This paper is a first important step toward the development of a sensible theory of regulation in news markets. I show that the process of regulation is problematic even when politicians have the option to implement an ``ideal intervention'' that maximizes the voter's welfare at no cost and without generating an additional agency problem. However, as discussed at the end of Section~\ref{sec:welfare}, there is a widespread concern that fake news laws might infringe free speech rights. It is often difficult to publicly assess and agree upon what is the underlying ``truth'' behind news reports, and thus governments can use harsh interventions to capture the media.\footnote{At the end of Section~\ref{sec:welfare}, I incorporate the idea that voters have a distaste for harsh interventions, e.g., because of excessive implementation costs. Alternatively, voters might be afraid that the government can exploit regulation to control information. However, I do not explicitly model the agency problem between the regulator and the voter, and assume as exogenous the process by which the candidates' probability of electoral victory is affected by their choice of regulation.} Proceeding from this paper's findings, the next step is to take into consideration the possibility that regulation can be instrumentally used by political actors to directly control information.\footnote{For example, regulators might instrumentally punish a media outlet that delivers unwelcome news on a false account of misinformation. Note that the type of intervention considered in the present paper does not necessarily require an assessment of the media outlet's private information. Measures such as awareness campaigns aimed at informing people about misinformation and targeted educational programs make it more difficult and thus more costly for news providers to credibly misreport information.} I leave this for future research.

\appendix

\section{Appendix}\label{sec:app}

\subsection{The Communication Subgame}\label{app:communication}

The communication subgame $\hat\Gamma$ starts after the policy-making stage, where both candidates make binding commitments to policy proposals. In this section, I assume that the proposed policies $q=(q_i,q_c)$ are such that $\tau_m(q)<\tau_v(q)$. Since policies are fixed, in this section I simplify the notation by using $\tau_j\equiv\tau_j(q)$, $\rho(\theta)\equiv\rho(\theta,q)$, $p(\theta|r,q)\equiv p(\theta|r)$, and $\beta(r)\equiv\beta(r,q)$. I use the term ``generic equilibrium" to denote a perfect Bayesian equilibrium of the communication subgame $\hat\Gamma$ that is robust to the Intuitive Criterion \citep{cho1987}. A ``sender-preferred equilibrium" of the communication subgame $\hat\Gamma$ is the generic equilibrium preferred by the media outlet, as defined in Section~\ref{sec:model}.

Proposition~\ref{prop:monopoly} builds on Lemmata~\ref{lemma:monopmonot} to \ref{lemma:monopstrat} and shows all the generic equilibria\footnote{A sufficient condition on the state space for the existence of all generic equilibria in Proposition~\ref{prop:monopoly} is, for proposals $q$ such that $\tau_v(q)>\tau_m(q)$, $\phi\geq\max\left\{ \tau_v(q)+\sqrt{\xi/k},  -\tau_m(q) \right\}$. In this section I assume that such a condition is always satisfied.} of $\hat\Gamma$. The proofs of Proposition~\ref{prop:monopoly} and of all its supporting lemmata use a general misreporting cost function $kC(r,\theta)$, where $k>0$ and $C(\cdot,\cdot)$ is continuous on $\mathbb{R}\times\Theta$ with $C(r,\theta)\geq 0$ for all $r\in\mathbb{R}$ and $\theta\in\Theta$, $C(x,x)=0$ for all $x\in\Theta$. The cost function $C(\cdot)$ satisfies $C(r,\theta)>C(r',\theta)$ if $|r-\theta|>|r'-\theta|$ for all $\theta\in\Theta$, and $C(r,\theta)>C(r,\theta')$ if $|r-\theta|>|r-\theta'|$ for all $r\in\mathbb{R}$. I redefine the functions $l(r)$ and $h(r)$ for a general cost $C(r,\theta)$ as follows: for a $r>\tau_m$, $l(r)=\max\left\{\tau_m,\min\left\{ \theta | k C(r,\theta)=\xi \right\}\right\}$; for a $r<\tau_m$, $h(r)=\min\left\{\tau_m,\max\left\{ \theta | k C(r,\theta)=\xi \right\}\right\}$. For consistency with the rest of the paper, Proposition~\ref{prop:monopoly} and Lemma~\ref{lemma:senderpm} are expressed in terms of the cost function $C(r,\theta)=(r-\theta)^2$.

The set of all the voter's pure strategy best responses to a report $r$ and posterior beliefs $p(\cdot|r)$ such that $\int_{\theta\in T} p(\theta|r)d\theta=1$ is defined as\footnote{For $T=\varnothing$, I set $B(\varnothing,r)=B(\Theta,r)$.}
\begin{displaymath}
B(T,r)=\bigcup_{p:\int_{T} p(\theta|r)d\theta=1} \argmax_{b\in\{i,c\}} \int_{\theta \in \Theta}p(\theta|r)u_v(b,\theta,{q})d\theta.
\end{displaymath} 
Fix an equilibrium outcome and let $u^*_m(\theta)$ denote the outlet's expected equilibrium payoff in state $\theta$. The set of states for which delivering report $r$ is not equilibrium-dominated for the outlet is
\begin{displaymath}
J(r)=\left\{\theta \in \Theta \Big| u^*_m(\theta) \leq \max_{b \in B(\Theta,r)}u_m(r,b,\theta,{q})\right\}.
\end{displaymath}
An equilibrium does not survive the Intuitive Criterion refinement if there exists a state $\theta'\in\Theta$ such that, for some report $r'$, $u_m^*(\theta')<\min_{b \in B(J(r'),r')}u_m(r',b,\theta',{q})$.

In Lemma~\ref{lemma:monopstrat}, I use the following notation to denote the limits of the reporting rule $\rho(\cdot)$ as $\theta$ approaches state $t$ from, respectively, above and below: $\rho^+(t)=\lim_{\theta\to t^+}\rho(\theta)$ and $\rho^-(t)=\lim_{\theta\to t^-}\rho(\theta)$.

\begin{Alemma}\label{lemma:monopmonot}
	In a generic equilibrium of $\hat\Gamma$, $\rho(\theta)$ is non-decreasing in $\theta<\tau_m$ and $\theta>\tau_m$.
\end{Alemma}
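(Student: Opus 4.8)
The plan is to establish the monotonicity of $\rho(\cdot)$ on each side of $\tau_m$ by a revealed-preference (exchange) argument exploiting single crossing. Fix a generic equilibrium and let $\pi_i(r)\in[0,1]$ denote the probability with which the incumbent is elected after report $r$, as induced by the voter's sequentially rational (possibly mixed) continuation strategy and her beliefs. Because this section assumes $\tau_m<\tau_v$, we have $\hat m(\theta,q)=i$ exactly when $\theta>\tau_m$, so the outlet's payoff from sending $r$ in state $\theta$ equals $\xi\pi_i(r)-kC(r,\theta)$ whenever $\theta>\tau_m$ and $\xi\bigl(1-\pi_i(r)\bigr)-kC(r,\theta)$ whenever $\theta<\tau_m$. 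The crucial feature is that, within each of these two regions, the benefit term is a function of $r$ alone.

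Next I take two states $\theta<\theta'$ lying on the same side of $\tau_m$, say $\tau_m<\theta<\theta'$, and set $r=\rho(\theta)$ and $r'=\rho(\theta')$. Sequential rationality of the outlet in states $\theta$ and $\theta'$ yields $\xi\pi_i(r)-kC(r,\theta)\ge\xi\pi_i(r')-kC(r',\theta)$ and $\xi\pi_i(r')-kC(r',\theta')\ge\xi\pi_i(r)-kC(r,\theta')$. Adding these two inequalities, the $\xi\pi_i(\cdot)$ terms cancel and I am left with $C(r,\theta)+C(r',\theta')\le C(r',\theta)+C(r,\theta')$. For the quadratic cost $C(r,\theta)=(r-\theta)^2$ the right-hand side minus the left-hand side equals $2(r'-r)(\theta'-\theta)$, so this inequality forces $(r'-r)(\theta'-\theta)\ge 0$; since $\theta'>\theta$ I conclude $r\le r'$, i.e.\ $\rho(\theta)\le\rho(\theta')$. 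The region $\theta<\theta'<\tau_m$ is handled identically: there the common benefit term is $\xi\bigl(1-\pi_i(r)\bigr)$, which again cancels upon adding the two optimality conditions, leaving the same cost inequality; the same revealed-preference step is run for the general cost $C$ of Appendix~\ref{app:communication}, using its strict monotonicity in $|r-\theta|$ in place of the explicit computation.

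I expect the only delicate point to be the bookkeeping around $\tau_m$: the argument cannot be carried across $\tau_m$, because there the endorsed candidate $\hat m(\theta,q)$ flips and the benefit term switches from $\xi\pi_i(r)$ to $\xi\bigl(1-\pi_i(r)\bigr)$, so adding the two optimality conditions no longer cancels it — this is exactly why the lemma asserts monotonicity only on $\{\theta<\tau_m\}$ and on $\{\theta>\tau_m\}$ separately, not globally. Apart from this, the argument rests only on the outlet's sequential rationality in every state (so that the two optimality inequalities are available for every same-side pair $\theta,\theta'$) together with the single-crossing structure of $C$ in $(r,\theta)$; the role of genericity here is only that we are working within the ambient equilibrium concept, and the Intuitive Criterion is not otherwise invoked for this step.
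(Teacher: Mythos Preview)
Your revealed-preference argument is clean and correct for the quadratic cost: adding the two incentive constraints and cancelling the common benefit term yields exactly the increasing-differences inequality that forces monotonicity, and your localization to each half-line (benefit $\xi\pi_i(r)$ above $\tau_m$, benefit $\xi(1-\pi_i(r))$ below) is the right way to keep the benefit term a function of $r$ alone. This is a genuinely different route from the paper's, which argues by contradiction via a case split on $\bigl(\beta(\rho(\theta')),\beta(\rho(\theta''))\bigr)\in\{i,c\}^2$.

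The gap is your extension to the general cost $C$ of Appendix~\ref{app:communication}. The summed-IC inequality $C(r,\theta)+C(r',\theta')\le C(r',\theta)+C(r,\theta')$ implies $r\le r'$ only if $C$ has increasing differences in $(r,\theta)$, and the paper's assumptions on $C$ (strict monotonicity in $|r-\theta|$ in each argument) do not deliver this. For instance, $C(r,\theta)=\sqrt{|r-\theta|}$ satisfies all of the stated conditions, yet with $\theta=0$, $\theta'=1$, $r=2$, $r'=1$ the summed inequality reads $\sqrt{2}+0\le 1+1$ while $r>r'$. So ``strict monotonicity in $|r-\theta|$'' is not a substitute for the explicit quadratic computation here. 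The paper's case analysis sidesteps this by never summing the constraints: when the two reports induce the same winner, the \emph{individual} IC conditions give $|r-\theta|\le|r'-\theta|$ and $|r'-\theta'|\le|r-\theta'|$, which together with $\theta<\theta'$ force $r\le r'$ directly; when they induce different winners, the type that ends up with its disfavoured candidate must be reporting truthfully (else it could strictly reduce cost by reporting its own state), which pins down one of $r,r'$ and lets a single cost comparison finish the contradiction. That second ingredient---the free availability of truth-telling, together with the binary action space---is what carries the lemma for non-supermodular $C$, and it is missing from your sketch.
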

\begin{proof}
	Consider a generic equilibrium and suppose that there are two states $\theta''>\theta'>\tau_m$ such that $\rho(\theta')>\rho(\theta'')$. We can rule out that $\beta(\rho(\theta'))=\beta(\rho(\theta''))=c$, as in such case the equilibrium would prescribe $\rho(\theta')=\theta'<\theta''=\rho(\theta'')$. If $\beta(\rho(\theta'))=\beta(\rho(\theta''))=i$, then in at least one of the two states $\theta',\theta''$ the outlet could profitably deviate by delivering the report prescribed in the other state. Consider the case where $\beta(\rho(\theta'))=i$ ($c$) and $\beta(\rho(\theta''))=c$ ($i$). In equilibrium, it has to be that $\rho(\theta'')=\theta''$ ($\rho(\theta')=\theta'$). Given $\rho(\theta')>\rho(\theta'')=\theta''>\theta'$ ($\theta''>\theta'=\rho(\theta')>\rho(\theta'')$) and $C(\rho(\theta'),\theta'')<C(\rho(\theta'),\theta')$ ($C(\rho(\theta''),\theta'')>C(\rho(\theta''),\theta')$), the outlet could profitably deviate in state $\theta''$ ($\theta'$) by reporting $\rho(\theta')$ ($\rho(\theta'')$). A similar argument applies for any two states $\theta'<\theta''<\tau_m$, completing the proof.
\end{proof}

\begin{Alemma}\label{lemma:eqmcont}
	In a generic equilibrium of $\hat\Gamma$, if $\rho(\theta)$ is strictly monotonic and continuous in an open interval, then $\rho(\theta)=\theta$ for all $\theta$ in such an interval.
\end{Alemma}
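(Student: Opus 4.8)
The plan is to prove the contrapositive-style statement directly: assume $\rho$ is strictly monotonic and continuous on some open interval $I$, and show that $\rho$ cannot misreport anywhere on $I$, i.e.\ $\rho(\theta)=\theta$ for all $\theta\in I$. Throughout I will use Lemma~\ref{lemma:monopmonot} to know that $\rho$ is non-decreasing on each side of $\tau_m$, so strict monotonicity on $I$ means strictly increasing; without loss of generality I may also shrink $I$ so that it lies entirely on one side of $\tau_m$ (either $I\subseteq(-\infty,\tau_m)$ or $I\subseteq(\tau_m,\infty)$), since a single point removed from an open interval leaves an open interval on at least one side and the claim is local.

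The core idea is an envelope/indifference argument. On $I$, since $\rho$ is strictly increasing and continuous, it is injective and its image $\rho(I)$ is an open interval of on-path reports, each observed in exactly one state. Hence on $\rho(I)$ the voter's posterior is degenerate and her ballot $\beta(\rho(\theta))$ is locally constant in $\theta$ (it can only change at the at most one $\theta$ where $\rho(\theta)$ crosses the voter's indifference report, which we can also exclude by shrinking $I$). So on $I$ the outlet's equilibrium payoff is $u_m^*(\theta)=\mathds{1}\{\beta(\rho(\theta))=\hat m(\theta,q)\}\xi - kC(\rho(\theta),\theta)$ with the indicator constant; call that constant value $\bar\xi\in\{0,\xi\}$. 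Now take any two states $\theta',\theta''\in I$. Incentive compatibility forbids the outlet in state $\theta'$ from deviating to report $\rho(\theta'')$ (which induces the same ballot, since $\rho(\theta'')\in\rho(I)$ and the ballot is constant there), so $-kC(\rho(\theta'),\theta')\ge -kC(\rho(\theta''),\theta')$, i.e.\ $C(\rho(\theta'),\theta')\le C(\rho(\theta''),\theta')$. Symmetrically, $C(\rho(\theta''),\theta'')\le C(\rho(\theta'),\theta'')$. Thus each state's equilibrium report minimizes misreporting cost among the reports used on $I$, given that state.

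The final step deduces from these minimization conditions that $\rho(\theta)=\theta$ on $I$. Suppose not: say there is $\theta_0\in I$ with $\rho(\theta_0)>\theta_0$ (the case $\rho(\theta_0)<\theta_0$ is symmetric). By strict monotonicity and continuity of $\rho$ on $I$, there is a nearby $\theta_1\in I$ with $\theta_0<\theta_1$ and still $\rho(\theta_0)>\theta_1$ — concretely, pick $\theta_1$ strictly between $\theta_0$ and $\min\{\rho(\theta_0),\sup I\}$ close enough to $\theta_0$; then $\theta_1\in I$ if $I$ is shrunk around $\theta_0$. Now compare the reports $\rho(\theta_0)$ and $\rho(\theta_1)$ from the standpoint of state $\theta_1$: we have $\theta_1<\rho(\theta_1)<\rho(\theta_0)$ (using $\rho(\theta_1)>\rho(\theta_0)>\theta_1$? — careful: monotonicity gives $\rho(\theta_1)>\rho(\theta_0)$, so the ordering is $\theta_1<\rho(\theta_0)<\rho(\theta_1)$), so $\rho(\theta_1)$ is strictly farther from $\theta_1$ than $\rho(\theta_0)$ is, whence by the cost assumption $C(\rho(\theta_1),\theta_1)>C(\rho(\theta_0),\theta_1)$ — contradicting the minimization condition $C(\rho(\theta_1),\theta_1)\le C(\rho(\theta_0),\theta_1)$ derived above. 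An entirely parallel argument handles $\rho(\theta_0)<\theta_0$ using a point $\theta_1<\theta_0$. Hence $\rho(\theta)=\theta$ everywhere on the (shrunk, and then by a covering argument the original) interval.

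The main obstacle I anticipate is bookkeeping around the voter's ballot: I need the induced action $\beta(\rho(\cdot))$ to be genuinely constant on the subinterval I work with, so that the indicator term cancels in the IC comparison and the outlet's problem reduces to pure cost minimization among on-path reports. This requires arguing that the reports on $I$ are on-path and singleton-valued (immediate from injectivity of a continuous strictly monotone $\rho$), that the voter's posterior is therefore a point mass, and that the at most one threshold-crossing report where $\beta$ could flip can be excised by passing to a subinterval — then recovering the full interval by noting the claim $\rho(\theta)=\theta$ is closed and was shown on a dense open subset, or simply by repeating the local argument at every point. The cost-comparison steps themselves are routine given the stated monotonicity properties of $C(r,\theta)$ in $|r-\theta|$.
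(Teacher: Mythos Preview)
Your argument is essentially the paper's: fix a state that misreports, find a nearby state sending a cheaper report that induces the same ballot, and exhibit a profitable deviation. The cost-comparison step is correct once you sort out the ordering (which you do).

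There is one small gap worth flagging. You assert that each report in $\rho(I)$ is ``observed in exactly one state,'' hence the posterior is degenerate and the ballot is locally constant. Injectivity of $\rho$ on $I$ gives you uniqueness among states in $I$, but Lemma~\ref{lemma:monopmonot} does not link the values of $\rho$ across the two sides of $\tau_m$, so a state $\theta'$ on the \emph{other} side of $\tau_m$ could in principle also send a report in $\rho(I)$; the posterior need not be a point mass. The paper avoids posteriors entirely and instead uses the observation you implicitly need anyway: the outlet never pays a positive misreporting cost to implement its \emph{less} preferred alternative (it could report truthfully for free). Hence any misreporting type must be obtaining $\hat m(\theta,q)$. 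Since your $\theta_0$ and $\theta_1$ both misreport (by continuity, $\rho(\theta_1)>\theta_1$ once $\theta_1$ is close enough to $\theta_0$) and lie on the same side of $\tau_m$, they share the same preferred candidate, so $\beta(\rho(\theta_0))=\beta(\rho(\theta_1))$ directly. Replace your degenerate-posterior paragraph with this one line and your proof coincides with the paper's.
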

\begin{proof}
	Consider a generic equilibrium and suppose that the reporting rule $\rho(\cdot)$ is strictly increasing (decreasing) and continuous in an open interval $(a,b)$, but $\rho(\theta)>\theta$ for some $\theta\in(a,b)$. There always exist an $\epsilon>0$ such that the media outlet prefers the same alternative in both states $\theta$ and $\theta-\epsilon$, and $\theta<\rho(\theta-\epsilon)<\rho(\theta)$ (resp. $\rho(\theta-\epsilon)>\rho(\theta)>\theta$). The media outlet never pays misreporting costs to implement its least preferred alternative; therefore, it must be that $\beta(\rho(\theta))=\beta(\rho(\theta-\epsilon))$. Since $C(\rho(\theta-\epsilon),\theta)<C(\rho(\theta),\theta)$ (resp. $C(\rho(\theta),\theta-\epsilon)<C(\rho(\theta-\epsilon),\theta-\epsilon)$), the media outlet has a profitable deviation in state $\theta$ (resp. $\theta-\epsilon$), contradicting that $\rho(\cdot)$ is in equilibrium.
\end{proof}

\begin{Alemma}\label{lemma:monoptruth}
	In a generic equilibrium of $\hat\Gamma$, $\rho(\theta)=\theta$ for almost every $\theta\leq\tau_m$.
\end{Alemma}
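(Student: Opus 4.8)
The plan is to exploit that a type $\theta<\tau_m$ endorses the challenger $c$ and hence, if honesty already elects $c$, has no motive to pay any misreporting cost. Concretely, I would first reduce the whole statement to the single claim that in a generic equilibrium $\beta(\theta)=c$ for almost every $\theta<\tau_m$, i.e.\ the truthful report elects the challenger. Granting that, the conclusion is one line: a type $\theta<\tau_m$ that sends any report $r\neq\theta$ can, at best, still elect its endorsed candidate $c$, so its payoff is at most $\xi-kC(r,\theta)<\xi$, strictly below the payoff $\xi$ obtained by reporting $\theta$ honestly; hence $\rho(\theta)=\theta$ for a.e.\ $\theta<\tau_m$, and discarding the null point $\{\tau_m\}$ gives $\rho(\theta)=\theta$ for a.e.\ $\theta\leq\tau_m$.

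To establish $\beta(\theta)=c$ for a.e.\ $\theta<\tau_m$, I would split on whether the report $r=\theta$ is on- or off-path. If $r=\theta$ is on-path, Lemma~\ref{lemma:monopmonot} forces $\rho^{-1}(\theta)$ to be the union of at most one interval contained in $(-\phi,\tau_m]$ and at most one contained in $[\tau_m,\phi)$; I would rule out that the second interval has positive measure, because a type $\theta'>\tau_m$ endorses the incumbent and therefore sends the low report $r=\theta<\theta'$ only if $\beta(\theta)=i$, in which case it is paying $kC(\theta,\theta')>0$ to secure $i$ although honesty would secure $i$ at zero cost unless $\beta(\theta')=c$; tracking this dichotomy across states together with monotonicity and the ordering of the $\tau_j$-thresholds yields a contradiction. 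Then the posterior after $r=\theta$ is supported in $[-\phi,\tau_m]$, has mean strictly below $\tau_v$, and the voter strictly prefers $c$. If instead $r=\theta$ is off-path, I would invoke the Intuitive Criterion: under the maintained condition on $\phi$ one has $B(\Theta,\theta)=\{i,c\}$, so $J(\theta)=\{\theta'\in\Theta: u^*_m(\theta')\leq\xi-kC(\theta,\theta')\}$, and I would show $J(\theta)\subseteq[-\phi,\tau_v)$; for $\theta'\geq\tau_v$ one has $C(\theta,\theta')=(\theta'-\theta)^2\geq(\tau_v-\tau_m)^2$, and the equilibrium payoff $u^*_m(\theta')$ of such a type exceeds $\xi-kC(\theta,\theta')$ because the largeness of the state space makes the cost of jumping across the conflict band swamp the endorsement prize $\xi$ relative to the continuation value that type can guarantee. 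Consequently $B(J(\theta),\theta)=\{c\}$, so a generic equilibrium must set $\beta(\theta)=c$, since otherwise the deviation to $r=\theta$ would be strictly profitable for every sub-$\tau_m$ type, contradicting robustness to the refinement.

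I expect the main obstacle to be exactly the two quantitative points above: ruling out a positive-measure set of high types that report ``downward'' below $\tau_m$ in the on-path case, and establishing $J(\theta)\subseteq[-\phi,\tau_v)$ in the off-path case. Both hinge on comparing the prize $\xi$ with the misreporting cost incurred in crossing the band $[\tau_m,\tau_v]$, which is precisely where the maintained state-space assumption does real work, and both require a careful case analysis over the voter's admissible responses (including the sender-preferred tie-break at the indifference state $\theta'=\tau_v$, which affects only a null set). Everything outside these two points is a routine strict-dominance argument.
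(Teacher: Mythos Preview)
Your reduction to showing $\beta(\theta)=c$ for a.e.\ $\theta<\tau_m$ is sound and is essentially the contrapositive of what the paper does. The serious gap is in your off-path argument.

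You claim that for an off-path report $r=\theta<\tau_m$ one has $J(\theta)\subseteq[-\phi,\tau_v)$ because ``the largeness of the state space makes the cost of jumping across the conflict band swamp the endorsement prize $\xi$.'' This is not justified by any maintained assumption. The state-space condition used in this section is $\phi\geq\max\{\tau_v+\sqrt{\xi/k},-\tau_m\}$, which says nothing about $k(\tau_v-\tau_m)^2$ versus $\xi$; in fact, for low $k$ (precisely the full-persuasion regime that matters for the paper's results) one has $k(\tau_v-\tau_m)^2<\xi$, so the cost of ``crossing the band'' does \emph{not} swamp $\xi$. And you cannot bound $u_m^*(\theta')$ from below a priori for $\theta'\in[\tau_v,\tau_v+\sqrt{\xi/k}]$ by anything better than $0$, which is too weak. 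So your inclusion $J(\theta)\subseteq[-\phi,\tau_v)$ fails exactly in the parameter region of interest.

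The fix is much simpler and is what the paper does: you are trying to rule out $\beta(\theta)=i$, so assume it. Then, since $\theta$ is off-path, the equilibrium condition says no type deviates to it; in particular every $\theta'>\tau_m$ (who endorses $i$) satisfies $u_m^*(\theta')\geq u_m(\theta,i,\theta',q)=\xi-kC(\theta,\theta')=\max_b u_m(\theta,b,\theta',q)$, hence $\theta'\notin J(\theta)$. This gives the much tighter $J(\theta)\subseteq[-\phi,\tau_m]$ for free, with no appeal to the size of $\phi$ or to $k$. From there $B(J(\theta),\theta)=\{c\}$ and the Intuitive Criterion bites immediately for the type $\theta$ itself.

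Your on-path argument is also looser than needed. Rather than ``tracking a dichotomy'' to a contradiction state by state, the paper observes that if any report $r\in S$ with $\beta(r)=i$ is on-path, some type $\theta'\geq\tau_v$ must send it (otherwise the posterior mean is below $\tau_v$); but then every type above $\tau_m$ that sends a report in $S$ pools on the single cheapest such report. Hence at most one report in $S$ is on-path, and combined with the off-path step this forces $S$ to have measure zero. This is shorter and avoids the case analysis you anticipate.
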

\begin{proof}
	Consider a generic equilibrium and suppose that $\rho(\theta)\neq \theta$ for all $\theta\in S$, where $S$ is an open set such that $\sup S\leq \tau_m$ and $S\subset \Theta$. Beliefs must be such that $\beta(r)=i$ for all $r\in S$. Suppose that a report $r' \in S$ is off-path. It must be that $u^*_m(\theta)\geq u_m(r',i,\theta,q)$ for all $\theta \geq \tau_m$. Since $\sup J(r')\leq \tau_m< \tau_v$ and $B(J(r'),r')=c$, the outlet can profitably deviate by reporting truthfully when $\theta=r'\in S$. Hence, all reports $r\in S$ must be on-path. To have $\beta(r')=i$ for a $r'\in S$, it must be that $\rho(\theta')=r'$ for some $\theta'\geq\tau_v$. In all states $\theta> \tau_m$ such that $\rho(\theta)\in S$, the outlet must deliver the same least expensive report $r'\in S$ such that $\beta(r')=i$. Thus, $S$ has measure zero and $\rho(\theta)=\theta$ for almost every $\theta\leq \tau_m$.
\end{proof}

\begin{Alemma}\label{lemma:disc}
	In a generic equilibrium of $\hat\Gamma$, $\rho(\cdot)$ is discontinuous at some $\theta\in\Theta$.
\end{Alemma}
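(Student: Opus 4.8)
The plan is to argue by contradiction: suppose that the reporting rule $\rho(\cdot)$ is continuous on $\Theta$. The first step is to strengthen Lemma~\ref{lemma:monoptruth} with the continuity hypothesis: a continuous function that agrees with the identity outside a set of measure zero on the closed interval $[-\phi,\tau_m]$ must equal the identity throughout, so $\rho(\theta)=\theta$ for \emph{every} $\theta\le\tau_m$.

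The second step uses continuity together with Lemmata~\ref{lemma:monopmonot} and~\ref{lemma:eqmcont} to show that $\rho$ must then be the identity on all of $\Theta$. By Lemma~\ref{lemma:monopmonot}, $\rho$ is non-decreasing on $(\tau_m,\phi]$; it is continuous there with $\rho(\tau_m)=\tau_m$; and by Lemma~\ref{lemma:eqmcont} it equals the identity on every open interval on which it is strictly increasing. I would argue that under continuity $\rho$ cannot have a ``pooling plateau.'' Suppose $(a,b)\subseteq(\tau_m,\phi]$ were a maximal open interval on which $\rho\equiv\bar r$; continuity gives $\rho(a)=\rho(b)=\bar r$. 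Just below $a$ the rule is not constant (by maximality), so it is strictly increasing there, hence the identity by Lemma~\ref{lemma:eqmcont}, and continuity forces $\bar r=\rho(a)=a$; applying the same reasoning just above $b$ forces $\bar r=b$, contradicting $a<b$. Hence $\rho$ has no plateau, so (being continuous, non-decreasing, and nowhere constant) it is strictly increasing on $(\tau_m,\phi]$, and Lemma~\ref{lemma:eqmcont} gives $\rho(\theta)=\theta$ there as well; together with the first step, $\rho$ is the identity on all of $\Theta$.

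The third step rules this out. Since in this subgame $\tau_m<\tau_v$, pick $\theta\in(\tau_m,\tau_v)$ close to $\tau_v$. The outlet strictly prefers the incumbent there ($\hat m(\theta,q)=i$), but under the identity rule the voter holds the degenerate belief at $\theta<\tau_v$ and elects the challenger, so the outlet's payoff is $0$. Deviating instead to $r=\tau_v$ --- which is on path under the identity rule and carries the degenerate belief at $\tau_v$, at which the voter is indifferent and, breaking the tie in the sender's favour, elects the incumbent $\hat m(\tau_v,q)=i$ --- yields $\xi-k(\tau_v-\theta)^2$, which is strictly positive once $\theta$ is close enough to $\tau_v$ because the misreporting cost vanishes. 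This profitable deviation contradicts $\rho$ being an equilibrium, so no continuous $\rho$ can be a generic equilibrium, which proves Lemma~\ref{lemma:disc}.

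The step I expect to be the main obstacle is the claim, in the second paragraph, that just below~$a$ (and just above~$b$) the rule ``is strictly increasing.'' A continuous non-decreasing function may fail to be strictly increasing on any subinterval where it is not constant --- it can behave like a devil's staircase --- and Lemma~\ref{lemma:eqmcont} does not exclude this; likewise a plateau may abut the endpoint $\phi$, where there is nothing ``just above'' it. Both configurations have to be dispatched by a separate Intuitive-Criterion argument: with densely packed pooling reports a state can always deviate to a strictly cheaper nearby report that still elects its preferred candidate, and in particular the state near~$\phi$ can secure the incumbent at vanishing cost through an extreme off-path report that the refinement resolves in the incumbent's favour. Once these cases are excluded, the ``no plateau'' argument and the full-revelation deviation close the proof.
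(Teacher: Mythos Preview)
Your overall strategy---assume continuity, reduce to the fully revealing rule, then exhibit a profitable deviation near $\tau_v$---is close to the paper's, and your treatment of the fully revealing case (third paragraph) is essentially the same as the paper's. The difference is in how the non-identity case is handled. Rather than trying to exclude plateaus altogether (and so having to dispatch interior plateaus and Cantor-type configurations separately, as you flag), the paper promotes your ``endpoint-plateau fix'' to the \emph{main} argument: if $\rho(\theta')\neq\theta'$ for some $\theta'>\tau_m$, continuity together with Lemmata~\ref{lemma:monopmonot} and~\ref{lemma:eqmcont} force $\rho(\theta')<\theta'$ and in fact $\rho\equiv\rho(\theta')$ on all of $(\max\{\rho(\theta'),\tau_m\},\phi)$---a single plateau extending to the right endpoint. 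That plateau is then eliminated in one Intuitive-Criterion step: any off-path $r'\ge\theta'$ is equilibrium-dominated for every type below the plateau, so $B(J(r'),r')=\{i\}$, and types near $r'$ secure the incumbent at strictly lower cost than by pooling at $\rho(\theta')$. Once you go this route, the devil's-staircase and interior-plateau worries you identify simply do not arise, so the paper's organisation is cleaner even though the ingredients are the same as yours.

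One genuine slip in your third paragraph: a \emph{generic} equilibrium in this appendix is a PBE surviving the Intuitive Criterion, not the sender-preferred one, so you may not invoke the sender-favourable tie-break at $r=\tau_v$. The fix is immediate---have the deviating type $\theta\in(\tau_m,\tau_v)$ report $\tau_v+\delta$ for small $\delta>0$, so that the voter strictly prefers the incumbent regardless of tie-breaking, while the cost $k(\tau_v+\delta-\theta)^2$ is still below $\xi$ for $\theta$ close enough to $\tau_v$. (The paper also writes ``reporting $\tau_v$,'' but it does not lean on the sender-preferred rule to make the deviation strict.)
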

\begin{proof}
	Suppose by way of contradiction that there is a generic equilibrium where $\rho(\theta)$ is continuous in $\Theta$. From Lemma~\ref{lemma:monoptruth}, we know that $\rho(\theta)=\theta$ for $\theta\leq\tau_m$. If $\rho(\theta)=\theta$ also for all $\theta > \tau_m$, then the equilibrium would be fully revealing. In such case, the outlet could profitably deviate by reporting $\tau_v$ when the state is $\theta\in(\tau_v-\epsilon,\tau_v)$ for some $\epsilon>0$. Therefore, it must be that $\rho(\theta')\neq \theta'$ for some state $\theta'>\tau_m$. By Lemma~\ref{lemma:eqmcont}, it has to be that  $\rho(\theta')<\theta'$, or otherwise $\rho(\cdot)$ would be discontinuous; therefore Lemmata~\ref{lemma:monopmonot} and \ref{lemma:eqmcont} imply that $\rho(\theta)=\rho(\theta')$ for all $\theta \in (\max\{\rho(\theta'),\tau_m\},\sup\Theta)$. There always exists a report $r'\geq \theta'$ such that $\inf J(r')\geq \max\{\rho(\theta'),\tau_m\}$. Since $\beta(\rho(\theta'))=i$, it must be that $B(J(r'),r')=i$. Therefore, there are states where the media outlet would have a profitable deviation, contradicting that a continuous $\rho(\cdot)$ can be part of a generic equilibrium.
\end{proof}

\begin{Alemma}\label{lemma:monopstrat}
	In a generic equilibrium of $\hat\Gamma$, $\rho(\cdot)$ has a unique discontinuity in state $\theta_\delta$, where $\theta_\delta \in [\tau_m,\tau_v]$. The reporting rule\footnote{Recall that $\rho^+(t)=\lim_{\theta\to t^+}\rho(\theta)$ and $\rho^-(t)=\lim_{\theta\to t^-}\rho(\theta)$.} is such that $\rho(\theta)=\rho^+(\theta_\delta)>\theta_\delta =l(\rho^+(\theta_\delta))$ for $\theta \in (\theta_\delta,\rho^+(\theta_\delta))$ and $\rho(\theta)=\theta$ for all $\theta\in(\inf \Theta,\theta_\delta)\cup[\rho^+(\theta_\delta),\sup\Theta)$.
\end{Alemma}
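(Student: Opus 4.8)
\noindent\emph{Proof plan.} The plan is to pin down the shape of $\rho(\cdot)$ by combining the structural lemmata already established with the equilibrium (no-profitable-deviation) conditions and the Intuitive Criterion, working outward from the region below $\tau_m$. First I would upgrade Lemma~\ref{lemma:monoptruth}: since $\rho(\cdot)$ is non-decreasing on $\{\theta<\tau_m\}$ by Lemma~\ref{lemma:monopmonot} and equals the identity for almost every $\theta\le\tau_m$, it must equal the identity at \emph{every} interior point (a non-decreasing function that coincides with the identity a.e.\ on an interval cannot differ from it at an interior point, or monotonicity would be violated on a positive-measure set). Hence $\rho(\theta)=\theta$ on $(\inf\Theta,\tau_m)$ and $\rho(\cdot)$ is continuous there. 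By Lemma~\ref{lemma:disc} the discontinuity set $D$ is non-empty, and it is disjoint from $(\inf\Theta,\tau_m)$, so $\theta_\delta:=\inf D\ge\tau_m$.

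Next I would show $\rho(\theta)=\theta$ on all of $(\inf\Theta,\theta_\delta)$. On $(\tau_m,\theta_\delta)$ the rule is continuous (no discontinuity below $\theta_\delta$) and non-decreasing, so by Lemma~\ref{lemma:eqmcont} it suffices to rule out locally constant stretches. A flat stretch $\rho\equiv r_0$ there has all its states in $(\tau_m,\theta_\delta)$, hence all with endorsed candidate $i$; if the pooled report induced the challenger, every pooling type with $\theta\neq r_0$ would be paying the strictly positive cost $kC(r_0,\theta)$ to elect its \emph{non}-endorsed candidate, which is dominated by truthful reporting — here one runs the deviation through the $J(\cdot)$/$B(\cdot)$ apparatus, exactly as in the proof of Lemma~\ref{lemma:monoptruth}, to make sure the deviation report is evaluated against posteriors that do not favor the incumbent; if instead the pooled report induced the incumbent, continuity of $\rho$ at the endpoints of the flat stretch (which lie below $\theta_\delta$, hence are continuity points) would force $r_0$ to equal those endpoints, contradicting flatness. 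Therefore $\rho(\theta)=\theta$ on $(\inf\Theta,\theta_\delta)$ and, in particular, $\rho^-(\theta_\delta)=\theta_\delta$, so $r^*:=\rho^+(\theta_\delta)\neq\theta_\delta$.

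Then I would establish that the jump is upward, that $\theta_\delta\le\tau_v$, and that it is followed by a single pool. A downward jump $r^*<\theta_\delta$ is impossible: it either violates monotonicity on $\{\theta>\tau_m\}$ (Lemma~\ref{lemma:monopmonot}) or forces the outlet to pay to elect its non-endorsed candidate in states just above $\theta_\delta$; hence $r^*>\theta_\delta$. If $\theta_\delta>\tau_v$, then by the previous step every report in $(\tau_v,\theta_\delta)$ is on path with a degenerate posterior, so $\beta(r)=i$ there; but then in any state $\theta\in(\tau_m,\tau_v)$, where truthful reporting elects the challenger, the outlet deviates to a report in $(\tau_v,\theta_\delta)$ arbitrarily close to $\tau_v$, paying a cost below $\xi$ and electing the incumbent — a contradiction, so $\theta_\delta\le\tau_v$. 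Immediately above $\theta_\delta$, monotonicity and Lemma~\ref{lemma:eqmcont} force the rule to be flat and equal to $r^*$; this flat stretch must terminate exactly at $r^*$ (else continuity at its right endpoint $\theta_1$ would require $\rho^+(\theta_1)=\theta_1=r^*$ if the rule becomes truthful there, and otherwise $\theta_1\in D$), after which $\rho(\theta)=\theta$. Finally, the marginal no-deviation conditions identify $\theta_\delta$: a type just below $\theta_\delta$ must not want to send $r^*$ while a type just above must, so continuity of $C$ gives $kC(r^*,\theta_\delta)=\xi$, and together with $\theta_\delta\ge\tau_m$ and the monotonicity of $C(r^*,\cdot)$ below $r^*$ this is precisely $\theta_\delta=l(r^*)=l(\rho^+(\theta_\delta))$; existence of $r^*$ inside $\Theta$ uses the standing assumption $\phi\ge\max\{\tau_v+\sqrt{\xi/k},-\tau_m\}$.

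The main obstacle is uniqueness of the discontinuity: a further jump above $r^*$ would, by the arguments above, create a second pooling interval, and to rule it out I would mirror the construction in the proof of Lemma~\ref{lemma:disc}, exhibiting a report $r'$ slightly above $r^*$ whose set $J(r')$ of equilibrium-undominated types lies entirely above $\theta_\delta$, so that $B(J(r'),r')=i$ and the types sitting in the second pool (or reporting truthfully between the two pools) have a deviation to $r'$ that survives the Intuitive Criterion — contradicting genericity. The recurring technical subtlety, present already in Steps 2 and 3, is that every ``profitable deviation'' claim must be carried out through the $J(\cdot)$ and $B(\cdot)$ bookkeeping so that it defeats the equilibrium under the Intuitive Criterion rather than merely under a conveniently chosen belief; this is exactly the kind of manipulation performed in the proofs of Lemmata~\ref{lemma:monoptruth} and \ref{lemma:disc}, so the remaining work is adapting it to the present configuration, after which $\rho(\theta)=\theta$ on $(\inf\Theta,\theta_\delta)\cup[r^*,\sup\Theta)$ and $\rho(\theta)=r^*$ on $(\theta_\delta,r^*)$ as claimed.
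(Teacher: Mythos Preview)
Your plan follows the same architecture as the paper's proof: locate the lowest discontinuity $\theta_\delta$, establish truthful reporting below it, show the jump is upward and bounded by $\tau_v$, characterize the pool $(\theta_\delta,r^*)$ with $\theta_\delta=l(r^*)$ via the marginal indifference condition, and use the $J(\cdot)/B(\cdot)$ machinery to rule out further pools above. The ordering differs slightly (the paper proves $\theta_\delta\le\tau_v$ last, you do it earlier), but the substance is the same.

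There is, however, one genuine gap in your Step~2. When you rule out a flat stretch $\rho\equiv r_0$ on $(\tau_m,\theta_\delta)$ in the case where the pooled report elects the \emph{incumbent}, you claim its endpoints ``lie below $\theta_\delta$, hence are continuity points.'' But a flat stretch cannot terminate strictly below $\theta_\delta$: by Lemma~\ref{lemma:eqmcont} the rule is the identity immediately to its right, so the right endpoint would itself be a discontinuity, contradicting $\theta_\delta=\inf D$. Hence any flat stretch must extend to $\theta_\delta$, and your continuity-at-endpoints argument does not apply. This is precisely the configuration $\rho^-(\theta_\delta)<\theta_\delta$ with $\beta(\rho^-(\theta_\delta))=i$, and it does not fall to an endpoint trick. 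The paper dispatches it with the Intuitive Criterion: any off-path $r'\in(\max\{\rho^-(\theta_\delta),\tau_m\},\theta_\delta)$ must have $\beta(r')=c$ in equilibrium (else types near $r'$ would deviate to it from the more expensive $\rho^-(\theta_\delta)$), but $r'$ is equilibrium-dominated for all types below the cost-indifference point between $\rho^-(\theta_\delta)$ and $r'$, forcing $B(J(r'),r')=\{i\}$ and yielding a profitable deviation for types just below $\theta_\delta$. You already invoke exactly this $J(\cdot)/B(\cdot)$ argument for the challenger-electing case and for uniqueness above $r^*$; you need to deploy it here as well rather than relying on the endpoint claim.
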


\begin{proof}
	I denote by $\theta_\delta$ the lowest state in which a discontinuity of $\rho(\cdot)$ occurs. By Lemmata~\ref{lemma:monoptruth} and \ref{lemma:disc}, we know that in equilibrium such a discontinuity exists and $\theta_\delta \geq \tau_m$. 
	
	Suppose that $\rho^-(\theta_\delta)\neq \theta_\delta$. If $\rho^-(\theta_\delta)< \theta_\delta$, then by Lemmata~\ref{lemma:monopmonot} and \ref{lemma:eqmcont} we have that $\rho(\theta)=\rho^-(\theta_\delta)$ for all $\theta \in (\max\{\rho^-(\theta_\delta),\tau_m\},\theta_\delta)$ and $\rho(\theta)=\theta$ for $\theta \leq \max\{\rho^-(\theta_\delta),\tau_m\}$. In equilibrium, it has to be that $\beta(\rho^-(\theta_\delta))=i$ and $\beta(r')=c$ for every off-path $r'\in (\max\{\rho^-(\theta_\delta),\tau_m\},\theta_\delta)$. Hence, every report $r'\in (\max\{\rho^-(\theta_\delta),\tau_m\},\theta_\delta)$ is equilibrium dominated for all $\theta<\theta'$, where $\theta'=\{\theta \in \Theta \,| \, C(\rho^-(\theta_\delta),\theta)=C(r',\theta)\}$. Therefore, $B(J(r'),r')=i$, and the media outlet could profitably deviate by reporting $r'$ instead of $\rho^-(\theta_\delta)$ when $\theta\in (\theta',\theta_\delta)$. Suppose now that $\rho^-(\theta_\delta)>\theta_\delta$. By Lemma~\ref{lemma:monopmonot} we have $\rho^-(\tau_m)=\tau_m$, and thus it has to be that $\theta_\delta>\tau_m$. Similarly to the previous case, in equilibrium it must be that $\rho(\theta)=\rho^-(\theta_\delta)$ for all $\theta \in (\tau_m,\theta_\delta)$. This is in contradiction to $\theta_\delta$ being the lowest discontinuity, as we would have $\rho^+(\tau_m)>\tau_m$. Therefore, in every generic equilibrium, $\rho^-(\theta_\delta)=\theta_\delta \geq \tau_m$ and $\rho(\theta)=\theta$ for $\theta <\theta_\delta$.
	
	From Lemmata~\ref{lemma:monopmonot} and \ref{lemma:eqmcont}, it follows that $\rho^+(\theta_\delta)>\theta_\delta$ and $\rho(\theta)=\rho^+(\theta_\delta)$ for every $\theta \in (\theta_\delta,\rho^+(\theta_\delta)]$: since it must be that $\beta(\rho^+(\theta_\delta))=i$, the outlet would profitably deviate by reporting $\rho^+(\theta_\delta)$ in every state $\theta \in (\theta_\delta,\rho^+(\theta_\delta)]$ such that $\rho(\theta)>\rho^+(\theta_\delta)$. To prevent other profitable deviations, $\rho^+(\theta_\delta)$ must be such that $\xi\leq kC(\rho^+(\theta_\delta),\theta)$ for $\theta\in(\tau_m,\theta_\delta)$ and $\xi\geq kC(\rho^+(\theta_\delta),\theta)$ for all $\theta\in[\theta_\delta,\rho^+(\theta_\delta)]$. Together, these conditions imply that $\theta_\delta=l(\rho^+(\theta_\delta))$. Any off-path report $r'>\rho^+(\theta_\delta)$ would be equilibrium-dominated by all $\theta\leq \rho^+(\theta_\delta)$, yielding $B(J(r'),r')=i$. Therefore, it must be that $\rho(\theta)=\theta$ for all $\theta \geq \rho^+(\theta_\delta)$, and $\rho(\theta)=\rho^+(\theta_\delta)$ for $\theta \in (\theta_\delta,\rho^+(\theta_\delta))$. 
	
	Suppose now that $\theta_\delta>\tau_v$. Given the reporting rule, posterior beliefs $p$ must be degenerate on $\theta=r$ for all $r \in [\tau_v,\theta_\delta)$. In this case, there always exists an $\epsilon>0$ such that the outlet can profitably deviate by reporting $\tau_v$ instead of $\theta$ in states $\theta\in (\tau_v-\epsilon,\tau_v)$. Therefore, $\theta_\delta \in [\tau_m,\tau_v]$.
\end{proof}

\begin{Aproposition}\label{prop:monopoly}
	A pair $(\rho(\theta), p(\theta\,|\,r))$ is a generic equilibrium of $\hat\Gamma$ if and only if, for a given $\lambda\in\left[\tau_v,\tau_v+\tfrac{1}{2}\sqrt{\frac{\xi}{k}}\right]$,
	\begin{enumerate}
		\item[i)] The reporting rule $\rho(\theta)$ is, for a $\lambda\in\left[\tau_v,\tau_v+\tfrac{1}{2}\sqrt{\frac{\xi}{k}}\right)$,
		\[
		\rho(\theta) =
		\begin{cases}
		\hat r(\lambda)=\min\left\{\lambda+\tfrac{1}{2}\sqrt{\frac{\xi}{k}},2\lambda-\tau_m\right\} & \quad \text{if } \; \theta \in \left(l\left(\hat r(\lambda)\right),\hat r(\lambda)\right)\\ 
		\theta & \quad \text{otherwise. }
		\end{cases}
		\]
		When $\lambda=\tau_v+\tfrac{1}{2}\sqrt{\frac{\xi}{k}}$, $\rho(\theta)=\hat r(\lambda)$ for $\theta\in[l(\hat r(\lambda)),\hat r(\lambda))$, and $\rho(\theta)=\theta$ otherwise.\footnote{Up to changes of measure zero in $\rho(\theta)$ due to the media outlet being indifferent between reporting $l(\hat r(\lambda))$ and $\hat r(\lambda)$ when the state is $\theta=l(\hat r(\lambda))>\tau_m$.}
		\item[ii)] Posterior beliefs $p(\theta\,|\,r)$ are according to Bayes' rule whenever possible and such that $\mathbb{E}_p [\theta \,|\hat r(\lambda)]=\lambda$, $\mathbb{E}_p [\theta \,|r]<\tau_v$ for every off-path $r$, and $p(\theta\,|\,r)$ are degenerate on $\theta=r$ otherwise.
	\end{enumerate}
\end{Aproposition}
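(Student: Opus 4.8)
The plan is to prove the two implications separately, leaning on the structural restrictions on generic equilibria already obtained in Lemmata~\ref{lemma:monopmonot}--\ref{lemma:monopstrat}.

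\emph{Necessity.} Take any generic equilibrium. By Lemma~\ref{lemma:monopstrat} the reporting rule has a single discontinuity at some $\theta_\delta\in[\tau_m,\tau_v]$, coincides with the truth off the pooled interval $(\theta_\delta,\rho^+(\theta_\delta))$, satisfies $\theta_\delta=l(\rho^+(\theta_\delta))$, and (as established in the proof of that lemma) induces $\beta(\rho^+(\theta_\delta))=i$. Put $\lambda:=\mathbb{E}_p[\theta\,|\,\rho^+(\theta_\delta)]$; then $\lambda\geq\tau_v$ because $\beta(\rho^+(\theta_\delta))=i$. Since the prior is uniform and the pooled set is an interval, Bayes' rule makes the posterior uniform there, so $\lambda=\tfrac12(\theta_\delta+\rho^+(\theta_\delta))$. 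Substituting $\theta_\delta=l(\rho^+(\theta_\delta))=\max\{\tau_m,\rho^+(\theta_\delta)-\sqrt{\xi/k}\}$ and solving the two linear cases yields $\rho^+(\theta_\delta)=\min\{\lambda+\tfrac12\sqrt{\xi/k},\,2\lambda-\tau_m\}=\hat r(\lambda)$, which is precisely (i); the bound $\lambda\leq\tau_v+\tfrac12\sqrt{\xi/k}$ follows from $\theta_\delta\leq\tau_v$, since in the first branch $\theta_\delta=\lambda-\tfrac12\sqrt{\xi/k}$ and in the second $\theta_\delta=\tau_m<\tau_v$ trivially. On-path beliefs are then Bayesian with $\mathbb{E}_p[\theta\,|\,\hat r(\lambda)]=\lambda$ and degenerate at truthful reports; any off-path report $r\in(\theta_\delta,\hat r(\lambda))$ must carry a belief with $\mathbb{E}_p[\theta\,|\,r]<\tau_v$, since otherwise $\beta(r)=i$ and the type $\theta=r$, who in equilibrium pays $k(\hat r(\lambda)-r)^2>0$ to secure $i$ via $\hat r(\lambda)$, would deviate to $r$ and obtain $i$ at zero cost. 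This gives (ii).

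\emph{Sufficiency.} Fix $\lambda\in[\tau_v,\tau_v+\tfrac12\sqrt{\xi/k}]$ and the prescribed profile. A midpoint computation in both branches of the $\min$ shows the uniform posterior on $(l(\hat r(\lambda)),\hat r(\lambda))$ has mean $\lambda$, so Bayes consistency holds; $\beta(\hat r(\lambda))=i$ as $\lambda\geq\tau_v$ (breaking ties toward $i$ when $\lambda=\tau_v$, the only tie-break compatible with the profile), truthful reports below $l(\hat r(\lambda))\leq\tau_v$ induce $c$, truthful reports above $\hat r(\lambda)>\tau_v$ induce $i$, and off-path reports induce $c$. Outlet optimality is checked region by region: (a) for $\theta\leq l(\hat r(\lambda))$ truth is optimal, since either $\theta<\tau_m$ and the outlet already gets its endorsed candidate at zero cost, or $\theta\in[\tau_m,l(\hat r(\lambda)))$ and $\theta<\hat r(\lambda)-\sqrt{\xi/k}$ forces inducing $i$ to cost more than $\xi$; (b) for $\theta\in(l(\hat r(\lambda)),\hat r(\lambda))$ the pooling report yields $i$ at net payoff $\xi-k(\hat r(\lambda)-\theta)^2>0$, dominating any off-path report (payoff $\leq 0$) and any truthful report at or above $\hat r(\lambda)$ (strictly more expensive); (c) for $\theta\geq\hat r(\lambda)$ truth secures $i$ at zero cost, the maximum. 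Finally, for the Intuitive Criterion, every off-path report $r\in(\theta_\delta,\hat r(\lambda))$ obeys $r-\sqrt{\xi/k}<\hat r(\lambda)-\sqrt{\xi/k}=\theta_\delta\leq\tau_v$ (or $r-\sqrt{\xi/k}<\tau_m<\tau_v$ in the corner branch), hence $\inf J(r)=\max\{\tau_m,r-\sqrt{\xi/k}\}<\tau_v$, so $c\in B(J(r),r)$; since the outlet can always guarantee itself $0$, $\min_{b\in B(J(r),r)}u_m(r,b,\theta',q)\leq -kC(r,\theta')\leq 0\leq u^*_m(\theta')$ for every $\theta'$, so no type has an equilibrium-dominant deviation, and reports outside $[-\phi,\phi]$ have $J(r)=\varnothing$ and are handled identically. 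Hence the profile is a generic equilibrium.

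\emph{Main obstacle.} The delicate step is the Intuitive Criterion check in the sufficiency direction, and in particular showing it does \emph{not} collapse the whole family down to $\lambda=\tau_v$: one must evaluate $B(J(r),r)$ through its definition — the union of voter best responses over \emph{all} beliefs supported on $J(r)$ — rather than through the posterior mean on $J(r)$, so that $c$ remains rationalizable precisely because $\inf J(r)<\tau_v$. Getting the exact shape of $J(r)$ (an interval with lower endpoint $\max\{\tau_m,r-\sqrt{\xi/k}\}$) and keeping the corner case $\hat r(\lambda)=2\lambda-\tau_m$ separate are the other places requiring care.
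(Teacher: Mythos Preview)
Your proof is correct and follows essentially the same route as the paper: necessity via Lemma~\ref{lemma:monopstrat}, the midpoint computation to identify $\lambda$ and pin down $\hat r(\lambda)$, and the Intuitive Criterion check based on $\inf J(r)<\tau_v$ so that $c\in B(J(r),r)$. Where you differ is only in level of detail: the paper dispatches sufficiency with ``the proof is completed by the observation that the pair \ldots\ is indeed a generic equilibrium,'' whereas you actually verify outlet optimality region by region and spell out the IC argument; this is a welcome expansion rather than a different approach. One small inaccuracy: it is not true in general that $J(r)=\varnothing$ for reports outside $[-\phi,\phi]$ (a type in $(\tau_m,\theta_\delta)$ with equilibrium payoff $0$ can still lie in $J(r)$ for $r$ slightly above $\phi$), but your ``handled identically'' is the right fix---the same $\inf J(r)<\tau_v$ reasoning applies whenever $J(r)\neq\varnothing$, and when $J(r)=\varnothing$ the convention $B(\varnothing,r)=B(\Theta,r)$ already contains $c$.
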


\begin{proof}
	Given the reporting rule $\rho(\cdot)$ described in Lemma~\ref{lemma:monopstrat}, beliefs $p$ must be such that $\beta(\rho^+(\theta_\delta))=i$, and thus $\mathbb{E}_p [\theta \,|\,\rho^+(\theta_\delta)]=\frac{\rho^+(\theta_\delta)+\theta_\delta}{2}\geq \tau_v$. With square loss misreporting costs $C(r,\theta)=(r-\theta)^2$, we have that $l(\rho^+(\theta_\delta))=\max\left\{\rho^+(\theta_\delta)-\sqrt{\frac{\xi}{k}},\tau_m\right\}\leq\tau_v$. Since $\theta_\delta=l(\rho^+(\theta_\delta))\leq \tau_v$, we also obtain that $\mathbb{E}_p [\theta \,|\,\rho^+(\theta_\delta)]\leq \tau_v+\frac{1}{2}\sqrt{\frac{\xi}{k}}$. Therefore, the expectation $\mathbb{E}_p [\theta \,|\,\rho^+(\theta_\delta)]$ induced by the report $\rho^+(\theta_\delta)$ has to be between $\tau_v$ and $\tau_v+\frac{1}{2}\sqrt{\frac{\xi}{k}}$. Similarly, for a general misreporting cost function $C(r,\theta)$, the expectation $\mathbb{E}_p [\theta \,|\,\rho^+(\theta_\delta)]$ has to be between $\tau_v$ and $\frac{\tau_v+\bar r(\tau_v)}{2}$, where $\bar r(\theta)$ is defined for a $\theta>\tau_m$ as $\bar r(\theta)=\max\left\{r\in\mathbb{R} | kC(r,\theta)=\xi\right\}$. I define the pooling report $\hat r(\lambda)$ as
	\[
	\hat r(\lambda):=\left\{r\in\mathbb{R}\; | \; \mathbb{E}_f [\theta \,|\,l(r)<\theta<r]=\lambda \right\}.
	\]
	For a $\lambda\in\left[\tau_v,\frac{\tau_v+\bar r(\tau_v)}{2}\right)$, we can rewrite the reporting rule described in Lemma~\ref{lemma:monopstrat} as
	\begin{equation}\label{eq:monop1}
	\rho(\theta) =
	\begin{cases}
	\hat r(\lambda) & \quad \text{if } \; \theta \in \left(l\left(\hat r(\lambda)\right),\hat r(\lambda)\right)\\ 
	\theta & \quad \text{otherwise. }
	\end{cases}
	\end{equation}
Alternatively, \eqref{eq:monop1} can have $\rho(l(\hat r(\lambda))=\hat r(\lambda)$ as long as $l(\hat r(\lambda))>\tau_m$. If $\lambda=\frac{\tau_v+\bar r(\tau_v)}{2}$, then it must be that \eqref{eq:monop1} has $\rho(l(\hat r(\lambda)))=\hat r(\lambda)$; otherwise the outlet would profitably deviate by reporting $\tau_v$ when the state is $\theta\in(\tau_v-\epsilon,\tau_v+\epsilon)$ for some $\epsilon>0$. Since $\theta\sim\mathcal{U}$, when $C(r,\theta)=(r-\theta)^2$ we have $\hat r(\lambda)=\lambda+\tfrac{1}{2}\sqrt{\frac{\xi}{k}}$ if $l(\hat r(\lambda))>\tau_m$ and $\hat r(\lambda)=2\lambda-\tau_m$ otherwise.
	
	%%%%%%%%%%%%%%%%%%%%
	By applying Bayes' rule to \eqref{eq:monop1}, we obtain that posterior beliefs $p(\theta | r)$ are such that $\mathbb{E}_p [\theta \,|\, \hat r(\lambda)]=\lambda\in\left[\tau_v,\frac{\tau_v+\bar r(\tau_v)}{2}\right]$, and are degenerate on $\theta=r$ for all $r \notin \left[l\left(\hat r\left(\lambda\right)\right),\hat r\left(\lambda\right)\right)$. For every off-path report $r' \in \left(l\left(\hat r\left(\lambda\right)\right),\hat r\left(\lambda\right)\right)$ it must be that $\mathbb{E}_p [\theta \,|\, r']<\tau_v$ to have $\beta(r')=c$. These off-path beliefs are consistent with the Intuitive Criterion since for every $r' \in \left(l\left(\hat r\left(\lambda\right)\right),\hat r\left(\lambda\right)\right)$ we have that $\inf J(r')<l(\hat r(\lambda))\leq\tau_v$, and thus $c\in B(J(r'),r')$. The proof is completed by the observation that the pair $(\rho(\theta), p(\theta | r))$ described in Proposition~\ref{prop:monopoly} is indeed a generic equilibrium of $\hat\Gamma$ for every $\lambda \in \left[\tau_v,\frac{\tau_v+\bar r(\tau_v)}{2}\right]$. 
\end{proof}

%%%%%%%%%%%%%%%%%%%%%%%%%%%%%%%%%%%%%%%%%%%%%%%%%%%%%%%%%
\begin{proof}[\bf Proof of Lemma~\ref{lemma:senderpm}]
For the case $\tau_m<\tau_v$, Proposition~\ref{prop:monopoly} shows that there is a continuum of generic equilibria of $\hat\Gamma$ parameterized by the expectation $\lambda=\mathbb{E}[\theta | \hat r(\lambda)]$. Given costs $C(r,\theta)=(r-\theta)^2$, $\lambda\in\left[\tau_v,\tau_v+\tfrac{1}{2}\sqrt{\frac{\xi}{k}}\right]$, and $\tau_m<\tau_v$, in a generic equilibrium there is persuasion when $\theta\in(l(\hat r(\lambda)),\tau_v)$. Therefore, $\lambda=\tau_v$ maximizes the media outlet's expected equilibrium payoff: for every $\lambda\in\left(\tau_v,\tau_v+\tfrac{1}{2}\sqrt{\frac{\xi}{k}}\right]$, if $l(\hat r(\tau_v))>\tau_m$, then $l(\hat r(\lambda))>l(\hat r(\tau_v))$; if $l(\hat r(\tau_v))=\tau_m$, then $l(\hat r(\lambda))\geq\tau_m$ and $\hat r(\lambda)>\hat r(\tau_v)$. That is, in the generic equilibrium where $\lambda=\tau_v$, the media outlet is either more likely to persuade the voter at the same expected cost, or is at least equally likely to persuade the voter at a strictly lower cost compared to generic equilibria where $\lambda>\tau_v$. The sender-preferred equilibrium reporting rule $\rho(\cdot)$ and beliefs $p$ follow from Proposition~\ref{prop:monopoly}, where the case $\tau_m>\tau_v$ is obtained in a similar way as $\tau_m<\tau_v$, and the case $\tau_m=\tau_v$ follows by setting $\tau_m \to \tau_v$ in the generic equilibrium of Proposition~\ref{prop:monopoly} where $\lambda=\tau_v$.
\end{proof}

\subsection{The Policy-making Stage}\label{app:policystage}

\subsubsection{The Challenger's Best Response}\label{app:best}
Given the equilibrium of the communication subgame $\hat\Gamma$ (see Lemma~\ref{lemma:senderpm}) and a policy proposal by the incumbent $q_i$, the expected utility of the challenger is $V_c(q)=l(r^*(q))$ if $\tau_m(q)<\tau_v(q)$, and $V_c(q)=h(r^*(q))$ if $\tau_m(q)>\tau_v(q)$. We have that $\tau_m(q)=\tau_v(q)$ only if $q_c=q_i$; in this case, the challenger ensures her electoral victory half the time by mimicking the incumbent's proposal, and $V_c(q)=0$. By contrast, $\tau_v(q)>\tau_m(q)$ when $q_c>q_i$, and $\tau_v(q)<\tau_m(q)$ otherwise. I define the ``best response to the left'' $BR_c^L(q_i)$ as the best response of the challenger to policy $q_i$ subject to the constraint that $q_c\leq q_i$, that is, $BR_c^L(q_i)=\argmax_{q_c\leq q_i} V_c(q)$. The ``best response to the right'' is similarly defined as $BR_c^R(q_i)=\argmax_{q_c\geq q_i} V_c(q)$.

\begin{step}\label{step:left}
	The challenger's ``best response to the left'' $BR_c^L(q_i)$ is,
	\[
	BR_c^L(q_{i})= 
	\begin{cases} 
	q_i & \text{if } q_{i} \leq \varphi_m\\
	\varphi_m & \text{if } q_{i} \in \left[\varphi_m,\varphi_m+\frac{\sqrt{\frac{\xi}{k}}}{4\gamma(\varphi_v-\varphi_m)}\right]\\
	\tilde q_c(q_i)=q_{i}-\frac{\sqrt{\frac{\xi}{k}}}{4\gamma(\varphi_v-\varphi_m)} & \text{if } q_{i} \in \left[\varphi_m+\frac{\sqrt{\frac{\xi}{k}}}{4\gamma(\varphi_v-\varphi_m)},\varphi_v+\frac{\sqrt{\frac{\xi}{k}}}{4\gamma(\varphi_v-\varphi_m)}\right]  \\
	\varphi_v  & \text{if } q_{i} \geq \varphi_v+\frac{\sqrt{\frac{\xi}{k}}}{4\gamma(\varphi_v-\varphi_m)}
	\end{cases}
	\] 
\end{step}
\begin{proof}
	Given $q_c<q_i$ and the equilibrium in Lemma~\ref{lemma:senderpm}, the challenger wins when $\theta<h(r^*(q))$, where $h(r^*(q))=\min\left\{r^*(q)+\sqrt{\frac{\xi}{k}},\tau_m(q)\right\}$ and $r^*(q)=\max\left\{\tau_v(q)-\tfrac{1}{2}\sqrt{\frac{\xi}{k}},2\tau_v(q)-\tau_m(q)\right\}$. When $h(r^*(q))<\tau_m(q)$, the pooling report is $r^*(q)=\tau_v(q)-\tfrac{1}{2}\sqrt{\frac{\xi}{k}}$, and thus  $\frac{\partial h(r^*(q))}{\partial q_c}=2\gamma (\varphi_v-q_c)>0$ and $\frac{\partial \tau_m(q)}{\partial q_c}=2\gamma(\varphi_m-q_c)<0$ for all $q_c \in [\varphi_m,\varphi_v]$. Thus, the expected utility of the challenger $V_c(q)=h(r^*(q))$ is maximized, subject to $q_c<q_i$, when $q_c$ is such that $h(r^*(q))=\tau_m(q)$. This last equality is satisfied when $q_c=\tilde q_c(q_i)$, where
	\begin{equation}
	\tilde q_c(q_i)=q_i-\frac{\sqrt{\frac{\xi}{k}}}{4\gamma (\varphi_v-\varphi_m)}.
	\end{equation}
	Therefore, as long as $\tilde q_c(q_i)\in[\varphi_m,\varphi_v]$, we have that $BR_c^L(q_i)=\tilde q_c(q_i)$. Since policies $q_j\notin[\varphi_m,\varphi_v]$, $j\in\{i,c\}$, are never optimal, it follows that if $\tilde q_c(q_i)<\varphi_m\leq q_i$, then $BR_c^L(q_i)=\varphi_m$; if $q_i<\varphi_m$, then $BR_c^L(q_i)=q_i$; if $\tilde q_c(q_i) > \varphi_v$, then $BR_c^L(q_i)=\varphi_v$. The proof is completed by solving for these inequalities.
\end{proof}

\begin{step}\label{step:right}
	The challenger's ``best response from the right'' $BR_c^R(q_{i})$ is,
	\[
	BR_c^R(q_{i})= 
	\begin{cases} 
	\varphi_v & \text{ if } q_{i} < \varphi_v-\sqrt{\frac{1}{2\gamma}\sqrt{\frac{\xi}{k}}}\\
	q_{i} & \text{ otherwise. } 
	\end{cases}
	\]
\end{step}
\begin{proof}
	Given $q_c>q_i$ and the equilibrium in Lemma~\ref{lemma:senderpm}, the challenger wins the election when $\theta<l(r^*(q))$. Since $\frac{\partial l(r^*(q))}{\partial q_c}=\frac{\partial h(r^*(q))}{\partial q_c}$, we can proceed as in Step~\ref{step:left}: the policy $q_c$ such that $l(r^*(q))=\tau_m(q)$ minimizes $V_c(q)$ subject to $q_c\geq q_i$, and thus $BR_c^R(q_i)=\varphi_v$ as long as $l(r^*(q_i,\varphi_v))> 0$. Otherwise, the challenger would be better off by imitating the incumbent with $q_c=q_i$, thereby ensuring herself a payoff of $V_c(q)=0$. The condition $l(r^*(q_i,\varphi_v))> 0$ is satisfied by $q_i < \varphi_v- \sqrt{\frac{1}{2\gamma}\sqrt{\xi/k}}$, completing the proof.
\end{proof}

\begin{Aproposition}\label{prop:bestresp}
	The challenger's best response $BR_c(q_i)$ to a policy $q_i\in[\varphi_m,\varphi_v]$ is,
\[
BR_c(q_{i})=
\begin{cases} 
\varphi_v  & \text{if } q_{i}\in\left[\varphi_m,\varphi_v + \eta(k) - \sqrt[4]{\frac{\xi}{\gamma^2 k}}\right] \text{ and } k\geq\bar k\\
q_i-\eta(k) & \text{if } q_{i}\in\left[\varphi_v + \eta(k) - \sqrt[4]{\frac{\xi}{\gamma^2 k}},\varphi_v\right] \text{ and } k\geq\bar k  \\
\varphi_v  & \text{if } q_{i}\in\left[\varphi_m,\frac{\varphi_v+\varphi_m}{2}-\eta(k)\right] \text{ and } k\in\left(0,\bar k\right]\\
\varphi_m  & \text{if } q_{i} \in \left[\frac{\varphi_v+\varphi_m}{2}-\eta(k), \varphi_m+\eta(k)\right] \text{ and } k\in\left(0,\bar k\right]\\
q_i-\eta(k) &  \text{if } q_{i} \in\left[\varphi_m+\eta(k),\varphi_v\right]\text{ and } k\in\left(0,\bar k\right],\\
\end{cases}
\]
where $\eta (k)=\frac{\sqrt{\frac{\xi}{k}} }{4\gamma(\varphi_v-\varphi_m)}$ and $\bar k = \frac{\xi}{\gamma^2 (\varphi_v-\varphi_m)^4}$.
\end{Aproposition}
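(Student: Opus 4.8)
The plan is to derive $BR_c(q_i)$ by simply taking, for each $q_i\in[\varphi_m,\varphi_v]$, the more profitable of the two one-sided best responses characterized in Steps~\ref{step:left} and~\ref{step:right}. This reduction is legitimate because, with $\theta\sim\mathcal U[-\phi,\phi]$, the challenger's winning probability is an affine strictly increasing transformation of her expected payoff $V_c(q)$, so $BR_c(q_i)\in\argmax_{q_c\in\{BR_c^L(q_i),\,BR_c^R(q_i)\}}V_c(q_i,q_c)$; the option of mimicking the incumbent ($q_c=q_i$) gives $V_c=0$ and is weakly dominated by $BR_c^L(q_i)$, which on $[\varphi_m,\varphi_v]$ always induces full persuasion and hence $V_c=\tau_m(q)\ge 0$. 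So it remains to put $V_c$ into closed form along each branch and compare.

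First I would compute the left-branch value. On $[\varphi_m,\varphi_v]$ one has $BR_c^L(q_i)\in\{\varphi_m,\tilde q_c(q_i)\}$, and in either case $|q_c-q_i|\le\eta(k)$, so condition~\eqref{eq:fullpers} holds and Lemma~\ref{lemma:senderpm} gives $V_c(q)=h(r^*(q))=\tau_m(q)$. Plugging in yields $V_c=\gamma(q_i-\varphi_m)^2$ at $q_c=\varphi_m$ and $V_c=2\gamma\eta(k)(q_i-\varphi_m)-\gamma\eta(k)^2$ at $q_c=\tilde q_c(q_i)=q_i-\eta(k)$; both are nondecreasing in $q_i$. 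For the right-branch value, whenever $BR_c^R(q_i)=\varphi_v$ one checks that $\varphi_v-q_i>\eta(k)$, so~\eqref{eq:fullpers} fails and Lemma~\ref{lemma:senderpm} gives $V_c(q)=l(r^*(q))=\tau_v(q)-\tfrac12\sqrt{\xi/k}=\gamma(\varphi_v-q_i)^2-\tfrac12\sqrt{\xi/k}$; for larger $q_i$, $BR_c^R(q_i)=q_i$ and $V_c=0$, and since (by Step~\ref{step:right}) the switch occurs exactly where $\gamma(\varphi_v-q_i)^2-\tfrac12\sqrt{\xi/k}$ vanishes, the right-branch value is $\max\{0,\gamma(\varphi_v-q_i)^2-\tfrac12\sqrt{\xi/k}\}$, continuous and nonincreasing in $q_i$.

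Since the left-branch value is nondecreasing and the right-branch value nonincreasing in $q_i$, they meet at most once, and the crossover is elementary to locate. Equating $2\gamma\eta(k)(q_i-\varphi_m)-\gamma\eta(k)^2$ with $\gamma(\varphi_v-q_i)^2-\tfrac12\sqrt{\xi/k}$ and completing the square turns the equation into $\gamma\big(\varphi_v-q_i+\eta(k)\big)^2=\sqrt{\xi/k}$, i.e.\ into $\varphi_v-q_i+\eta(k)=\sqrt[4]{\frac{\xi}{\gamma^2 k}}$, which is the threshold $q_i=\varphi_v+\eta(k)-\sqrt[4]{\frac{\xi}{\gamma^2 k}}$ appearing in the statement; equating instead $\gamma(q_i-\varphi_m)^2$ with $\gamma(\varphi_v-q_i)^2-\tfrac12\sqrt{\xi/k}$ gives $q_i=\tfrac{\varphi_v+\varphi_m}{2}-\eta(k)$. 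To the left of such a threshold the right branch ($q_c=\varphi_v$) strictly wins; to the right the left branch wins.

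Finally I would assemble these comparisons with the domains of Steps~\ref{step:left}--\ref{step:right}. The key distinction is whether the crossover $q_i=\varphi_v+\eta(k)-\sqrt[4]{\frac{\xi}{\gamma^2 k}}$ falls inside the interval $[\varphi_m+\eta(k),\varphi_v+\eta(k)]$ on which $BR_c^L(q_i)=\tilde q_c(q_i)$; this happens iff $\varphi_v-\varphi_m\ge\sqrt[4]{\frac{\xi}{\gamma^2 k}}$, i.e.\ iff $k\ge\bar k$. When $k\ge\bar k$, this gives the first two lines of the statement, and one additionally checks that the right branch still dominates on the residual piece $[\varphi_m,\varphi_m+\eta(k)]$ where $BR_c^L=\varphi_m$. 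When $k\le\bar k$, the crossover with $\tilde q_c(\cdot)$ has moved left of $\varphi_m+\eta(k)$, so on $[\varphi_m,\varphi_m+\eta(k)]$ the operative comparison is $\varphi_m$ versus $\varphi_v$, crossing at $\tfrac{\varphi_v+\varphi_m}{2}-\eta(k)$, while on $[\varphi_m+\eta(k),\varphi_v]$ the left branch $\tilde q_c(q_i)$ dominates, producing the last three lines. I expect the genuine obstacle to be bookkeeping rather than any single computation: for each sub-interval one must confirm that $BR_c^L$ and $BR_c^R$ are exactly the expressions supplied by Steps~\ref{step:left}--\ref{step:right} (in particular that $BR_c^R$ has switched between $\varphi_v$ and mimicking at the announced point), and one must check that the degenerate configurations in which some sub-intervals become empty --- which occurs once $\eta(k)>\tfrac12(\varphi_v-\varphi_m)$, i.e.\ $k<\bar k/4$ --- collapse the statement consistently to $BR_c(q_i)=\varphi_m$ for every $q_i\in[\varphi_m,\varphi_v]$.
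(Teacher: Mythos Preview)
Your proposal is correct and follows essentially the same route as the paper: both reduce the problem to comparing $V_c(q_i,BR_c^L(q_i))$ against $V_c(q_i,BR_c^R(q_i))$, invoke the opposite monotonicities of these two values in $q_i$ to obtain a single crossing, and then locate that crossing by the very same quadratic computation (your completing-the-square step $\gamma(\varphi_v-q_i+\eta)^2=\sqrt{\xi/k}$ is exactly the paper's ``solving a quadratic equation in $(\varphi_v-q_i)$''). The case split at $\bar k$ --- arising from whether $\tilde q_c(\bar q')\ge\varphi_m$ --- and the secondary comparison $\gamma(q_i-\varphi_m)^2$ versus $\gamma(\varphi_v-q_i)^2-\tfrac12\sqrt{\xi/k}$ yielding $\tfrac{\varphi_v+\varphi_m}{2}-\eta(k)$ are also identical to the paper's treatment.
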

\begin{proof}
	Given a policy $q_i\in[\varphi_m,\varphi_v]$ and best responses $BR_c^L(q_i)$, $BR_c^R(q_i)$ as in Steps~\ref{step:left} and \ref{step:right}, we have that  $\frac{\partial V_c\left(q_i,BR_c^R(q_i)\right)}{\partial q_i}\leq 0 \leq \frac{\partial V_c\left(q_i,BR_c^L(q_i)\right)}{\partial q_i}$. Therefore, if there is a $q_i'\in[\varphi_m,\varphi_v]$ such that $V_c\left(q_i',BR_c^R(q_i')\right)=V_c\left(q_i',BR_c^L(q_i')\right)$, then $BR_c(q_i)=BR_c^R(q_i)$ for all $q_i\in[\varphi_m,q_i']$ and $BR_c(q_i)=BR_c^L(q_i)$ for all $q_i\in[q_i',\varphi_v]$. As a first step, I compare $V_c\left(q_i,\tilde q_c(q_i)\right)$ and $V_c\left(q_i, \varphi_v\right)$. When $q_c<q_i$ and $(q_c-q_i)^2\leq\frac{\xi}{16\gamma^2 k (\varphi_m-\varphi_v)^2}$, we have that $h(r^*(q_i,q_c))=\tau_m(q_i,q_c)$, and therefore $V_c\left(q_i,\tilde q_c(q_i)\right)=\tau_m(q_i,\tilde q_c(q_i))$ and $V_c\left(q_i, \varphi_v\right)=l(r^*(q_i,\varphi_v))=\gamma(\varphi_v-q_i)^2-\tfrac{1}{2}\sqrt{\tfrac{\xi}{k}}$. The challenger's expected utility from ``best replying to the left'' with $\tilde q_c(q_i)$ is
	\begin{displaymath}
	V_c\left(q_i,\tilde q_c(q_i)\right)= \frac{1}{2}\sqrt{\frac{\xi}{k}} - \gamma \left[2 (\varphi_v-q_i)+\frac{\sqrt{\frac{\xi}{k}} }{4\gamma(\varphi_v-\varphi_m)}\right]\frac{\sqrt{\frac{\xi}{k}} }{4\gamma(\varphi_v-\varphi_m)}.
	\end{displaymath}
	Thus, the condition $\tau_m(q_i,\tilde q_c(q_i))=l(r^*(q_i,\varphi_v))$ can be rewritten as
	\begin{displaymath}
	\gamma(\varphi_v-q_i)^2 + 2\gamma \frac{\sqrt{\frac{\xi}{k}} }{4\gamma(\varphi_v-\varphi_m)}(\varphi_v-q_i)+\gamma\left(\frac{\sqrt{\frac{\xi}{k}} }{4\gamma(\varphi_v-\varphi_m)}\right)^2-\sqrt{\tfrac{\xi}{k}} =0.
	\end{displaymath}
	By solving a quadratic equation in $(\varphi_v-q_i)$, I obtain that the threshold $\bar q'$ such that $V_c\left(\bar q',BR_c^R(\bar q')\right)=V_c\left(\bar q',BR_c^L(\bar q')\right)$ is,
	\begin{displaymath}
	\bar q' = \varphi_v + \frac{\sqrt{\frac{\xi}{k}}}{4\gamma(\varphi_v-\varphi_m)} - \sqrt[4]{\frac{\xi}{\gamma^2 k}}.
	\end{displaymath}
	Since $V_c(q_i,q_i)=0$, I do not need to consider the case where $BR_c^R(q_i)=q_i$ as the challenger can always get a positive expected utility $V_c(q_i,q_c')=\gamma(q_c'-q_i)^2\geq 0$ by proposing $q_c'=\max\{\varphi_m,\tilde q_c(q_i)\}$. Since $BR_c^L(q_i)=\varphi_m$ when $\tilde q_c(q_i)<\varphi_m$ and $q_i\in[\varphi_m,\varphi_v]$, the comparison between $V_c(q_i,\tilde q_c(q_i))$ and $V_c(q_i,\varphi_v)$ makes sense as long as $\tilde q_c(q_i)\geq \varphi_m$ for all $q_i\in[\bar q',\varphi_v]$.  Given that $\frac{\partial \tilde q_c(q_i)}{\partial q_i}=1$, the condition is $\tilde q_c(\bar q')\geq \varphi_m$ or $k\geq \bar k$, where
	\[
	\bar k= \frac{\xi}{\gamma^2 (\varphi_v-\varphi_m)^4}.
	\]
	If $k\in\left(0,\bar k\right)$, then we have that $\tilde q_c(q_i)<\varphi_m$ and thus $BR_c^L(q_i)=\varphi_m$ for some $q_i\geq \bar q'$. In this case, the relevant comparison is between $V_c(q_i,\varphi_m)=
	\tau_m(q_i,\varphi_m)$ and $V_c(q_i,\varphi_v)$: by equating $\tau_m(q_i,\varphi_m)=l(r^*(q_i,\varphi_v))$ we get that the threshold is
	\begin{displaymath}
	\bar q''= \frac{\varphi_v+\varphi_m}{2}-\frac{\sqrt{\frac{\xi}{k}}}{4\gamma(\varphi_v-\varphi_m)}.
	\end{displaymath}
	Note that $\bar q ' = \bar q ''=\varphi_m+\frac{\sqrt{\frac{\xi}{k}}}{4\gamma(\varphi_v-\varphi_m)}$ when $k=\bar k$. Therefore, when $k\in\left(0,\bar k\right)$ we have that $BR_c(q_i)=\varphi_v$ for all $q_i\in[\varphi_m,\bar q'']$ and $BR_c(q_i)=BR_c^L(q_i)$ for $q_i\in\left[\bar q'',\varphi_v\right]$. Moreover, $BR_c^L(q_i)=\tilde q_c(q_i)$ as long as $\tilde q_c(q_i)\geq \varphi_m$, and  $BR_c^L(q_i)=\varphi_m$ otherwise. We have that $\tilde q_c(q_i)\geq \varphi_m$ when $q_i\geq \bar q'''$, where   
	\[
	\bar q''' = \varphi_m+\frac{\sqrt{\frac{\xi}{k}}}{4\gamma(\varphi_v-\varphi_m)}.
	\]	 
	The proposition follows by replacing $\eta (k)=\frac{\sqrt{\frac{\xi}{k}} }{4\gamma(\varphi_v-\varphi_m)}$.
\end{proof}

\subsubsection{Equilibrium Policy-making}\label{app:policy}

\begin{proof}[\bf Proof of Proposition~\ref{prop:eqmpol}]
	I denote by $\hat V_i(q_i)\equiv V_i(q_i,BR_c(q_i))$ the utility of the incumbent given that $q_i\in[\varphi_m,\varphi_v]$ and $q_c=BR_c(q_i)$, where the challenger's best response $BR_c(q_i)$ is as in Proposition~\ref{prop:bestresp}. Since an equilibrium is a sender-preferred PBE, when the challenger is indifferent between some policies, she selects the policy that is closer to the media outlet's bliss $\varphi_m$. Given that $h(r^*(q_i,q_c))=\tau_m(q_i,q_c)$ for $q_c=\max\{\varphi_m,\tilde q_c(q_i)\}$, we have that
	\[
	\hat V_i(q_i)=
	\begin{cases}
	- l(r^*(q_i,\varphi_v)) &  \text{if } q_{i}\in\left[\varphi_m,\varphi_v + \eta(k) - \sqrt[4]{\frac{\xi}{\gamma^2 k}}\right) \text{ and } k\geq\bar k\\
	
	-\tau_m(q_i,\tilde q_c(q_i)) & \text{if } q_{i}\in\left[\varphi_v + \eta(k) - \sqrt[4]{\frac{\xi}{\gamma^2 k}},\varphi_v\right] \text{ and } k\geq\bar k  \\
	-l(r^*(q_i,\varphi_v)) & \text{if } q_{i}\in\left[\varphi_m,\frac{\varphi_v+\varphi_m}{2}-\eta(k)\right) \text{ and } k\in\left(0,\bar k\right]\\
	
	-\tau_m(q_i,\varphi_m) &  \text{if } q_{i} \in \left[\frac{\varphi_v+\varphi_m}{2}-\eta(k), \varphi_m+\eta(k)\right] \text{ and } k\in\left(0,\bar k\right]\\
	
	- \tau_m(q_i,\tilde q_c(q_i)) & \text{if } q_{i} \in\left[\varphi_m+\eta(k),\varphi_v\right]\text{ and } k\in\left(0,\bar k\right],\\
	\end{cases}
	\]
	where $\eta (k)=\frac{\sqrt{\xi/k} }{4\gamma(\varphi_v-\varphi_m)}$ and $\bar k = \frac{\xi}{\gamma^2 (\varphi_v-\varphi_m)^4}$. Henceforth, I will use the following notation: $\bar q'=\varphi_v + \eta(k) - \sqrt[4]{\frac{\xi}{\gamma^2 k}}$, $\bar q''=\frac{\varphi_v+\varphi_m}{2}-\eta(k)$, and $\bar q'''=\varphi_m+\eta(k)$.
	
	When $k\geq\bar k$, the utility $\hat V_i(q_i)$ is increasing in $q_i$ until $q_i=\bar q'$, and decreasing afterwards, as $\frac{\partial \hat V_i(q_i)}{\partial q_i}=2\gamma(\varphi_v-q_i)>0$ for $q_i\in[\varphi_m,\bar q']$ and $\frac{\partial \hat V_i(q_i)}{\partial q_i}=2\gamma(\varphi_m-q_i)<0$ for $q_i\in[\bar q',\varphi_v]$. Since $- l(r^*(\bar q',\varphi_v))=-\tau_m(q_i,\tilde q_c(q_i))$, it follows that $q_i=\bar q'$ maximizes $\hat V_i(q_i)$ for $k\geq \bar k$. The challenger replies to $q_i=\bar q'$ with the sender-preferred policy $q_c=\tilde q_c(\bar q')$.
	
	There are three different configurations to consider when the misreporting costs are lower than $\bar k$: (i) when $\tfrac{\bar k}{4}\leq k<\bar k$, the relevant thresholds are contained within the bliss policies of the voter and the media outlet, $\varphi_m\leq\bar q''<\bar q'''<\varphi_v$; (ii) when $\tfrac{\bar k}{16}\leq k<\tfrac{\bar k}{4}$, the threshold $\bar q''$ is lower than the media outlet's bliss $\varphi_m$, and we have $\bar q''<\varphi_m<\bar q'''\leq \varphi_v$; (iii) when $0<k<\tfrac{\bar k}{16}$, both thresholds are beyond the bliss policies,  $\bar q''<\varphi_m<\varphi_v<\bar q'''$.

	In the first case, where $k \in \left[\bar k /4,\bar k\right)$, we have that $\frac{\partial \hat V_i(q_i)}{\partial q_i}=\frac{\partial -l(r^*(q_i,\varphi_v))}{\partial q_i}=2\gamma (\varphi_v-q_i)>0$ for $q_i\in[\varphi_m,\bar q'']$; $\frac{\partial \hat V_i(q_i)}{\partial q_i}=\frac{\partial -\tau_m(q_i,\varphi_m)}{\partial q_i}=2\gamma (\varphi_m-q_i)<0$ for $q_i\in[\bar q'',\bar q''']$; and $\frac{\partial \hat V_i(q_i)}{\partial q_i}=\frac{\partial -h(r^*(q_i,\tilde q_c(q_i)))}{\partial q_i}=-\frac{\sqrt{\xi/k}}{2(\varphi_v-\varphi_m)}<0$ for $q_i\in[\bar q''',\varphi_v]$. Since $-l(r^*(\bar q'',\varphi_v))=-\tau_m(\bar q'',\varphi_m)$ and $-h(r^*(q_i,\tilde q_c(q_i)))=-\tau_m(\bar q''',\varphi_m)$, we have that when $k \in \left[\bar k /4,\bar k\right)$ the incumbent maximizes $\hat V_i(q_i)$ by selecting $q_i=\bar q''$, and the challenger best responds to $\bar q''$ by proposing the sender-preferred policy $q_c=\varphi_m$.
	
	The same line of reasoning can be extended to the other two cases: when $\tfrac{\bar k}{16}\leq k<\tfrac{\bar k}{4}$, the incumbent proposes $q_i=\bar q''$ and the challenger replies with $q_c=\varphi_m$; when $0<k<\tfrac{\bar k}{16}$, both the incumbent and the challenger propose $q_j=\varphi_m$, $j\in\{i,c\}$. The proposition follows by denoting $q^*_i(k)=\argmax_{q_i\in\mathbb{R}}\hat V_i(q_i)$ and $q_c^*(q_i,k)=\min BR_c(q_i)$.
\end{proof}

\begin{Acorollary}\label{cor:support}
	The equilibrium in Proposition~\ref{prop:eqmpol} exists if and only if
	\[
	\phi\geq\min\left\{ \gamma(\varphi_v-\varphi_m)^2+\tfrac{1}{2}\sqrt{\xi/k}, 3\gamma(\varphi_v-\varphi_m)^2 \right\}.
	\] 
\end{Acorollary}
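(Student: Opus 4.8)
The plan is to reduce the existence of the equilibrium in Proposition~\ref{prop:eqmpol} to the following purely subgame-level question: for which $\phi$ does the sender-preferred equilibrium of the communication subgame $\hat\Gamma(q)$ of Lemma~\ref{lemma:senderpm} exist for every policy pair $q\in[\varphi_m,\varphi_v]^2$? In one direction, once that subgame equilibrium is available at all such $q$, the challenger's value $V_c(q)$, the best-response map $BR_c(\cdot)$ of Proposition~\ref{prop:bestresp}, and the incumbent's optimum of Proposition~\ref{prop:eqmpol} are all defined exactly as computed there; moreover, writing $q_i=\varphi_v-a$, $q_c=\varphi_v-b$ with $a,b\in[0,\Delta]$, $\Delta:=\varphi_v-\varphi_m$, one has $\tau_v(q)=\gamma(a^2-b^2)$ and $\tau_m(q)=\gamma(b-a)(2\Delta-a-b)$, so $|\tau_v(q)|,|\tau_m(q)|\le\gamma\Delta^2<\phi$ throughout the box; hence the voter's threshold is always interior, the outlet is never superfluous in the relevant range, policies outside $[\varphi_m,\varphi_v]$ are dominated (as in the proof of Proposition~\ref{prop:eqmpol}), and no candidate has a profitable deviation. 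In the other direction, if the subgame equilibrium of Lemma~\ref{lemma:senderpm} fails at some $q\in[\varphi_m,\varphi_v]^2$, then $\hat\Gamma(q)$ either admits a different continuation, changing $V_c$ and hence $BR_c$, or none at all, so the profile of Proposition~\ref{prop:eqmpol} cannot be an equilibrium. Thus the corollary reduces to the subgame question.

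Next I would pin down, for a fixed $q$, when the equilibrium of Lemma~\ref{lemma:senderpm} exists. That equilibrium pools the states of an interval whose lower endpoint is $r^*(q)$ and whose upper endpoint is $h(r^*(q))\le\tau_m(q)$ (for $q_c<q_i$, i.e.\ $\tau_v(q)<\tau_m(q)$; symmetrically the pooling interval is $(l(r^*(q)),r^*(q))$ with $l(r^*(q))\ge\tau_m(q)$ when $q_c>q_i$) and prescribes truthful reporting outside it. Since the pooled states must lie in $\Theta=[-\phi,\phi]$ and the posterior at the pooling report must equal $\tau_v(q)$, the binding requirement is that this interval be contained in $[-\phi,\phi]$; concretely this is $\phi\ge-r^*(q)$ when $q_c<q_i$ and $\phi\ge r^*(q)$ when $q_c>q_i$, the remaining ``tail'' conditions reducing to $\phi>|\tau_m(q)|$, which is slack by the bound above. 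The case $q_c=q_i$ imposes nothing. It then remains to maximize $-r^*(q)$ (resp.\ $r^*(q)$) over the box.

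For the maximization, in the case $q_c<q_i$ (so $a<b$), $r^*(q)=\max\{\tau_v(q)-\tfrac12\sqrt{\xi/k},\,2\tau_v(q)-\tau_m(q)\}$ gives $-r^*(q)=\min\{-\tau_v(q)+\tfrac12\sqrt{\xi/k},\,\tau_m(q)-2\tau_v(q)\}$, where $-\tau_v(q)=\gamma(b^2-a^2)\le\gamma\Delta^2$ and $\tau_m(q)-2\tau_v(q)=\gamma(b-a)(2\Delta+a+b)\le 3\gamma\Delta^2$, both bounds attained at $(q_i,q_c)=(\varphi_v,\varphi_m)$; hence $\max_q(-r^*(q))=\min\{\gamma\Delta^2+\tfrac12\sqrt{\xi/k},\,3\gamma\Delta^2\}$. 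The case $q_c>q_i$ is symmetric, with maximum at $(q_i,q_c)=(\varphi_m,\varphi_v)$ and the same value. Combining the two directions, the equilibrium exists if and only if $\phi\ge\min\{\gamma(\varphi_v-\varphi_m)^2+\tfrac12\sqrt{\xi/k},\,3\gamma(\varphi_v-\varphi_m)^2\}$.

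The main obstacle is the reduction step, not the algebra: precisely because this corollary is what licenses $\phi$ to fall below the standing bound $3\gamma(\varphi_v-\varphi_m)^2$, one cannot invoke that assumption and must re-verify ``by hand'' that, at the margin, the outlet is still influential on path and that no candidate gains from a policy making the outlet superfluous or from leaving $[\varphi_m,\varphi_v]$ — the argument above hinges on the uniform bound $|\tau_j(q)|\le\gamma(\varphi_v-\varphi_m)^2$ on the box doing exactly this work. The remaining care is bookkeeping: checking that $-r^*(q)$ and $r^*(q)$ are maximized at the two extreme opposite-bliss pairs rather than at an interior configuration, which the factorizations $-\tau_v(q)=\gamma(b-a)(b+a)$ and $\tau_m(q)-2\tau_v(q)=\gamma(b-a)(2\Delta+a+b)$ make transparent.
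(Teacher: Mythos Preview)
Your approach is essentially the same as the paper's: both reduce existence to the requirement that the pooling report $r^*(q)$ of Lemma~\ref{lemma:senderpm} stay inside $[-\phi,\phi]$, identify the extreme opposite-bliss pair as the binding configuration, and evaluate $r^*$ there to obtain $\min\{\gamma(\varphi_v-\varphi_m)^2+\tfrac12\sqrt{\xi/k},\,3\gamma(\varphi_v-\varphi_m)^2\}$. The paper's proof is shorter and deviation-based: it takes the case $k\in(0,\bar k/4]$ where $q^*(k)=(\varphi_m,\varphi_m)$, notes that if $\phi<r^*(\varphi_m,\varphi_v)$ the challenger's deviation to $\varphi_v$ is profitable because no report can induce a vote for the incumbent, and then argues sufficiency from the fact that $\tau_v(q)$ is maximized over the box at $(\varphi_m,\varphi_v)$. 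Your version instead frames necessity and sufficiency uniformly as ``the Lemma~\ref{lemma:senderpm} subgame equilibrium must exist at every $q\in[\varphi_m,\varphi_v]^2$,'' treats both orderings $q_c\lessgtr q_i$ explicitly, and does the maximization via the $(a,b)$ factorizations. This buys you a cleaner sufficiency argument and an explicit check that the $q_c<q_i$ half of the box gives the same bound; the trade-off is that your necessity reduction is slightly looser than the paper's concrete deviation (failure of the Lemma~\ref{lemma:senderpm} continuation at an off-path $q$ does not by itself kill the overall PBE unless it changes on-path incentives---the paper's deviation argument makes that link directly).
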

\begin{proof}
	Consider the sender-preferred equilibrium of $\hat\Gamma$ in Lemma~\ref{lemma:senderpm} and the equilibrium policies in Proposition~\ref{prop:eqmpol}. When $k\in(0, \bar k/4]$, we have that $q^*(k)=(\varphi_m,\varphi_m)$. Suppose that the challenger deviates from the prescribed equilibrium strategy by proposing $q_c=\varphi_v$. If $\phi<r^*(\varphi_m,\varphi_v)$, then there is no report that can convince the voter to cast a ballot for the incumbent, and the deviation would be profitable. Therefore, to ensure the existence of an equilibrium as in Proposition~\ref{prop:eqmpol}, it is necessary that $\phi\geq r^*(\varphi_m,\varphi_v)$. Given that for $q$ such that $q_j\in[\varphi_m,\varphi_v]$, $j\in\{i,c\}$, $\tau_v(q)$ is maximized when $q_i=\varphi_m$ and $q_c=\varphi_v$, the condition is also sufficient. The proof is completed by $\tau_v(\varphi_m,\varphi_v)=\gamma(\varphi_v-\varphi_m)^2=-\tau_m(\varphi_m,\varphi_v)$.
\end{proof}

\subsection{Voter's Welfare}\label{app:welfare}
To ease notation, in this section I will use $q_c^*(q_i^*(k),k)\equiv q_c^*(q_i^*(k))$.
\begin{proof}[\bf Proof of Proposition~\ref{prop:welfare}]
	Proposition~\ref{prop:eqmpol} shows that equilibrium policies $q^*(k)$ are such that $q^*_i(k)\geq q_c^*(q_i^*(k))$ for every $k>0$. Moreover, since $\left(q_c^*\left(q_i^*(k)\right)-q_i^*(k)\right)^2\leq\frac{\xi}{16\gamma^2 k (\varphi_m-\varphi_v)^2}$, we have that $h(r^*(q^*(k)))=\tau_m(q^*(k))$ for every $k>0$. Given the equilibrium of the communication subgame $\hat\Gamma$ (Proposition~\ref{prop:monopoly} and Lemma~\ref{lemma:senderpm}) and that $\theta\sim\mathcal{U}[-\phi,\phi]$, the incumbent wins with ex-ante probability $\frac{\phi-\tau_m(q^*(k))}{2\phi}$. When the incumbent is elected, the voter receives an expected utility of $-\gamma(\varphi_v-q_i^*(k))^2+\mathbb{E}_f\left[\theta | \theta>\tau_m(q^*(k))\right]$, where $\mathbb{E}_f\left[\theta | \theta>\tau_m(q^*(k))\right]=\frac{\phi+\tau_m(q^*(k))}{2}$. When the challenger is elected, the voter obtains a utility of $ -\gamma (\varphi_v-q_c^*(q_i^*(k)))^2$. Therefore, the voter's equilibrium welfare can be written as
	\begin{equation}\label{eq:welfare}
	\begin{split}
	W_v^*(k)& =  \left(\frac{\tau_m({ q^*(k)})+\phi}{2\phi}\right) \left[ -\gamma (\varphi_v-q_c^*(q_i^*(k)))^2 \right] \\
	& +   \left(\frac{\phi - \tau_m({q^*(k)})}{2\phi}\right) \left[ -\gamma(\varphi_v-q_i^*(k))^2 + \frac{\phi + \tau_m({ q^*(k)})}{2}\right].
	\end{split}
	\end{equation}
	
	When $k\in\left(0,\bar k/4\right)$, since $\tau_m(q)=0$ for $q=(\varphi_m,\varphi_m)$, equation \eqref{eq:welfare} reduces to $W_v^*(k)= -\gamma (\varphi_v-\varphi_m)^2+\phi /4$. Therefore, the voter's equilibrium welfare $W_v^*(k)$ is independent of $k$ for all $k\in\left(0,\bar k/4\right)$.
	
	Consider now the case where $k \in \left[\bar k /4, \bar k\right]$. The derivative of the voter's welfare with respect to the misreporting costs $k$ is
	\begin{equation}\label{eq:parwelfare1}
	\begin{split}
	\frac{\partial W_v^*(k)}{\partial k}&= \left(\frac{\phi - \tau_m({\ q^*(k)})}{2\phi}\right) \left[ \frac{1}{2}\frac{\partial \tau_m({ q^*(k)})}{\partial k}-\frac{\partial \tau_v({q^*(k)})}{\partial k} \right]\\
	&-\left( \frac{1}{2\phi}\frac{\partial \tau_m({ q^*(k)})}{\partial k} \right) \left[ \frac{\phi+\tau_m({q^*(k)})}{2} -\tau_v({ q^*(k)}) \right].
	\end{split}
	\end{equation}
	
	For $k\in \left[\bar k /4, \bar k\right]$, we obtain the following derivatives: $\frac{\partial q_i^*(k)}{\partial k}= \frac{1}{4\gamma(\varphi_v-\varphi_m)}\frac{1}{2\sqrt{\frac{\xi}{k}}}\frac{\xi}{k^2}>0$, $\frac{\partial\tau_m({ q^*(k)})}{\partial k}= 2\gamma(q_i^*(k)-\varphi_m)\frac{\partial q_i^*(k)}{\partial k}>0$, and $\frac{\partial\tau_v({ q^*(k)})}{\partial k}= -2\gamma(\varphi_v-q_i^*(k))\frac{\partial q_i^*(k)}{\partial k}<0$. Moreover, notice that $\tau_v({q^*(k)})-\tau_m({q^*(k)})=2\gamma(\varphi_m-q_i^*(k))(\varphi_v-\varphi_m)$ and $\tau_m({q^*(k)})=\gamma(\varphi_m-q_i^*(k))^2$. Therefore, equation \eqref{eq:parwelfare1} can be rewritten as
	\begin{equation}\label{eq:parwelfare2}
	\frac{\partial W_v^*(k)}{\partial k}=\frac{\gamma}{\phi}\frac{\partial q_i^*(k)}{\partial k}\left[(\phi-\tau_m({q^*(k)}))(\varphi_v-q_i^*(k)) -2\gamma(q_i^*(k)-\varphi_m)^2 (\varphi_v-\varphi_m) \right].
	\end{equation}
	
	As $k$ increases within $\left[\bar k /4,\bar k\right]$, the term $(\phi-\tau_m({q^*(k)}))(\varphi_v-q_i^*(k))$ continuously decreases while the term $(q_i^*(k)-\varphi_m)^2(\varphi_v-\varphi_m)$ continuously increases. Therefore, the derivative in equation \eqref{eq:parwelfare2} is decreasing in $k$ as $\frac{\gamma}{\phi}\frac{\partial q_i^*(k)}{\partial k}>0$ and $\frac{\partial^2 q_i^*(k)}{\partial k^2}<0$. Hence, to show that $\frac{\partial W_v^*(k)}{\partial k}>0$ for all $k\in\left[\bar k /4, \bar k\right]$, it is sufficient to show that $\frac{\partial W_v(k)}{\partial k}\big|_{\substack{k=\bar k}}>0$. Since by assumption $\phi> \gamma(\varphi_v-\varphi_m)^2$, I replace $\phi$ by $\gamma(\varphi_v-\varphi_m)^2$ and $q_i^*\left(\bar k\right)$ by $\frac{\varphi_v+3\varphi_m}{4}$ in equation \eqref{eq:parwelfare2} to obtain that
	\begin{displaymath}
	\left[\gamma\left(\varphi_v-\varphi_m\right)^2-\tau_m\left(q_i^*\left(\bar k\right),\varphi_m\right)\right]\left(\varphi_v-q_i^*\left(\bar k\right)\right) -2\gamma\left(q_i^*\left(\bar k\right)-\varphi_m\right)^2 \left(\varphi_v-\varphi_m\right)>0.
	\end{displaymath}
	Therefore, the voter's welfare $W_v(k)$ is strictly increasing in $k$ for every $k\in\left[\bar k/4,\bar k\right]$.
	
	Consider now the case where the misreporting costs are relatively high, $k\geq \bar k$. I rewrite the welfare function in equation \eqref{eq:welfare} by explicitly separating the expected gains from quality, i.e.,
	\begin{equation}
	\begin{split}
	W_v^*(k) &=  \left(\frac{\tau_m({ q^*(k)})+\phi}{2\phi}\right) \left[ -\gamma (\varphi_v-q_c^*(q_i^*(k)))^2 \right]\\
	&+   \left(1-\frac{\tau_m({ q^*(k)})+\phi}{2\phi} \right) \left[ -\gamma(\varphi_v-q_i^*(k))^2\right] + \frac{\phi^2-\tau_m^2(q^*(k))}{4\phi}.
	\end{split}
	\end{equation}
	The threshold $\tau_m(q^*(k))=\gamma\left(2\varphi_m-q_c^*\left(q_i(k)\right)-q_i^*(k)\right)\left(q_c^*\left(q_i(k)\right)-q_i^*(k)\right)$ is positive as $q_i^*(k)>q_c^*(q_i(k))\geq \varphi_m$ for every finite $k\geq \bar k$. I write the derivative $\frac{\partial \tau_m(q^*(k))}{\partial k}$ as
	\begin{displaymath}
	\begin{split}
	\frac{\partial \tau_m(q^*(k))}{\partial k} & =  \gamma\bigg[ \left(-\frac{\partial q_c^*(q_i^*(k))}{\partial k}-\frac{\partial q_i^*(k)}{\partial k}\right)\left(q_c^*(q_i^*(k))-q_i^*(k)\right)\\
	& + \left(2\varphi_m-q_c^*(q_i^*(k))-q_i^*(k)\right)\left(\frac{\partial q_c^*(q_i^*(k))}{\partial k}-\frac{\partial q_i^*(k)}{\partial k}\right)\bigg]\\
	& = \frac{\gamma}{\left(q_c^*(q_i^*(k))-q_i^*(k)\right)}\bigg[ \left(-\frac{\partial q_c^*(q_i^*(k))}{\partial k}-\frac{\partial q_i^*(k)}{\partial k}\right)\left(q_c^*(q_i^*(k))-q_i^*(k)\right)^2\\
	& + \tau_m(q^*(k))\left(\frac{\partial q_c^*(q_i^*(k))}{\partial k}-\frac{\partial q_i^*(k)}{\partial k}\right)\bigg]<0,
	\end{split}
	\end{displaymath}
	where we obtain $\frac{\partial \tau_m(q^*(k))}{\partial k}<0$ because, for every finite $k\geq \bar k$, it is the case that $q_c^*(q_i^*(k))<q_i^*(k)$, $\frac{\partial q_c^*(q_i^*(k))}{\partial k}>\frac{\partial q_i^*(k)}{\partial k}>0$, $\tau_m(q^*(k))>0$, and $\frac{\partial q_c^*(q_i^*(k))}{\partial k}-\frac{\partial q_i^*(k)}{\partial k}=\frac{\xi }{8 \gamma  k^2 (\varphi_v-\varphi_m) \sqrt{\frac{\xi }{k}}}>0$.

	Since $\frac{\partial}{\partial k}\left(\frac{\tau_m(q^*(k))+\phi}{2\phi}\right)=\frac{1}{2\phi}\frac{\partial \tau_m(q^*(k))}{\partial k}<0$, the probability that the challenger (incumbent) wins the election decreases (increases) as $k$ increases. Both policies $q_i^*(k)$ and $q_c^*(q_i^*(k))$ increase with $k\geq\bar k$, with $\lim_{k \to \infty}q_i^*(k)=\lim_{k \to \infty}q_c^*(q_i^*(k))=\varphi_v$. Since $\varphi_v>q_i^*(k) > q_c^*(q_i^*(k))$ for every finite $k\geq\bar k$, the voter always prefers policy $q_i^*(k)$ to $q_c^*(q_i^*(k))$. Moreover, the expected gains from quality are increasing in $k$ since $\frac{\partial}{\partial k}\left(\frac{\phi^2-\tau_m^2(q^*(k))}{4\phi}\right)= -\frac{\tau_m(q^*(k))}{2\phi}\frac{\partial \tau_m(q^*(k))}{\partial k}>0$. Therefore, as $k$ increases, the voter has better policy proposals, a higher probability of implementing her favorite policy, and higher expected gains from quality. It follows that, for every finite $k\geq \bar k$, $\frac{\partial W_v^*(k)}{\partial k}>0$. The proof is completed by noting from equation \eqref{eq:welfare} that $\lim_{k\to\infty}W_v^*(k)=\hat W_v=\phi/4$.
\end{proof}

\begin{proof}[\bf Proof of Corollary~\ref{cor:nomedia}]
	In the absence of a media outlet, the median voter theorem holds and both candidates offer $\varphi_v$. The voter, being uninformed, cannot do better than select candidates randomly given proposals $q=(\varphi_v,\varphi_v)$. Therefore, the voter's expected payoff without a media outlet is $-\gamma(\varphi_v-\varphi_v)^2+\mathbb{E}_f[\theta]=0$. From Proposition~\ref{prop:welfare} we have that the welfare of the voter is at its minimum for $k\in\left(0,\bar k/4\right]$. Moreover, $W_v^*(k)$ is continuous and increasing in $k$, and $W_v^*(k)=-\gamma(\varphi_v-\varphi_m)^2+\phi/4$ for all $k\in\left(0,\bar k/4\right]$. Therefore, if $-\gamma(\varphi_v-\varphi_m)^2+\phi/4<0$ there exists a $k'>\bar k/4$ such that $W_v^*(k)< 0$ for all $k\in\left(0,k'\right)$ and $W_v^*(k)>0$ for all $k>k'$.
\end{proof}

\begin{proof}[\bf Proof of Lemma~\ref{lemma:increg}]
	The proof follows directly from maximizing $\iota(k)=\frac{\phi - \tau_m(q^*(k))}{2\phi}$ with respect to $k$, where $q^*(k)$ is as in Proposition~\ref{prop:eqmpol}. Since $\tau_m(q^*(k))>0$ for every finite $k>\bar k/4$ and $\tau_m(q^*(k))=0$ for every $k\in\left(0,\bar k/4\right]$, it follows that $\iota(k)$ is maximized in $k\in\left(0,\bar k/4\right]$, where $\iota(k)=\frac{1}{2}$. Moreover, since $\lim_{k\to\infty} \tau_m(q^*(k))=0$, then $\lim_{k\to\infty} \iota(k)=1/2$.
\end{proof}

\section{Supplementary Appendix}\label{app:suppl}

\subsection{Ideological Media}

In this section, I consider the case where the media outlet's political gains are purely policy dependent. Consider the media outlet's preferences over candidates as stated in Section~\ref{sec:model}. Given the realized state $\theta$, the outlet payoff from persuading the voter is $\theta-\tau_m(q)$ if $\theta\in(\tau_v(q),\tau_m(q))$ and $\tau_m(q)-\theta$ if $\theta\in(\tau_m(q),\tau_v(q))$. From the candidates' perspective, the expected amount of resources that---given policies $q$---the media outlet can use for misreporting information is therefore $|\tau_m(q)|$, as $\mathbb{E}_f[\theta]=0$. Hereafter, I indicate the media outlet's expected political gains with $\hat\xi(q)=|\tau_m(q)|$. When $q_i=q_c$ the media outlet always reports truthfully the value of the state $\theta$ as there are no gains from persuasion (this is true also at the interim stage). When instead policies are different from each other, the outlet has an incentive to persuade the voter that is proportional to $|\tau_m(q)|$. Notice that, given any pair of policies $q$, the analysis of the communication subgame is unaltered.

I now turn to study the policy-making stage when $\hat\xi(q)=|\tau_m(q)|$. Hereafter, I refer to the analysis previously performed in Appendix~\ref{app:policystage}. Consider the challenger's ``best response to the left,'' with $q_c \leq q_i$. In this case, we have that $\hat\xi(q)=\tau_m(q)>0$. Recall that, in equilibrium, the challenger wins the election when $\theta<h(r^*(q))$, where $h(r^*(q))=\tau_v(q)+\tfrac{1}{2}\sqrt{\tau_m(q)/k}$. Differently than the case where the media outlet's gains $\xi$ are fixed, now the derivative $\frac{d h(r^*(q))}{d q_c}$ is not always positive. Since $h(r^*(q))$ is concave in $q_c$, $h(r^*(\cdot))$ has a maximum when $q_c$ is such that $\frac{d h(r^*(q))}{d q_c}=0$. Therefore, the challenger maximizes her payoff when $q_c$ is such that either $\frac{d h(r^*(q))}{d q_c}=0$ and $h(r^*(q))<\tau_m(q)$, or such that $\frac{d h(r^*(q))}{d q_c}>0$ and $h(r^*(q))=\tau_m(q)$. In particular, we have that
\[
\frac{d h(r^*(q))}{d q_c}= \underbrace{2\gamma(\varphi_v - q_c)}_{\geq0} + \underbrace{\frac{1}{4k}\sqrt{\frac{k}{\tau_m(q)}} 2\gamma(\varphi_m-q_c)}_{\leq0}.
\]
When $\frac{d h(r^*(q_i,q_c^*))}{d q_c}=0$, then the solution $q_c^*$ implicitly satisfies 
\[
\frac{\varphi_v-q_c^*}{q_c^*-\varphi_m} =\frac{1}{4k}\sqrt{\frac{k}{\tau_m(q_i,q_c^*)}}.
\]
The left-hand side of the above equation is always positive for $q_c\in (\varphi_m,\varphi_v)$, it is decreasing in $q_c$, and it has a vertical asymptote at $\varphi_m$. In particular, the LHS goes to $\infty$ as $q_c \downarrow \varphi_m$. The right-hand side of the above equation is always positive, increasing in $q_c$, and it has a vertical asymptote such that the RHS goes to $\infty$ as $q_c \uparrow q_i$. Therefore, the solution to the above equation is some $q_c^*\in (\varphi_m,q_i)$ such that $V_c(q_i,q_c^*)=h(r^*(q_i,q_c^*))>0$.

Suppose that the challenger's best response ``from the left'' is some $q_c^*\leq q_i$ such that $h(r^*(q_i,q_c^*))=\tau_m(q_i,q_c^*)$. In this case, there are two solutions to the condition $h(r^*(q))=\tau_m(q)$: the first is the trivial one where $q_c=q_i$, and $V_c(q)=h(r^*(q))=\tau_m(q)=0$; the second (implicit) solution is again some $q_c^*\in(\varphi_m,q_i)$. To see this, notice that $\tau_m(q)$ is positive and continuously decreasing in $q_c$, with $\tau_m(q_c=q_i)=0$. Instead, $h(r^*(q))$ has a global maximum at some $q_c^h<q_i$, with $h(r^*(q_i,q_c^h))>0$ while\footnote{Notice that $h(r^*(q_i,q_c=\varphi_m))=(\varphi_m-q_i)\left[\gamma(\varphi_v-\varphi_m+\varphi_v-q_i)+\sqrt{\gamma/4k}\right]<0$.}  $h(r^*(q_i,q_c=\varphi_m))<0$. Therefore, if $h(r^*(\cdot))$ and $\tau_m(\cdot)$ cross at a $q_c^*$ such that $\frac{d h(r^*(q_i,q_c^*))}{d q_c}>0$, then it must be that $q_c^*\in(\varphi_m,q_i)$ and $V_c(q_i,q_c^*)=h(r^*(q_i,q_c^*))>0$. This second solution gives a higher payoff to the challenger than $q_c=q_i$. Therefore, $q_c=q_i$ is never a best response ``to the left,'' and $BR^L_c(q_i)$ always ensures the challenger a second mover advantage as $V_c(BR^L_c(q_i),q_i)>0$. In contrast with the case where political gains $\xi$ are fixed, here the challenger does not offer the outlet's preferred policy $\varphi_m$ even when $k$ is relatively small.

The analysis of the ``best response to the right'' $BR^R_c(q_i)$, where $q_c \geq q_i$, is similar (see also Appendix~\ref{app:best}). In this case, we have that $\hat \xi(q) =-\tau_m(q)$ and, in equilibrium, the challenger wins when $\theta<l(r^*(q))=\tau_v(q)-\tfrac{1}{2}\sqrt{-\tau_m(q)/k}$. Therefore, as in the analysis in Appendix~\ref{app:best}, we still have that $\frac{d l(r^*(q))}{d q_c}=\frac{d h(r^*(q))}{d q_c}$. 

As in the case where political gains are fixed, the incumbent seeks to minimize the challenger's expected payoff. From the above analysis, we know that the challenger can secure an expected payoff of $V_c(q_i,BR_c(q_i))>0$ with a $q_c=BR_c(q_i)\neq q_i$. By proposing policies that are closer to $\varphi_v$, the incumbent reduces the challenger's payoff from proposing $BR^R_c(q_i)$ but increases the challenger's payoff from proposing $BR_c^L(q_i)$. To see this, consider $q_i'>q_i$ such that $\varphi_m\leq q_i<q_i'\leq\varphi_v$. When the incumbent proposes $q_i'$, the challenger has the option of proposing a $q_c'\in(BR^L_c(q_i),q_i')$ such that $q_i'-q_c'=q_i-BR^L_c(q_i)$. This strategy maintains unaltered the size of conflict of interest $|\tau_v(q)-\tau_m(q)|\propto|q_c-q_i|$, it increases the outlet's political gains $\hat\xi(q)=\tau_m(q)$, and it also increases the voter's threshold $\tau_v(q)$. Therefore, $V_c(q_i',q_c')=h(r^*(q_i',q_c'))>h(r^*(q_i,BR^L_c(q_i)))=V_c(q_i,BR^L_c(q_i)))>0$. A similar argument holds for the challenger's best response to the right $BR^R_c(q_i')$.

The extent to which the media outlet can persuade the voter depends on the ratio $\hat\xi(q)/k$. When misreporting costs grow arbitrarily large, $k\to+\infty$, the outlet always reveals the state and the policies of both candidates converge to $\varphi_v$. By contrast, when $k\to 0^+$, the policies of both candidates converge to $\varphi_m$. To see this, suppose that $q_i=\varphi_m$ and $q_c\in(\varphi_m,\varphi_v]$. No matter how close $q_i$ and $q_c$ are, the outlet obtains full persuasion as $k\to 0^+$, and therefore the incumbent wins when $\theta>l(r^*(\varphi_m,q_c))=\tau_m(\varphi_m,q_c)$. Since $\tau_m(\varphi_m,q_c)<0$, the expected payoff of the challenger is lower than the payoff she would get by proposing $\varphi_m$ as well, $V_c(\varphi_m,q_c)<0=V_c(\varphi_m,\varphi_m)$. A similar argument applies to the case where $q_c=\varphi_m$ and $q_i\in(\varphi_m,\varphi_v]$. Therefore, when $k\to 0^+$, candidates best reply to each other by proposing $\varphi_m$.

We obtain that equilibrium policies converge to $\varphi_v$ as $k\to+\infty$, to $\varphi_m$ as $k\to 0^+$, and diverge otherwise for intermediate levels of $k$, with $q_i^*(k)\neq q_c^*(q_i^*(k),k)$. As in the main analysis, policy divergence implies an increased conflict of interest between the media outlet and the voter, and thus more misreporting and persuasion.\footnote{Notice that, for this argument, it is irrelevant whether the challenger best responds to the left or to the right. Therefore, the results are robust to the challenger's tie breaking rule.} Therefore, the main results of this paper about the relationship between the rate of persuasion and the level of misreporting costs carry through even when using policy dependent political gains $\hat \xi(q)$ for the media outlet.  

\subsection{Equilibrium of the Communication Subgame}

\begin{Alemma}\label{lemma:perfect}
	The sender-preferred generic equilibrium of $\hat\Gamma$, where $\lambda=\tau_v$, is also the unique perfect sequential equilibrium \citep{grossman1986perfect} of $\hat\Gamma$.
\end{Alemma}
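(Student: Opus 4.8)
The plan is to establish two claims: (i) the sender-preferred profile of Lemma~\ref{lemma:senderpm} (the case $\lambda=\tau_v$ of Proposition~\ref{prop:monopoly}) is a perfect sequential equilibrium of $\hat\Gamma$, and (ii) no other profile is. I work throughout with the convention $\tau_m<\tau_v$ used in Appendix~\ref{app:communication}; the reverse case is symmetric. Recall that a perfect sequential equilibrium is a sequential equilibrium in which, at every report $r$, the voter's belief coincides with the posterior obtained by restricting the prior to the (essentially unique) \emph{credible} set of deviating types whenever such a set exists, and is otherwise unrestricted; a nonempty $K\subseteq\Theta$ is credible at $r$ if, letting $a$ be the voter's best response to the prior restricted to $K$, the set $K$ is exactly the set of types that weakly prefer $(r,a)$ to their equilibrium payoff (with at least one preferring it strictly). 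I first note that the structural Lemmata~\ref{lemma:monopmonot}--\ref{lemma:monopstrat} carry over to any perfect sequential equilibrium: the monotonicity and the ``no news below $\tau_m$'' arguments use only sequential rationality, and wherever the Intuitive Criterion was invoked to discipline off-path reports, the Grossman--Perry credibility test plays the same role, since a report above (or far from) the jamming report admits a credible set consisting only of high types endorsing the incumbent, forcing truthful reporting there. Hence any perfect sequential equilibrium has the reporting rule of Proposition~\ref{prop:monopoly} for some $\lambda\in\left[\tau_v,\tau_v+\tfrac12\sqrt{\xi/k}\right]$, and it remains to pin down $\lambda$ and to check the off-path beliefs.

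For claim~(i), fix the profile with $\lambda=\tau_v$. Its only off-path reports, apart from reports outside $\Theta$ (which no type would ever gain from sending and which therefore carry no credible set), lie in the open jamming interval $\left(l(\hat r(\tau_v)),\hat r(\tau_v)\right)$. Fix such a report $r'$ and suppose toward a contradiction that there is a credible set $K(r')$ at which the voter votes for the incumbent. Then $K(r')$ must be exactly the set of types that weakly gain from $(r',i)$; a direct computation from the equilibrium payoffs in Lemma~\ref{lemma:senderpm}, using the uniform prior and quadratic costs, identifies this set as the interval $\left(\max\{\tau_m,r'-\sqrt{\xi/k}\},\tfrac{r'+\hat r(\tau_v)}{2}\right)$ --- it comprises the conflict types close enough to $\tau_v$ to be made willing to mimic, together with the jamming types closer to $r'$ than to $\hat r(\tau_v)$. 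Its conditional mean is strictly below $\tau_v$ precisely because $r'<\hat r(\tau_v)$, so the voter would in fact vote for the challenger, contradicting credibility of $K(r')$. Since, symmetrically, no type gains from $(r',c)$ either, there is no credible set at $r'$, and the equilibrium's off-path belief $\mathbb{E}[\theta\mid r']<\tau_v$ is admissible; an analogous and easier check shows the on-path beliefs, pinned by Bayes' rule, admit no credible group deviation. Thus $\lambda=\tau_v$ is a perfect sequential equilibrium.

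For claim~(ii), take $\lambda\in\left(\tau_v,\tau_v+\tfrac12\sqrt{\xi/k}\right]$ and show the corresponding profile fails. Because $\lambda>\tau_v$, the report $\hat r(\tau_v)=\tau_v+\tfrac12\sqrt{\xi/k}$ lies strictly inside the jamming interval $\left(\lambda-\tfrac12\sqrt{\xi/k},\lambda+\tfrac12\sqrt{\xi/k}\right)$, hence is off-path, and the profile prescribes a vote for the challenger there. Consider instead the set $K$ of types that strictly gain from $(\hat r(\tau_v),i)$: the same computation gives $K=\left(\max\{\tau_m,\tau_v-\tfrac12\sqrt{\xi/k}\},\tfrac{\hat r(\tau_v)+\hat r(\lambda)}{2}\right)$, whose conditional mean equals $\tfrac34\tau_v+\tfrac14\lambda$ in the interior case (the boundary case is handled the same way). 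Since $\lambda>\tau_v$, this mean exceeds $\tau_v$, so the voter facing the prior restricted to $K$ strictly prefers the incumbent; and one verifies that $K$ is exactly the set of weak gainers given that response (types above $\hat r(\lambda)$ and types below $\tau_v-\tfrac12\sqrt{\xi/k}$ strictly lose). Hence $K$ is a credible set at $\hat r(\tau_v)$ inducing a vote for the incumbent, contradicting the prescribed challenger vote. Therefore $\lambda=\tau_v$ is the unique perfect sequential equilibrium.

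The main obstacle is twofold. First, one must check carefully that the structural reduction to the Proposition~\ref{prop:monopoly} family actually goes through with the Grossman--Perry test in place of the Intuitive Criterion --- in particular, ruling out additional pooling intervals above $\tau_v$ requires exhibiting, for each interposed off-path report, a credible set of high types that induces a vote for the incumbent. Second, the delicate point in both claims is \emph{self-consistency} of the candidate set $K$: one must verify that $K$ is exactly the set of types that weakly prefer deviating \emph{given the voter's best response to the prior restricted to $K$ itself}, with no gainer left outside $K$; once $K$ is correctly identified, the conditional-mean comparisons that drive both claims are routine.
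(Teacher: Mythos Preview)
Your proposal is correct and follows essentially the same route as the paper: identify the candidate credible set $K=\bigl(l(r'),\vartheta(r')\bigr)$ with $\vartheta(r')=\tfrac{r'+\hat r(\lambda)}{2}$, then compare its conditional mean to $\tau_v$. For $\lambda>\tau_v$ the off-path report $\hat r(\tau_v)$ yields a credible set with mean above $\tau_v$, breaking the equilibrium; for $\lambda=\tau_v$ every off-path $r'$ gives a mean below $\tau_v$, so no credible set induces the incumbent and the equilibrium survives.

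The only notable difference is in how you restrict to the Proposition~\ref{prop:monopoly} family. The paper takes the shortcut that perfect sequential equilibrium is stronger than the Intuitive Criterion, so any PBE already eliminated by the Intuitive Criterion cannot be a PSE; this immediately confines attention to the $\lambda$-indexed family. You instead assert that Lemmata~\ref{lemma:monopmonot}--\ref{lemma:monopstrat} can be re-proved with the Grossman--Perry test in place of the Intuitive Criterion. That is true but is more work, and you do not spell it out; the paper's one-line observation is both cleaner and avoids the burden you flag in your final paragraph. On the other hand, your explicit mean computation $\tfrac34\tau_v+\tfrac14\lambda$ for claim~(ii) is a concrete instance of the paper's qualitative argument (extending the interval $\bigl(l(\hat r(\tau_v)),\hat r(\tau_v)\bigr)$, whose mean is exactly $\tau_v$, upward to $\vartheta(\hat r(\tau_v))$), and your treatment of claim~(i) is slightly more careful in separately ruling out credible sets that would induce each action.
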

\begin{proof}
	First, since the concept of perfect sequential equilibrium (PSE) is stronger than the Intuitive Criterion, every PBE of the communication subgame that is eliminated by the Intuitive Criterion is not perfectly sequential. Consider now a generic equilibrium of $\hat \Gamma$ with $\lambda\in\left[\tau_v,\tau_v+\tfrac{1}{2}\sqrt{\xi/k}\right]$ as in Proposition~\ref{prop:monopoly}. I define $K(r,\lambda)$ as the set of types that can potentially gain by deliverning an off-path report $r\in(l(\hat r(\lambda)),\hat r(\lambda))$. Formally, $K(r,\lambda)=\left( l(r), \vartheta(r)\right)$, where $\vartheta(r)=\left\{\theta'|C(r,\theta')=C(\hat r(\lambda),\theta') \text{ and } r<\theta'<\hat r(\lambda)\right\}$. Notice that $\hat r(\tau_v)\in(l(\hat r(\lambda)),\hat r(\lambda))$ for every $\lambda\in\left(\tau_v,\tau_v+\tfrac{1}{2}\sqrt{\xi/k}\right]$, and it is thus off-path in these equilibria. Given that $\mathbb{E}_f[\theta|\theta\in(l(\hat r(\tau_v)),\hat r(\tau_v))]=\tau_v$ (see Proposition~\ref{prop:monopoly}), it follows that $\mathbb{E}_f[\theta|\theta\in(l(\hat r(\tau_v)),\vartheta(\hat r(\tau_v)))]>\tau_v$. Suppose that, in a generic equilibrium of $\hat \Gamma$ with $\lambda\in\left(\tau_v,\tau_v+\tfrac{1}{2}\sqrt{\xi/k}\right]$, upon observing the off-path report $\hat r(\tau_v)$ the voter conjectures that such a report has been delivered from types of sender in the set $K(\hat r(\tau_v),\lambda)$. The voter's updated ``consistent'' beliefs $p_K$ are the conditional distribution of the outlet's type given that $\theta\in K(\hat r(\tau_v),\lambda)$. Given beliefs $p_K$, the voter casts a ballot for the incumbent, as $\mathbb{E}_{p_K}[\theta]=\mathbb{E}_f[\theta|\theta\in(l(\hat r(\tau_v)),\vartheta(\hat r(\tau_v)))]>\tau_v$. This makes a deviation to $\hat r(\tau_v)$ profitable only for types $\theta\in K(\hat r(\tau_v),\lambda)$. Therefore, all generic equilibria of $\hat\Gamma$ with $\lambda\in\left(\tau_v,\tau_v+\tfrac{1}{2}\sqrt{\xi/k}\right]$ are \emph{not} perfectly sequential. Finally, consider the generic equilibrium with $\lambda=\tau_v$. In this case, every off-path report $r'\in(l(\hat r(\tau_v)),\hat r(\tau_v))$ induces consistent beliefs such that $\mathbb{E}_{p_K}[\theta]=\mathbb{E}_f[\theta|\theta\in(l(r'),\vartheta(r'))]<\tau_v$, and the voter casts a ballot for the challenger after observing $r'$. No type of sender would benefit from delivering any off-path report $r'\in(l(\hat r(\tau_v)),\hat r(\tau_v))$, and thus this equilibrium is perfectly sequential. Therefore, the generic equilibrium of $\hat\Gamma$ with $\lambda=\tau_v$ is the only perfect sequential equilibrium of $\hat\Gamma$.
\end{proof}

\begin{Alemma}\label{lemma:undefeated}
	The undefeated \citep{mailath1993belief} generic equilibrium of $\hat\Gamma$ is unique and it has $\lambda=\tau_v$.
\end{Alemma}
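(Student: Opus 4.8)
Following \citet{mailath1993belief}, the $\lambda=\tau_v$ generic equilibrium $E^{\ast}$ of Proposition~\ref{prop:monopoly} is \emph{defeated} by another equilibrium $E'$ if there is a report $m$ off-path under $E^{\ast}$ but sent by a non-empty set of types $K$ under $E'$, such that every type in $K$ weakly prefers its $E'$-payoff to its $E^{\ast}$-payoff (strictly for at least one), while the belief $E^{\ast}$ attaches to $m$ differs from the Bayes posterior $K$ induces under $E'$. I keep the standing assumption $\tau_m<\tau_v$ of this appendix, so that every type pooling on $\hat r(\tau_v)$ has $\hat m(\theta,q)=i$; the case $\tau_m>\tau_v$ is symmetric, and $\tau_m=\tau_v$ admits a unique, fully revealing equilibrium that is trivially undefeated. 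The plan is to show (Step~1) that every other generic equilibrium $E_\lambda$, $\lambda\in(\tau_v,\tau_v+\tfrac12\sqrt{\xi/k}]$, is defeated by $E^{\ast}$, and (Step~2) that $E^{\ast}$ is defeated by no equilibrium; together these give uniqueness.

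\emph{Step 1.} Use the message $m=\hat r(\tau_v)$. Since $\hat r(\cdot)$ is strictly increasing and $l(\hat r(\lambda))\le \hat r(\tau_v)<\hat r(\lambda)$, the report $\hat r(\tau_v)$ lies strictly inside the pooling interval of $E_\lambda$, hence is off-path there with off-path belief of mean below $\tau_v$; this differs from the $E^{\ast}$-posterior at $\hat r(\tau_v)$, which, being uniform on the symmetric interval $K=(l(\hat r(\tau_v)),\hat r(\tau_v))$, has mean exactly $\tau_v$ and elects the incumbent. It remains to check that every $\theta\in K$ strictly prefers $E^{\ast}$. If under $E_\lambda$ type $\theta$ pools on $\hat r(\lambda)$, then $\theta<\hat r(\tau_v)<\hat r(\lambda)$ forces a strictly larger misreporting cost $k(\hat r(\lambda)-\theta)^2>k(\hat r(\tau_v)-\theta)^2$ for the same winning endorsement; if instead type $\theta$ reports truthfully under $E_\lambda$, then $\theta<\tau_v$ and $\hat m(\theta,q)=i$ imply it elects the challenger for payoff $0$, whereas under $E^{\ast}$ it earns $\xi-k(\hat r(\tau_v)-\theta)^2>0$ since $|\hat r(\tau_v)-\theta|<\sqrt{\xi/k}$. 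Hence $E_\lambda$ is defeated by $E^{\ast}$.

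\emph{Step 2.} Up to reports outside $\Theta$ (which no type sends in any equilibrium, by monotonicity of $\rho$ in Lemma~\ref{lemma:monopmonot}, and hence cannot enter a defeating configuration), the reports off-path under $E^{\ast}$ are exactly the $r\in(l(\hat r(\tau_v)),\hat r(\tau_v))$. The natural candidate to defeat $E^{\ast}$, namely $\hat r(\lambda)$ for some $\lambda>\tau_v$, is \emph{on-path} under $E^{\ast}$ — it equals $\hat r(\lambda)\ge\hat r(\tau_v)$ and lies in $\Theta$ by the state-space condition for existence, so it is sent truthfully by type $\hat r(\lambda)$ — and therefore cannot be used. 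For any other off-path report $r\in(l(\hat r(\tau_v)),\hat r(\tau_v))$ and any equilibrium $E'$ sending $r$ with positive probability, the types sending $r$ form, by monotonicity, an interval $K'$; one shows $K'$ must lie inside $(l(\hat r(\tau_v)),\hat r(\tau_v))$ (types below $\tau_m$, above $\hat r(\tau_v)$, or below $\tau_v-\tfrac12\sqrt{\xi/k}$ would strictly prefer $E^{\ast}$, where they reveal truthfully for free or pool at strictly lower cost), and that no such $r<\tau_v+\tfrac12\sqrt{\xi/k}$ can induce the incumbent's election in $E'$, as a pool contained in $(l(\hat r(\tau_v)),\hat r(\tau_v))$ sending such an $r$ has conditional mean strictly below $\tau_v$. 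Thus the challenger is elected after $r$ in $E'$, so each $\theta\in K'$ earns $-k(r-\theta)^2\le 0$ there, strictly less than its $E^{\ast}$-payoff $\xi-k(\hat r(\tau_v)-\theta)^2>0$; the ``all of $K'$ weakly prefers $E'$'' requirement fails, so $E^{\ast}$ is undefeated. Combining with Step~1, $E^{\ast}$ is the unique undefeated generic equilibrium.

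\emph{Main obstacle.} The delicate point is the claim in Step~2 that no equilibrium has a report $r\in(l(\hat r(\tau_v)),\hat r(\tau_v))$ inducing the incumbent's election. For generic equilibria this is immediate from Proposition~\ref{prop:monopoly}, since their only pooling report is $\hat r(\lambda)\ge\hat r(\tau_v)$; extending it to arbitrary Bayesian equilibria requires combining the monotonicity of $\rho$ with the incentive constraint of the marginal type just below the pool — if that type's truthful report elects the challenger, the pool must lie below $\tau_v+\tfrac12\sqrt{\xi/k}$, bounding its conditional mean away from $\tau_v$; and if instead an ``understating'' pool of types above $\tau_v$ were to send $r$ and elect the incumbent, each such type would rather reveal truthfully, contradicting equilibrium. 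Carrying out this case analysis, together with the (measure-zero) boundary checks at $\lambda=\tau_v+\tfrac12\sqrt{\xi/k}$ and $l(\hat r(\tau_v))=\tau_m$, is the crux; as in Lemma~\ref{lemma:perfect}, the square-loss structure enters through $l(r)=r-\sqrt{\xi/k}$.
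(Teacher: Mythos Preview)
Your Step~1 is correct and coincides with the paper's argument: the report $\hat r(\tau_v)$ is off-path in every $E_{\lambda'}$ with $\lambda'>\tau_v$, and every type in $K=(l(\hat r(\tau_v)),\hat r(\tau_v))$ strictly prefers $E^{\ast}$, either because it pools at strictly lower cost or because it now elects its endorsed candidate.

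Your Step~2, however, overreaches. The lemma concerns the family of \emph{generic} equilibria characterized in Proposition~\ref{prop:monopoly}, and the paper's proof compares only within that family. Restricted to generic equilibria, Step~2 is almost immediate: the only reports off-path in $E^{\ast}$ but on-path in some $E_{\lambda'}$ are the truthful reports $r\in(l(\hat r(\tau_v)),l(\hat r(\lambda'))]$, each sent by the singleton $\{r\}$. Since $r\le l(\hat r(\lambda'))\le\tau_v$, the challenger is elected after $r$ in $E_{\lambda'}$, giving type $r$ payoff $0$; under $E^{\ast}$ the same type earns $\xi-k(\hat r(\tau_v)-r)^2>0$. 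Hence condition~(ii) of the defeat definition fails and no $E_{\lambda'}$ defeats $E^{\ast}$. You yourself note that for generic equilibria this is ``immediate''---that is exactly the paper's route, and it suffices.

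Your attempt to extend Step~2 to arbitrary Bayesian equilibria contains a genuine gap. The assertion that ``a pool contained in $(l(\hat r(\tau_v)),\hat r(\tau_v))$ sending such an $r$ has conditional mean strictly below $\tau_v$'' is false as a statement about intervals: for instance $K'=(\tau_v,\hat r(\tau_v))$ has mean $(\tau_v+\hat r(\tau_v))/2>\tau_v$. Ruling out such pools requires the equilibrium incentive constraints you allude to in the ``Main obstacle'' paragraph (e.g., a type just below $\hat r(\tau_v)$ would deviate upward), but the sketch does not close the argument---in particular, monotonicity of $\rho$ (Lemma~\ref{lemma:monopmonot}) is established only for generic equilibria, so invoking it for arbitrary PBE is circular. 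Since this extension is unnecessary for the lemma as stated, the clean fix is simply to restrict Step~2 to generic equilibria.
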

\begin{proof}
	Denote the set of generic equilibria of $\hat\Gamma$ with $GE(\hat\Gamma)$ (see Proposition~\ref{prop:monopoly}). With a slight abuse of notation, say that $u_m(\rho,p,\lambda,\theta)$ denotes the media outlet's payoff in the generic equilibrium $(\rho,p)$ with parameter $\lambda$ and when the state is $\theta$. From \cite{mailath1993belief}, $(\rho,p)\in GE(\hat \Gamma)$  defeats $(\rho',p')\in GE(\hat\Gamma)$ if there exists a report $r\in\Theta$ such that (i) $\rho'(\theta)\neq r$ for every $\theta\in\Theta$, and $\kappa(r)\equiv\{ \theta\in\Theta|\rho(\theta)=r\}\neq\varnothing$; (ii) for every $\theta\in\kappa(r)$, $u_m(\rho,p,\lambda,\theta)\geq u_m(\rho',p',\lambda,\theta)$, with the inequality being strict for some $\theta\in \kappa(r)$; (iii) there is some $\theta\in\kappa(r)$ such that $p'(\theta|r)$ is different from the conditional probability that the state is in $\kappa(r)$.
	
	Consider now the generic equilibrium of $\hat\Gamma$ with $\lambda=\tau_v$, and notice that $\hat r(\tau_v)$ is off-path in every other generic equilibrium of $\hat\Gamma$ with $\lambda'\in\left(\tau_v,\tau_v+\tfrac{1}{2}\sqrt{\xi/k}\right]$. For every $\theta\in (l(\hat r(\tau_v)),\hat r(\tau_v))$, we have that $u_m(\cdot,\lambda=\tau_v,\theta)=\xi-kC(\hat r(\tau_v),\theta)>u_m(\cdot,\lambda',\theta)$ as either $u_m(\cdot,\lambda',\theta)=0$ or $u_m(\cdot,\lambda',\theta)=\xi-kC(\hat r(\lambda'),\theta)$. Finally, $\mathbb{E}_f[\theta|\theta\in\kappa(\hat r(\tau_v))]=\tau_v$, and therefore $p'(\theta|\hat r (\tau_v))$ must differ from the conditional probability of $\theta\in\kappa(\hat r(\tau_v))$ for the voter to select the challenger upon observing $\hat r(\tau_v)$ in generic equilibria with $\lambda'\neq\tau_v$. Hence, the generic equilibrium of $\hat\Gamma$ with $\lambda=\tau_v$ defeats every other generic equilibria of $\hat\Gamma$. 
	
	Reports in $(l(\hat r(\tau_v)),l(\hat r(\lambda')))$ are the only reports that are on-path in a generic equilibrium of $\hat\Gamma$ with $\lambda'\in\left(\tau_v,\tau_v+\tfrac{1}{2}\sqrt{\xi/k}\right]$ but not when $\lambda=\tau_v$. In all these generic equilibria, upon observing a $r'\in(l(\hat r(\tau_v)),l(\hat r(\lambda')))$ the voter casts a ballot for the challenger. Therefore, no generic equilibrium of $\hat\Gamma$ with parameter $\lambda'$ can defeat the one with $\lambda=\tau_v$. Hence, the generic equilibrium of $\hat\Gamma$ with parameter $\lambda=\tau_v$ is the only undefeated generic equilibrium of the communication subgame. 
\end{proof}

\begin{Acorollary}\label{cor:payoffset}
	The set of equilibrium payoffs that the voter can obtain in a PBE robust to the Intuitive Criterion is $\mathcal{W}(k)=\left[W_v^*(k),\hat W_v\right]$, where $W_v^*(k)$ is as in equation \eqref{eq:welfare} and $\hat W_v=\phi/4$ is the full-information welfare.
\end{Acorollary}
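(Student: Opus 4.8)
The claim amounts to three things: (a) every payoff the voter obtains in a PBE robust to the Intuitive Criterion lies in $[W_v^*(k),\hat W_v]$; (b) both endpoints are attained by such an equilibrium; (c) the set of attainable payoffs is connected. The starting point for all three is Proposition~\ref{prop:monopoly}: given \emph{any} policies $q$, the generic equilibria of $\hat\Gamma$ form a one-parameter family indexed by $\lambda$ over an interval whose lower endpoint is the sender-preferred value $\lambda=\tau_v(q)$ and whose upper endpoint is a ``no-persuasion'' equilibrium. Writing $w(q,\lambda)$ for the voter's continuation payoff, the persuasion-region argument used in the proof of Lemma~\ref{lemma:senderpm} shows that moving $\lambda$ away from the sender-preferred endpoint only shrinks the set of states in which persuasion occurs; hence $w(q,\cdot)$ is continuous and minimized at $\lambda=\tau_v(q)$, with maximum equal to the payoff the voter would get given $q$ under full information about quality.

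For the upper bound in (a), observe that for every ballot $b$, state $\theta$ and policies $q$ one has $u_v(b,\theta,q)=-\gamma(\varphi_v-q_b)^2+\mathds{1}\{b=i\}\theta\le\max\{0,\theta\}$, the policy term being nonpositive and $\mathds{1}\{b=i\}\theta\le\max\{0,\theta\}$ in both cases $b=i$ and $b=c$. Taking expectations over the uniform $\theta$ gives $W_v\le\mathbb{E}_f[\max\{0,\theta\}]=\phi/4=\hat W_v$ in any equilibrium, independently of the selection. This bound is attained by the PBE that picks, after every $q$, the no-persuasion generic equilibrium of $\hat\Gamma$ (full revelation when $q_i=q_c$): under this continuation the outlet never affects the winner beyond revealing $\theta$, so each candidate's winning probability is governed only by $\tau_v(q)$; since $\tau_v(q)$ is increasing in $q_c$ on $[\varphi_m,\varphi_v]$, the challenger's dominant reply is $\varphi_v$ and then so is the incumbent's, yielding $q=(\varphi_v,\varphi_v)$, full revelation, and payoff $\phi/4=\hat W_v$. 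Existence of the required continuations, on and off path, follows from Corollary~\ref{cor:support} and the support footnote to Proposition~\ref{prop:monopoly}.

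For the lower bound, the sender-preferred PBE of Lemma~\ref{lemma:senderpm} and Proposition~\ref{prop:eqmpol} attains $W_v^*(k)$, so it remains to show no PBE robust to the Intuitive Criterion does worse. In such a PBE with on-path policies $\tilde q$ and on-path parameter $\tilde\lambda$, the pointwise minimality above already gives $w(\tilde q,\tilde\lambda)\ge w(\tilde q,\tau_v(\tilde q))$, so it suffices to show that any $\tilde q$ supportable on the path of such a PBE satisfies $w(\tilde q,\tau_v(\tilde q))\ge W_v^*(k)$. The key is that the sender-preferred continuation is not only the worst at fixed policies but also induces the \emph{least} voter-friendly policies: the challenger's incentive to undercut with a biased proposal is strongest precisely when the outlet will be fully persuasive, so selecting a less persuasive continuation can only push equilibrium proposals toward $\varphi_v$ (compare the tie-breaking discussion in Section~\ref{sec:extensions} and the best-response analysis of Appendices~\ref{app:best} and~\ref{app:policy}). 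I would formalize this by comparing, across continuation selections, both the induced equilibrium proposals and the associated $\tau_m(\cdot),\tau_v(\cdot)$, and checking---exactly as in the welfare computation behind Proposition~\ref{prop:welfare}, where the support condition $\phi\ge 3\gamma(\varphi_v-\varphi_m)^2$ is used---that $W_v^*(k)$ is the minimum. \textbf{This monotonicity step is the main obstacle}: unlike the communication subgame, here one cannot hold policies fixed, and the candidates' best responses depend on the whole continuation-selection function, including off path.

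Finally, for (c), I would exhibit a one-parameter family of PBE robust to the Intuitive Criterion deforming the sender-preferred equilibrium into the $\hat W_v$-attaining one---varying the on-path communication parameter together with the off-path beliefs that pin down the incumbent's best reply---so that the induced voter payoff sweeps continuously across $[W_v^*(k),\hat W_v]$. Checking that every member of this family is an equilibrium (in particular that neither candidate acquires a profitable deviation as the continuation is deformed) is routine but tedious, and completes the proof that $\mathcal{W}(k)=[W_v^*(k),\hat W_v]$.
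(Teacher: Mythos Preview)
Your approach is essentially the same as the paper's: parameterize the generic equilibria of $\hat\Gamma$ by $\lambda$, argue that the voter's welfare is continuous and monotone in $\lambda$, and check that the endpoints $\lambda=\tau_v(q)$ and $\lambda=\tau_v(q)+\tfrac{1}{2}\sqrt{\xi/k}$ deliver $W_v^*(k)$ and $\hat W_v$ respectively. The paper's proof is in fact considerably terser than yours. For the lower bound it simply asserts that, ``by definition,'' the sender-preferred equilibrium gives the voter her lowest payoff, and for connectedness it states (without detailed verification) that as $\lambda$ increases the persuasion set shrinks \emph{and} both equilibrium policies move weakly toward $\varphi_v$, so welfare rises continuously. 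Your pointwise bound $u_v(b,\theta,q)\le\max\{0,\theta\}$ for the upper endpoint is cleaner than anything the paper writes down, and your explicit flag that the lower-bound direction is the delicate step is accurate: the paper does not prove it either, beyond the monotonicity assertion just quoted.

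One point where you are more cautious than necessary: the paper sidesteps the issue of off-path continuation selections by fixing the challenger's tie-breaking rule (sender-preferred) throughout and then varying a \emph{single} $\lambda$ uniformly across all communication subgames. Under this restriction, the best-response analysis of Appendix~\ref{app:best} applies verbatim with $\sqrt{\xi/k}$ replaced by the effective persuasion width, and the monotonicity of equilibrium policies in $\lambda$ follows from the same computations that yield Proposition~\ref{prop:eqmpol}. So the ``main obstacle'' you identify is handled in the paper by restricting attention to this one-parameter family rather than by proving a minimality statement over all PBE; whether that fully establishes the corollary as stated is a fair question, but your proposal and the paper's proof are on equal footing here.
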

\begin{proof}
	By definition, equation~\eqref{eq:welfare} describes the lowest payoff the voter can receive in a PBE robust to the Intuitive Criterion. As assumed in Appendix~\ref{app:policy}, suppose that the challenger selects the voter's least preferred policy when indifferent, and consider a generic equilibrium of $\hat\Gamma$ as in Proposition~\ref{prop:monopoly}. By the continuity of $l(r)$ and $h(r)$ with respect to $r$, and of $r^*(\lambda)$ with respect to $\lambda$, we obtain that the voter's equilibrium welfare is continuously (weakly) increasing in $\lambda\in\left[\tau_v(q),\tau_v(q)+\tfrac{1}{2}\sqrt{\xi/k}\right]$: for higher $\lambda$, the set of states in which persuasion occurs (weakly) shrinks and both the incumbent and the challenger's policies get (weakly) closer to the voter's bliss policy $\varphi_v$. When $\lambda=\tau_v(q)+\tfrac{1}{2}\sqrt{\xi/k}$ there is no persuasion at all, and the voter always votes for her preferred candidate \emph{as if} under complete information. Since the media outlet has no persuasive power, the median voter theorem holds and both candidates propose $\varphi_v$. In this case the voter's welfare is $\hat W_v=\phi/4$, and therefore in a PBE robust to the Intuitive Criterion the voter can obtain any payoff in the set $\left[W_v^*(k),\hat W_v\right]$.
\end{proof}

\subsection{Simultaneous Policy-making}

\begin{Alemma}\label{lemma:simultaneous}
	Suppose that candidates propose policies simultaneously. Then, if $k\leq \bar k/4$ there is an equilibrium where both policies converge to $\varphi_m$.
\end{Alemma}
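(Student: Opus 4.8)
The plan is to exhibit $(\varphi_m,\varphi_m)$ as the on-path policy profile of a perfect Bayesian equilibrium (sender-preferred and robust to the Intuitive Criterion, as throughout the paper) of the simultaneous game; this mirrors the sequential outcome of Proposition~\ref{prop:eqmpol} for $k\le\bar k/4$. Since proposals are still publicly observed before the communication stage, the continuation after any pair $q$ is the subgame $\hat\Gamma$ of Section~\ref{sec:commsub}, and I fix it to the sender-preferred equilibrium of Lemma~\ref{lemma:senderpm} for every $q$ (which also survives the Intuitive Criterion, being one of the generic equilibria of $\hat\Gamma$). It then remains to check that, holding the rival at $\varphi_m$, neither candidate strictly gains by deviating. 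Proposals outside $[\varphi_m,\varphi_v]$ are dominated, so it suffices to consider deviations $q_j\in[\varphi_m,\varphi_v]$; and at $(\varphi_m,\varphi_m)$ we have $\tau_v(q)=\tau_m(q)=0$, so by item (iii) of Lemma~\ref{lemma:senderpm} the outlet reports truthfully and each candidate wins with probability $1/2$. Hence it is enough to show no unilateral deviation raises the deviator's winning probability above $1/2$. The force behind this is that moving off $\varphi_m$ makes the deviator's policy strictly farther from the outlet's bliss than the rival's, flipping the outlet's endorsement region in the rival's favour; for $k\le\bar k/4$ the outlet is persuasive enough that this is decisive.

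Consider the challenger deviating to $q_c=\varphi_m+x$ with $x\in(0,\varphi_v-\varphi_m]$ while $q_i=\varphi_m$. Then $q_c>q_i$, so $\tau_v(q)>\tau_m(q)$ and item (ii) of Lemma~\ref{lemma:senderpm} governs the continuation; unwinding the $\min$ defining $r^*(q)$ and the $\max$ defining $l(r^*(q))$ --- which only requires distinguishing whether the deviation lands in the full- or the partial-persuasion region --- collapses these to $l(r^*(q))=\max\bigl\{\tau_m(q),\ \tau_v(q)-\tfrac12\sqrt{\xi/k}\bigr\}$, and the challenger wins iff $\theta<l(r^*(q))$, i.e. with probability $\bigl(l(r^*(q))+\phi\bigr)/(2\phi)$. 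Now $\tau_m(\varphi_m,\varphi_m+x)=-\gamma x^2\le 0$ and $\tau_v(\varphi_m,\varphi_m+x)=\gamma x\bigl(2(\varphi_v-\varphi_m)-x\bigr)\le\gamma(\varphi_v-\varphi_m)^2$; and, since $\bar k=\xi/\bigl[\gamma^2(\varphi_v-\varphi_m)^4\bigr]$, the hypothesis $k\le\bar k/4$ is precisely $\tfrac12\sqrt{\xi/k}\ge\gamma(\varphi_v-\varphi_m)^2$. Thus both arguments of the $\max$ are $\le 0$, so $l(r^*(q))\le 0$ and the challenger wins with probability at most $1/2$.

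The incumbent's side is the mirror image: deviating to $q_i=\varphi_m+y$ with $q_c=\varphi_m$ gives $q_i>q_c$, hence $\tau_m(q)>\tau_v(q)$ and item (i) of Lemma~\ref{lemma:senderpm} applies; the same reduction yields $h(r^*(q))=\min\bigl\{\tau_m(q),\ \tau_v(q)+\tfrac12\sqrt{\xi/k}\bigr\}$, with the incumbent winning iff $\theta>h(r^*(q))$, i.e. with probability $\bigl(\phi-h(r^*(q))\bigr)/(2\phi)$. Here $\tau_m(\varphi_m+y,\varphi_m)=\gamma y^2\ge 0$ and $\tau_v(\varphi_m+y,\varphi_m)+\tfrac12\sqrt{\xi/k}=-\gamma y\bigl(2(\varphi_v-\varphi_m)-y\bigr)+\tfrac12\sqrt{\xi/k}\ge-\gamma(\varphi_v-\varphi_m)^2+\tfrac12\sqrt{\xi/k}\ge 0$ when $k\le\bar k/4$, so $h(r^*(q))\ge 0$ and the incumbent wins with probability at most $1/2$. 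Hence $(\varphi_m,\varphi_m)$ together with the Lemma~\ref{lemma:senderpm} continuation (and its beliefs) is an equilibrium; for $k<\bar k/4$ the deviation inequalities are strict whenever $q_j\ne\varphi_m$, while at $k=\bar k/4$ the only remaining indifference is with $q_j=\varphi_v$, which the sender-preferred tie-breaking rule breaks toward $\varphi_m$. The standing assumption $\phi\ge 3\gamma(\varphi_v-\varphi_m)^2$ keeps the outlet influential for all the policy pairs considered (Corollary~\ref{cor:support}), so the continuation formulas used are valid.

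The only genuine work is the bookkeeping in the previous two paragraphs: $l(r^*(q))$ and $h(r^*(q))$ are each a $\min$/$\max$ over the two (full- versus partial-persuasion) branches of Lemma~\ref{lemma:senderpm}, and one must verify that these collapse to $\max\bigl\{\tau_m,\tau_v-\tfrac12\sqrt{\xi/k}\bigr\}$ and $\min\bigl\{\tau_m,\tau_v+\tfrac12\sqrt{\xi/k}\bigr\}$ respectively, and that the worst case over $q_j\in[\varphi_m,\varphi_v]$ --- attained at $q_j=\varphi_v$ --- is exactly what the threshold $\bar k/4$ controls. Everything else is direct substitution into Lemma~\ref{lemma:senderpm}.
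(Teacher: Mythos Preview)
Your proof is correct and follows essentially the same approach as the paper: fix the sender-preferred continuation of Lemma~\ref{lemma:senderpm}, note that at $(\varphi_m,\varphi_m)$ each candidate wins with probability $1/2$, and check that no unilateral deviation in $[\varphi_m,\varphi_v]$ raises that probability. The only cosmetic difference is that the paper argues the worst deviation is at an endpoint via the signs of $\partial\tau_m/\partial q_c$ and $\partial\tau_v/\partial q_c$, whereas you collapse $l(r^*(q))$ to $\max\{\tau_m,\tau_v-\tfrac12\sqrt{\xi/k}\}$ and bound both branches directly; your route is slightly more self-contained, and you also spell out the incumbent's side explicitly rather than appealing to symmetry.
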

\begin{proof}
	Consider a variation of the model in Section~\ref{sec:model} such that candidates propose policies simultaneously rather than sequentially. Suppose that there is an equilibrium where $q'=(\varphi_m,\varphi_m)$, and notice that $\tau_v(q')=\tau_m(q')=V_j(q')=0$, $j\in\{i,c\}$. A deviation by the challenger to some $q_c>\varphi_m$ cannot be profitable if $V_c(\varphi_m,q_c)=l(r^*(\varphi_m,q_c))\leq 0$, where $l(r^*(\varphi_m,q_c))=\max \{  r^*(\varphi_m,q_c) - \sqrt{\xi/k}, \tau_m(\varphi_m,q_c)\}$ and $r^*(\varphi_m,q_c)=\min\{ \tau_v(\varphi_m,q_c) + \tfrac{1}{2}\sqrt{\xi/k},2\tau_v(\varphi_m,q_c) - \tau_m(\varphi_m,q_c)\}$. Since for a $q_c\in(\varphi_m,\varphi_v)$ we have that $\partial \tau_m(\varphi_m,q_c) / \partial q_c < 0 < \partial \tau_v(\varphi_m,q_c)/\partial q_c$, and $\partial \tau_v(\varphi_m,q_c)/\partial q_c=0$ when $q_c=\varphi_v$, the challenger's payoff $V_c(\varphi_m,q_c)$ has a maximum either at $q_c=\varphi_m$ or at $q_c=\varphi_v$. The condition $V_c(\varphi_m,\varphi_v)\leq 0$ gives us $k\leq \xi / 4\gamma^2 (\varphi_v - \varphi_m)^4 = \bar k/4$. Therefore, when $k\leq \bar k/4$ there is an equilibrium where policies converge to $(\varphi_m,\varphi_m)$. 
\end{proof}

\begin{Alemma}\label{lemma:nopure}
	Suppose that candidates propose policies simultaneously. Then, there is no pure strategy equilibrium in the policy-making stage when $k\geq \bar k$.
\end{Alemma}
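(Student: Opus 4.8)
The plan is to show that when $k\ge\bar k$, \emph{each} candidate — whatever policy the opponent proposes — has an available proposal that wins the election with probability strictly greater than $1/2$. Since in any pure-strategy profile the incumbent's and the challenger's winning probabilities sum to one, whichever candidate obtains at most $1/2$ in a profile is not best-responding, so no pure-strategy profile can be an equilibrium. As in Section~\ref{sec:policymaking} I restrict attention to proposals in $[\varphi_m,\varphi_v]$, which is without loss since proposals outside this interval are dominated; every deviation I construct also lies in $[\varphi_m,\varphi_v]$, and I read winning probabilities off the communication-subgame equilibrium of Lemma~\ref{lemma:senderpm} as a function of $q=(q_i,q_c)$, recalling that under full persuasion the endorsed candidate always wins — i.e.\ the incumbent wins on $\{\theta>\tau_m(q)\}$ and the challenger on $\{\theta<\tau_m(q)\}$.

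\emph{Opponent at some $q_{-j}>\varphi_m$.} Candidate $j$ deviates slightly toward $\varphi_m$, to $q_j^{\dagger}=\max\{\varphi_m,\,q_{-j}-\eta(k)\}$ with $\eta(k)=\tfrac{\sqrt{\xi/k}}{4\gamma(\varphi_v-\varphi_m)}$. Then $q_j^{\dagger}<q_{-j}$ and $|q_j^{\dagger}-q_{-j}|\le\eta(k)$, so condition~\eqref{eq:fullpers} holds and the outlet exerts full persuasion. Because $j$ is now the more biased candidate, the outlet endorses $j$ over the conflict region, so $j$ wins on $\{\theta>\tau_m(q)\}$ if $j=i$ and on $\{\theta<\tau_m(q)\}$ if $j=c$, where $q$ is the resulting pair. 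A one-line sign check from $q_j^{\dagger},q_{-j}\ge\varphi_m$ with at least one strict inequality gives $\tau_m(q)<0$ when $j=i$ and $\tau_m(q)>0$ when $j=c$; either way $j$'s winning probability is $\tfrac{\phi\mp\tau_m(q)}{2\phi}>\tfrac12$. This step uses only $k>0$, and it already disposes of every profile with $q_{-j}>\varphi_m$ for both candidates — in particular every off-diagonal profile and every diagonal profile $q_i=q_c>\varphi_m$.

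\emph{Opponent at $q_{-j}=\varphi_m$.} Here the undercut above would leave $[\varphi_m,\varphi_v]$, so instead candidate $j$ deviates to the voter's bliss $\varphi_v$. Since $(\varphi_v-\varphi_m)^2>\eta(k)^2$ for $k\ge\bar k$, condition~\eqref{eq:fullpers} fails at $q=(\varphi_v,\varphi_m)$ (if $j=i$) or $q=(\varphi_m,\varphi_v)$ (if $j=c$), so the outlet is only partially persuasive and, reading the relevant branch of Lemma~\ref{lemma:senderpm}, the incumbent then wins on $\{\theta>\tau_v(q)+\tfrac12\sqrt{\xi/k}\}$ while the challenger wins on $\{\theta<\tau_v(q)-\tfrac12\sqrt{\xi/k}\}$. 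One computes $\tau_v(\varphi_v,\varphi_m)=-\gamma(\varphi_v-\varphi_m)^2=-\tau_v(\varphi_m,\varphi_v)$, and $k\ge\bar k$ gives $\tfrac12\sqrt{\xi/k}\le\tfrac12\sqrt{\xi/\bar k}=\tfrac12\gamma(\varphi_v-\varphi_m)^2$; hence the incumbent's cutoff is at most $-\tfrac12\gamma(\varphi_v-\varphi_m)^2<0$ and the challenger's is at least $\tfrac12\gamma(\varphi_v-\varphi_m)^2>0$, so in both cases candidate $j$ again wins with probability strictly above $1/2$. Combining the two cases, both candidates can guarantee more than $1/2$ against any opponent proposal in $[\varphi_m,\varphi_v]$, which is the contradiction sought.

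The bookkeeping in the communication subgame is where I expect to spend the most care: which candidate the outlet endorses over the conflict region flips with the sign of $q_c-q_i$, and the winning threshold toggles between the full-persuasion value $\tau_m(q)$ and the partial-persuasion value $\tau_v(q)\pm\tfrac12\sqrt{\xi/k}$ according to whether~\eqref{eq:fullpers} holds, so each sub-case must be matched with the correct case of Lemma~\ref{lemma:senderpm} and its associated persuasion interval. One must also verify that no constructed deviation exits $[\varphi_m,\varphi_v]$ — this forces the $\max$ with $\varphi_m$ in the undercut and is precisely why the opponent-at-$\varphi_m$ case needs a separate, populist deviation. The hypothesis $k\ge\bar k$ enters only through the clean inequality $\sqrt{\xi/k}\le\gamma(\varphi_v-\varphi_m)^2$ used in that last case, and the failure of this step for small $k$ is exactly what allows $(\varphi_m,\varphi_m)$ to persist as an equilibrium when $k\le\bar k/4$ (Lemma~\ref{lemma:simultaneous}).
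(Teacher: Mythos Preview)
Your proof is correct but proceeds quite differently from the paper's. The paper leans on the already-characterized best-response correspondence (Proposition~\ref{prop:bestresp}): since for $k\ge\bar k$ every best response is either $\varphi_v$ or the undercut $q_{-j}-\eta(k)$, any pure equilibrium must have one candidate at $\varphi_v$; it then checks algebraically that the opponent's best reply $\varphi_v-\eta(k)$ does \emph{not} induce a return to $\varphi_v$, which reduces to the inequality $\sqrt[4]{\xi/\gamma^2k}>2\eta(k)$ (equivalently $k>\bar k/16$). Your argument is instead a direct ``zero-sum'' impossibility: you show that against \emph{any} opponent proposal in $[\varphi_m,\varphi_v]$ each candidate can secure a winning probability strictly above $1/2$ --- by undercutting to induce full persuasion when the opponent is above $\varphi_m$, and by jumping to $\varphi_v$ when the opponent sits at $\varphi_m$ --- so no profile can have both players best responding. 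Your route is self-contained (it does not invoke Proposition~\ref{prop:bestresp}) and makes the role of the hypothesis $k\ge\bar k$ transparent: it is used only to ensure $\tfrac12\sqrt{\xi/k}\le\tfrac12\gamma(\varphi_v-\varphi_m)^2$, which is exactly what makes the populist deviation beat $1/2$ against an opponent parked at $\varphi_m$. The paper's route is shorter given the prior machinery, but your approach arguably gives more intuition for why the pure equilibrium breaks down and dovetails nicely with Lemma~\ref{lemma:simultaneous}.
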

\begin{proof}
	Suppose that candidates propose policies simultaneously and that $k\geq \bar k$. Since candidates are symmetric, their best response functions are symmetric as well. From Proposition~\ref{prop:bestresp} the best response of the challenger to a $q_i$ is either $\varphi_v$ or $q_i-\eta(k)$.  Since $q_j-\eta(k)$ is always below the $q_j$ line for every finite $k\geq \bar k$, a pure strategy equilibrium of the policy-making stage must have one of the two candidates proposing $\varphi_v$. Suppose that $q_i=\varphi_v$, and therefore $BR_c(\varphi_v)=\varphi_v-\eta(k)$. From Proposition~\ref{prop:bestresp} we obtain that $BR_i(BR_c(\varphi_v))=\varphi_v$ only if $\varphi_v-\eta(k)\leq \varphi_v + \eta(k)-\sqrt[4]{\xi/\gamma^2k}$. By rearranging and plugging $\eta(k)=\sqrt{\xi/k}/4\gamma(\varphi_v-\varphi_m)$, we obtain the condition $\sqrt[4]{\xi/\gamma^2k}\leq \sqrt{\xi/k}/2\gamma(\varphi_v-\varphi_m)$. Such a condition is not satisfied at $k=\bar k$, and therefore it is not satisfied for any finite $k>\bar k$.
\end{proof}

%%%%%%%%%%%%%%%%%%%%%%%%%%%%%%%%%%%%%%%%%%%%%%%%

%BIBLIOGRAPHY
%\clearpage
\addcontentsline{toc}{section}{References}
\bibliographystyle{apacite}
\bibliography{biblio_news}
%\nocite{*}
\end{document}